\titleformat*{\section}{\large\bfseries}
\titleformat*{\subsection}{\it}
\newtheorem{thm}{Theorem}
\newtheorem{lem}{Lemma}
\newtheorem{exm}{Example}
\newtheorem{algo}{Algorithm}
\newtheorem{as}{Assumption}
\def\ep{{\varepsilon}}
\title{{\bf Scalable Estimation of Crossed Random Effects Models via Multi-way Discretization}}
\date{}
\begin{document}

\maketitle
\doublespacing

\vspace{-1.5cm}
\begin{center}
{\large Shota Takeishi$^1$ and Shonosuke Sugasawa$^2$}

\medskip

\medskip
\noindent
$^1$Department of Statistics and Data Science, Washington University in St. Louis\\
$^2$Faculty of Economics, Keio University
\end{center}

\vspace{0.3cm}
\begin{center}
{\bf \large Abstract}
\end{center}

\vspace{-0cm}
Cross-classified data frequently arise in scientific fields such as education, healthcare, and social sciences. A common modeling strategy is to introduce crossed random effects within a regression framework. However, this approach often encounters serious computational bottlenecks, particularly for non-Gaussian outcomes. In this paper, we propose a scalable and flexible method that approximates the distribution of each random effect by a discrete distribution, effectively partitioning the random effects into a finite number of representative values. This approximation allows us to express the model as a multi-way discrete structure, which can be efficiently estimated using a simple and fast iterative algorithm. The proposed method accommodates a wide range of outcome models and remains applicable even in settings with more than two-way cross-classification. We theoretically establish the consistency and asymptotic normality of the estimator under general settings of classification levels. Through simulation studies and real data applications, we demonstrate the practical performance of the proposed method in logistic, Poisson, and ordered probit regression models involving cross-classified structures.

\bigskip\noindent
{\bf Key words}: clustering; discretization; generalized linear mixed model; non-normal data

\newpage
\section{Introduction}\label{sec-intro}

Cross-classified data are found in various scientific fields, including marketing and ecology \citep[e.g.][]{allenby1998marketing,browne2005variance}. For example, sales data might be recorded for both customers and products; observations in this dataset can be classified by the customer making the purchase and by the product sold. The standard modeling strategy for such data is to use multi-way crossed random effects models, in which the random effect unique to each observation is additively decomposed into components corresponding to different ways.

Despite their widespread usage and methodological simplicity, crossed random effects models suffer from high computational costs involving the evaluation of multiple integrals. This issue is particularly severe when the outcome variable is non-Gaussian and the sample size is large. 
To address these issues, several scalable methods have been proposed, including composite likelihood approaches \citep{bellio2023scalable,xu2023gaussian}, variational approximations \citep{jeon2017variational,torabi2012likelihood}, iterative algorithms such as backfitting and block-wise updates \citep{ghosh2022backfitting,ghosh2022scalable,ghandwani2023scalable}, and Bayesian methods \citep{papaspiliopoulos2020scalable,ghosh2021convergence}. 
While these methods improve computational efficiency, they often require model-specific derivations tailored to the form of the likelihood and are limited in their applicability to a broader class of outcome distributions.

To address the aforementioned challenges, this study proposes a novel and unified modeling strategy, and a scalable estimation algorithm for crossed random effects models applicable to various outcome types. Inspired by the grouped fixed effects models for panel data analysis \citep[e.g.,][]{hahn2010panel, bonhomme2015grouped}, our approach assumes a discrete distribution for each way of the random effects. Such a discrete structure enables us to estimate the parameters, including the unknown discrete random effects, with a simple iterative algorithm that avoids computationally intensive procedures such as numerical integration. We also propose a post-processing procedure to obtain smoother estimates of random effects using quasi-posterior probabilities.

Regarding the theoretical properties of the proposed method, we derive the asymptotic distribution of the estimators for the relevant parameters in a reasonably broad class of models. In particular, we establish the asymptotic normality of the estimators for parameters induced by concave objective functions, which is applicable to typical regression settings such as normal, logistic, and Poisson models. In the context of crossed random effects models, it is surprising that the asymptotic distribution of the (restricted) maximum likelihood estimators has only recently been derived. \cite{Jiang2025aos} establishes the asymptotic normality for generalized linear mixed models with normal random effects, building on his prior consistency result \citep{Jiang2013aos}. Meanwhile, \cite{LyuSissonWelsh2024aos} derive the asymptotic normality for linear mixed models with potentially non-normal random effects and errors. Compared to these works, the theoretical results of the present study accommodate more general multi-way random effect models without interaction in the sense that the parameter of interest is only assumed to be the maximizer of concave objective functions under regularity conditions. Note that, outside the context of extremum estimators, the estimating equation-based approach  \citep[e.g.,][]{Jiang1998jasa} can be applicable to crossed random effects models, and the asymptotic distribution for some estimators has been derived \citep[e.g.,][]{JiangZhang2001robust}.

As mentioned above, the idea of discretization is borrowed from the econometric literature. Seminal works by \cite{hahn2010panel} and \cite{bonhomme2015grouped} introduce the discretized structure for one-way fixed effect models in panel data analysis. This framework has been applied and extended in various directions \citep[e.g.,][]{GuVolgushev2019joe, ito2023grouped, liu2020identification, sugasawa2021grouped}. However, previous studies have been limited to discrete structures for only one-way random or fixed effects. The present study, in contrast, establishes asymptotic theory for models with discrete structures on multi-way random effects, which turns out to be a nontrivial theoretical extension. \cite{fernandez2016individual} develops asymptotic theory for fixed effect estimators for a general model similar to ours but without the discrete structure. However, their limiting distribution incurs an asymptotic bias that ours does not, and their sample size requirement is stronger than ours.

The remainder of this paper is organized as follows. Section~\ref{sec:model} introduces the discretized crossed random effects models and their estimation algorithms. Section~\ref{sec:theory} establishes the asymptotic theory, including asymptotic normality results for the proposed estimators. In Section~\ref{sec:sim}, we evaluate the finite sample performance of the proposed method through simulation studies. Section~\ref{sec:app} illustrates an application to real movie rating data. Finally, Section~\ref{sec:disc} concludes with discussions. All proofs and some algorithmic details are provided in the supplementary material. 

\section{Scalable Estimation of Crossed Random Effects Model}
\label{sec:model}

\subsection{Models and penalized likelihood}
For simplicity, we first consider two-way cross-classified data, but the extension to the case with a higher number of classification factors will be given in Section~\ref{sec:multi-way}. 
Let $y_{ij}$ be the response variable for $i=1,\ldots,n$ and $j=1,\ldots, m$, and $x_{ij}$ be a $p$-dimensional vector of covariates, where $n$ and $m$ are group sizes. 
We are interested in modeling the conditional distribution of $y_{ij}$ given $x_{ij}$, defined as 
\begin{equation}\label{model}
y_{ij}|x_{ij}\sim f(y_{ij};x_{ij}^{\top}\beta + a_i+b_j, \psi), 
\end{equation}
where $\beta$ is a vector of regression coefficients, $\psi$ is a dispersion or variance parameter, and $a_i$ and $b_j$ are unobserved additive effects. 
Note that the dimensions of $A=(a_1,\ldots,a_n)$ and $B=(b_1,\ldots,b_m)$ increase as the group sizes $n$ and $m$ grow, which is typically called incidental parameter problem \citep{neyman1948consistent}.
A typical example of the model is a Gaussian linear model with crossed-effects, $y_{ij}|x_{ij}\sim N(x_{ij}^{\top}\beta + a_i+b_j, \sigma^2)$, where $\psi=\sigma^2$. 
Without any restrictions on $a_i$ and $b_j$, the model is not necessarily identifiable.
Typically, it is assumed that $a_i$ and $b_j$ independently follow some distributions, denoted by $\pi_a(\cdot)$ and $\pi_b(\cdot)$, respectively, where $a_i$ and $b_j$ can be seen as random effects.
This setup leads to the following marginal likelihood: 
$$
\int \prod_{i=1}^n\prod_{j=1}^m f(y_{ij};x_{ij}^{\top}\beta + a_i+ b_j)\prod_{i=1}^n\pi_a(a_i)da_i\prod_{j=1}^m\pi_b(b_j)db_j,
$$
which requires high high-dimensional integral with respect to $a_i$ and $b_j$.
Hence, this integral is infeasible when the conditional distribution of $y_{ij}$ is not Gaussian.
A standard remedy is to use the Laplace approximation for the integral \citep[e.g.][]{bates2015fitting}, which may lead to poor approximation of the marginal likelihood.  
Even for the Gaussian case, the maximization of the marginal likelihood is computationally challenging \citep{ghosh2022backfitting}.
This issue becomes more serious when more than two-way structures are considered. 

To solve the computational issue, we propose a new modeling strategy of crossed effects, $a_i$ and $b_j$, by introducing discrete structures which does not require any integration. 
Suppose $a_1,\ldots,a_n$ have a discrete support
$\{ a(1), \dots, a(G) \}$ of size $G$.
We introduce an indicator value $g_i \in \{1, \dots, G\}$
that specifies which value in the support  the random effect $a_i$ takes: $a_i = a(g_i)$.
Similarly, $b_1, \dots, b_m$ have a discrete 
support $\{ b(1), \dots, b(H) \}$
of size $H$ with $h_j \in \{1, \dots, H\}$
as an indicator value so that $b_j =  b(h_j)$.
Hence, the proposed model is expressed as 
\begin{equation*}
y_{ij}|x_{ij}\sim f(y_{ij};x_{ij}^{\top}\beta + a_{i}+b_{j}),  \ \ \
a_i\in \{ a (1),\ldots, a(G)\}, \ \ \
b_j\in\{b (1),\ldots, b(H)\}.
\end{equation*}
or equivalently
\begin{equation}\label{model-proposal}
y_{ij}|x_{ij}\sim f(y_{ij};x_{ij}^{\top}\beta + a(g_i)+b (h_j)),  \ \ \
g_i\in \{1,\ldots, G\}, \ \ \
h_j\in\{1,\ldots, H\}.
\end{equation}
We refer to the model (\ref{model-proposal}) as a crossed discretized-effects (CDE) model due to its two-way discretized structure.
The unknown parameters in the CDE model are sample-specific two-way indicators $g_i$ and $h_j$, baseline regression coefficient $\beta$, and the supports of random effects $ a(1),\ldots,a(G)$ and $b (1),\ldots, b(H)$.
Under the model (\ref{model-proposal}), $g_i$ and $h_j$ are incidental parameters and $\{ a(1),\ldots,a(G)\}$ and $\{ b(1),\ldots, b(H)\}$ are finite-dimensional parameters and treated as fixed parameters rather than random effects.

Let $\Psi=\{\beta, a(1),\ldots,a(G), b(1),\ldots,b(H), g_1,\ldots,g_n, h_1,\ldots,h_m\}$ be a collection of unknown parameters. 
The estimation is performed by maximizing the following penalized log-likelihood function:
\begin{equation}\label{obj}
Q(\Psi)=\frac{1}{nm}\sum_{i=1}^n\sum_{j=1}^m \log f(y_{ij};x_{ij}^{\top}\beta + a(g_i)+ b(h_j)) - \frac{\lambda}{2}\left(\frac1n\sum_{i=1}^n a(g_i) -\frac1m\sum_{j=1}^m b(h_j) \right)^2,
\end{equation}
where $\lambda$ is a tuning parameter. 
We note that the first term of (\ref{obj}) corresponds to the log-likelihood function of the model (\ref{model-proposal}) and the second term, inspired by \cite{fernandez2016individual}, corresponds to a penalty for crossed effects, $a_i$ and $b_j$. 
Since adding the same constant to each element of $\{ a(1), \dots, a(G) \}$ and subtracting it from each element of $\{ b(1), \dots, b(H) \}$ would not change the value of the log-likelihood function, the penalty term in (\ref{obj}) normalizes the locations of $a_i$ and $b_j$. 
This is different from the role of the $L_2$ penalty in the equation (8) of \cite{ghosh2022backfitting} based on \cite{Robinson1991ss}, where the penalty provides an alternative representation of the estimator and is tied to BLUP-type estimation of the random effects.
Hence, $\lambda$ in \eqref{obj} can be set to an arbitrary large value (e.g. $\lambda=100$) to impose a constraint between two crossed discretized-effects, and the estimation result would not be sensitive to the value of $\lambda$.  
Since the penalty term controls only the difference of locations of $a(g_i)$ and $b(h_j)$, the intercept term can be recovered by the average of $a(g_i)$ and $b(h_j)$, namely, $n^{-1}\sum_{i=1}^n a(g_i) + m^{-1}\sum_{j=1}^m b(h_j)$ and an intercept term should not be included in the regression part, $x_{ij}^\top \beta$.  

It is worth noting that, although our formulation does not introduce an explicit variance component parameter for the random effects, their variability is still captured through the dispersion of the estimated grouped effects.
In particular, an analogue of the variance can be obtained as the empirical variance of the support points with respect to their associated weights. 
From this perspective, the proposed approach can be viewed as a nonparametric extension of conventional random effects models, where the distribution of the random effects is approximated by a discrete mixture without imposing parametric assumptions such as normality.

\subsection{Estimation algorithm }\label{sec:estimation}
We first consider maximizing the penalized likelihood function (\ref{obj}) to get estimates of $\Psi$, which can be done via the following iterative algorithm.

\begin{algo}\label{algo1}
Starting with $s = 1$ and some initial values $a^{(0)} (1), \dots, a^{(0)}(G)$, $b^{(0)} (1), \dots, b^{(0)}(H)$, $g_1^{(0)}, \dots, g_n^{(0)}$, and $h_1^{(0)}, \dots, h_m^{(0)}$, repeat the following procedure until convergence:
\begin{enumerate}
\item 
Update the regression coefficient $\beta$ and dispersion parameter $\psi$ as 
\begin{align*}
(\beta^{(s)},\psi^{(s)})&={\rm argmax}_{\beta,\psi}\sum_{i=1}^n\sum_{j=1}^m \log f(y_{ij};x_{ij}^{\top}\beta + a^{(s-1)}(g_i^{(s-1)}) + b^{(s-1)} (h_j^{(s-1)}),\psi).
\end{align*}

\item 
For $g=1,\ldots,G$, update a point $a(g)$ in the support of $a_i$ 
 as $a^{(s)}(g)={\rm argmax}_{a(g)} \widetilde{Q}^{(s)}_{\Psi\setminus a(g)}(a(g))$, where $\widetilde{Q}^{(s)}_{\Psi\setminus a(g)}(a(g))$ is defined as 
\begin{align*}
\sum_{i=1}^n\sum_{j=1}^m I(g_i^{(s-1)}=g)\log f\big(y_{ij};x_{ij}^{\top}\beta^{(s)} + a (g) + b^{(s-1)} (h_j^{(s-1)}),\psi^{(s)}\big)  - \frac{\widetilde{\lambda}_g^{(s)}}{2}\big(a(g) -\widetilde{a}_g^{(s-1,s)}\big)^2
\end{align*}
and 
\begin{equation*}
\begin{split}
\widetilde{a}_g^{(s-1,s)}
= &\frac{n}{n_g^{(s)}}\left[
\frac1m\sum_{j=1}^m b^{(s-1)}(h_j^{(s-1)}) \right. \\
&\left. - \frac{1}{n}\sum_{i=1}^n 
\left\{
I(g_i^{(s-1)}<g)a^{(s)}(g_i^{(s-1)})+ I(g_i^{(s-1)}>g)a^{(s-1)} (g_i^{(s-1)})
\right\}
\right]
\end{split}
\end{equation*}
with $n_g^{(s)}=\sum_{i=1}^n I(g_i^{(s-1)}=g)$ and  $\widetilde{\lambda}_g^{(s)}=(n_g^{(s)})^2\lambda m/n$.

\item 
For $i=1,\ldots,n$, update the indicator $g_i$ as $g_i^{(s)}={\rm argmax}_{g=1,\ldots,G} \widetilde{Q}^{(s)}_{\Psi\setminus g_i}(g)$, where 
\begin{align*}
\widetilde{Q}^{(s)}_{\Psi\setminus g_i}(g)
&=\frac{1}{nm}\sum_{j=1}^m \log f(y_{ij};x_{ij}^{\top}\beta^{(s)} + a^{(s)} (g) + b^{(s-1)} (h_j^{(s-1)}),\psi^{(s)})\\
& - \frac{\lambda}{2}\left(\frac{a(g)}{n}
+ \frac1n\sum_{i'<i} a^{(s)} (g_{i'}^{(s)})
+\frac1n\sum_{i'>i} a^{(s)} (g_{i'}^{(s-1)})
-\frac1m\sum_{j=1}^m b^{(s-1)} (h_j^{(s-1)})\right)^2.
\end{align*}

\item 
For $h=1,\ldots,H$, update a point $b(h)$
in the support of $b_j$ as $b^{(s)} (h)={\rm argmax}_{b (h)}\widetilde Q_{\Psi\setminus b(h)}^{(s)}(b (h))$, where $\widetilde Q_{\Psi\setminus b(h)}^{(s)}(b(h))$ is defined as 
\begin{align*}
&\sum_{i=1}^n\sum_{j=1}^m I(h_j^{(s-1)}=h)\log f\big(y_{ij};x_{ij}^{\top}\beta^{(s)} + a^{(s)} (g_i^{(s)}) + b(h),\psi^{(s)}\big) - \frac{\widetilde{\lambda}_h^{(s)}}{2}\big(b(h)-\widetilde{b}_h^{(s-1,s)}\big)^2
\end{align*}
and 
\begin{equation*}
\begin{split}
\widetilde{b}_h^{(s-1,s)}
=&\frac{m}{m_h^{(s)}}\left[
\frac1n\sum_{i=1}^n a^{(s)} (g_i^{(s)}) \right. \\ 
&\left. - \frac{1}{m}\sum_{j=1}^m 
\left\{
I(h_j^{(s-1)}<h)b^{(s)} (h_j^{(s-1)})+ I(h_j^{(s-1)}>h)b^{(s-1)} (h_j^{(s-1)})
\right\}
\right]
\end{split}
\end{equation*}
with $m_h^{(s)}=\sum_{j=1}^m I(h_j^{(s-1)}=h)$ and  $\widetilde{\lambda}_h^{(s)}=(m_h^{(s)})^2\lambda n/m$.

\item
For $j=1,\ldots,m$, update the indicator $h_j$ as $h_j^{(s)}={\rm argmax}_{h=1,\ldots,H} \widetilde{Q}^{(s)}_{\Psi\setminus h_j}(h)$, where 
\begin{align*}
\widetilde{Q}^{(s)}_{\Psi\setminus h_j}(h)
&=\frac{1}{nm}\sum_{j=1}^m \log f(y_{ij};x_{ij}^{\top}\beta^{(s)} + a^{(s)} (g_i^{(s)}) + b^{(s)} (h),\psi^{(s)})\\
& - \frac{\lambda}{2}\left(\frac{b(h)}{m}
+ \frac1m\sum_{j'<j} b^{(s)} (h_{j'}^{(s)})
+\frac1m\sum_{j'>j} b^{(s)} (h_{j'}^{(s-1)})
-\frac1n\sum_{i=1}^n a^{(s)} (g_i^{(s)})\right)^2.
\end{align*}

\item 
Stop if the convergence criterion is met; otherwise, set $s \leftarrow s+1$ and return to Step~1.
\end{enumerate}
\end{algo}

The above algorithm is a conditional maximization algorithm, and each step guarantees a monotone increase in the objective function.
Each of the updating processes described above can be easily carried out.
Specifically, updating the grouping parameters, $g_i$ and $h_j$, can be done by evaluating the objective function at a finite number of inputs, and the updating process of $a(g)$ (similarly for $b(h)$) is the same as fitting the regression model with offset term $x_{ij}^{\top}\beta^{(s)}+ b^{(s-1)} (h_j^{(s-1)})$ based only on the data 
with $g_i = g$.
The updating process of $(\beta,\psi)$ is the same as fitting a regression model with offset $a^{(s-1)} (g_i^{(s-1)})+ b^{(s-1)}(h_j^{(s-1)})$. 
In particular, if the conditional distribution is Gaussian, namely, $y_{ij}\sim N(x_{ij}^\top\beta + a_i+b_j, \sigma^2)$, the updating processes for $(\beta,\sigma^2)$ can be obtained in the following closed form: 
\begin{align*}
&\beta^{(s)}
=\left(\sum_{i=1}^n\sum_{j=1}^m x_{ij}x_{ij}^\top \right)^{-1}\sum_{i=1}^n\sum_{j=1}^m x_{ij}\left(y_{ij}-a^{(s-1)} (g_i^{(s-1)}) - b^{(s-1)} (h_j^{(s-1)})\right)\\
&(\sigma^2)^{(s)}
=\frac{1}{nm}\sum_{i=1}^n\sum_{j=1}^m \left(y_{ij}-x_{ij}^\top \beta^{(s)} - a^{(s-1)} (g_i^{(s-1)})- b^{(s-1)} (h_j^{(s-1)})\right)^2,
\end{align*}
and that for a point $a(g)$ in the support, is 
\begin{align*}
a^{(s)} (g)
&=\left\{\sum_{i=1}^n\sum_{j=1}^m (\sigma^{-2})^{(s)} I(g_i^{(s-1)}=g) + \widetilde{\lambda}_g^{(s)}\right\}^{-1}\\
&\times 
\left\{\sum_{i=1}^n\sum_{j=1}^m 
(\sigma^{-2})^{(s)} I(g_i^{(s-1)}=g)\left(y_{ij}-x_{ij}^\top \beta^{(s)}- b^{(s-1)} (h_j^{(s-1)})\right) + \widetilde{\lambda}_g^{(s)}\widetilde a_g^{(s-1,s)}
\right\}.
\end{align*}
Note that a quite similar updating process can be obtained for $b(h)$. 
For non-normal response, it would also be possible to update the parameter values of $\beta$ and the random effects by the one-step updating of the Newton-Raphson method.

The above algorithm would be computationally much less intensive than maximizing the marginal likelihood, as it does not involve complicated integrals or large matrices.
In particular, each iteration of the proposed algorithm involves only low-dimensional optimization problems, and thus the overall computational cost scales linearly with the sample size, i.e., $O(N)$.
This favorable computational property is analogous to that of the backfitting algorithm by \cite{ghosh2022backfitting}.

In the above algorithm, it is assumed that the unobserved effects have discrete distributions with finite supports.
In practice, more smoothed estimates may yield more accurate results than the original estimates produced by Algorithm~1.
To address this issue, we propose a post-hoc algorithm to provide more smoothed (fuzzy) estimates of $a_i$, $b_j$, and $\beta$. 
The main idea is to define a pseudo-posterior probability of $g_i=g$ proportional to $\prod_{j=1}^m f(y_{ij};x_{ij}^{\top}\widehat{\beta} + \widehat{a} (g) + \widehat{b} (\widehat{h}_j))$, which gives a smoothed estimate of $a_i$ as a weighted average of $\widehat{a}(g)$.
Similarly, we can define the pseudo-posterior probability of $h_j=h$ to obtain a smoothed estimate of $b_j$.
The post-hoc procedures are summarized as follows: 

\begin{algo}\label{algo2}
Given the estimates of $\Psi$, compute smoothed estimates of $a_i$ and $b_j$ as follows: 
\begin{enumerate}
\item 
For $i=1,\ldots,n$ and $g=1,\ldots,G$, compute pseudo-posterior probability
\begin{equation*}
\widetilde{\pi}_{ig}=\frac{\prod_{j=1}^m f(y_{ij};x_{ij}^{\top}\widehat{\beta} + \widehat{a}(g) + \widehat{b} (\widehat{h}_j))}
{\sum_{g'=1}^G \prod_{j=1}^m f(y_{ij};x_{ij}^{\top}\widehat{\beta} + \widehat{a} (g') + \widehat{b} (\widehat{h}_j))},
\end{equation*}
and compute $\widehat{a}_i=\sum_{g=1}^G \widetilde{\pi}_{ig}\widehat{a}(g)$.

\item
For $j=1,\ldots,m$ and $h=1,\ldots,H$, compute pseudo-posterior probability
\begin{equation*}
\widetilde{\pi}_{jh}=\frac{\prod_{i=1}^n f(y_{ij};x_{ij}^{\top}\widehat{\beta} + \widehat{a} (\widehat{g}_i) + \widehat{b} (h))}
{\sum_{h'=1}^H \prod_{i=1}^n f(y_{ij};x_{ij}^{\top}\widehat{\beta} + \widehat{a} (\widehat{g}_i) + \widehat{b} (h'))},
\end{equation*}
and compute $\widehat{b}_j=\sum_{h=1}^H \widetilde{\pi}_{jh}\widehat{b}(h)$.

\item
Re-estimate the regression coefficient $\beta$ as 
\begin{align*}
\widehat{\beta}&={\rm argmax}_{\beta}\sum_{i=1}^n\sum_{j=1}^m \log f(y_{ij};x_{ij}^{\top}\beta + \widehat{a}_i + \widehat{b}_j).
\end{align*}
\end{enumerate}
\end{algo}

Note that $\widetilde{\pi}_{ig}$ can be regarded as a pseudo-probability or posterior probability that the $i$th effect equals the $g$th point in the support.

\subsection{Interpretation of discretization and specification of the cardinality of supports}
Regarding the cardinality of the supports, $G$ and $H$, Section \ref{sec:theory} develops the asymptotic theory under the assumption that these quantities are fixed, in line with the grouped fixed-effect literature \citep[e.g.][]{hahn2010panel, bonhomme2015grouped}.
While the model still includes incidental parameters, $g_i$ and $h_j$  of increasing dimension, these indicators are consistently estimable, unlike the incidental parameters in fixed effects models. Consequently, the estimator of the relevant parameters is free from so-called ``incidental parameter bias,'' as proven in Section \ref{sec:theory}.

In practical random-effect modeling, however, imposing discreteness on the underlying random effect $a_i$ and $b_j$ might be restrictive.
Instead, $G$ and $H$ may be heuristically viewed as controlling the granularity of a discrete approximation to the true generative distribution of $a_i$ and $b_j$, which may in fact be continuous.
From this perspective, the group-level parameters represent support points of this discrete approximation, and increasing $G$ and $H$ corresponds to refining the approximation.
In particular, when $G = n$, the model reduces to a fixed effects specification for $a_i$, which suffers from the incidental parameter problem and leads to poor estimation accuracy. 
In contrast, setting $G < n$ allows for partial pooling across units, resulting in more stable estimates by borrowing information across similar individuals. However, if $G$ is too small, the random effect distribution may be approximated too coarsely, potentially leading to model misspecification.

Regarding the specification of $G$ and $H$, one could select $G$ and $H$ using a data-dependent selection criterion. 
For example, we can consider an information criterion for $G$ and $H$ of the form, $IC(G, H)\equiv -2L_N(G, H)+ P_N(G,H)$, where $L_N(G, H)=\sum_{i=1}^n\sum_{j=1}^m \log f(y_{ij}; x_{ij}^\top \widehat{\beta}+\widehat{a}_i+\widehat{b}_j)$ is the maximum log-likelihood and $P_N(G,H)$ is a penalty function.  
A possible example of the penalty function is $P_N(G,H)=2(G+H)$, leading to the AIC-type criterion. 
However, we do not recommend this approach since it is computationally intensive for large-scale applications, and the effect of $G$ and $H$ is limited as long as $G$ and $H$ are relatively large. 
As a practical and effective alternative, we suggest setting $G = \lfloor \sqrt{n} \rfloor \quad \text{and} \quad H = \lfloor \sqrt{m} \rfloor$, which would strike a balance between approximation accuracy and estimation stability.
In the supplementary material, we examine the sensitivity of the performance to the choice of $G$ and $H$ through numerical simulations.

%
\subsection{Inference on regression coefficients}\label{subsec:inference}
We consider the asymptotic variance-covariance matrix of the estimator for $\beta$. As implied by discussion in Section~\ref{sec:theory}, the estimation error of the random effects has limited effect on the estimation of $\beta$, particularly when the number of groups is not so small and the random effects are not skewed. This observation justifies a simple and practical approach in which the random effects are treated as known when computing the asymptotic variance. Specifically, under the model, $f(y_{ij}; x_{ij}^\top \beta +  \hat{a} (\hat{g}_i) + \hat{b} (\hat{h}_j))$ with $\hat{a} (\hat{g}_i) + \hat{b} (\hat{h}_j)$ treated as an offset, the variance-covariance matrix of the estimator can be obtained by the inverse of the negative Hessian matrix of the log-likelihood 
$$
-\sum_{i=1}^n\sum_{j=1}^m \frac{\partial^2}{\partial \beta \partial \beta^\top} \log f(y_{ij}; x_{ij}^\top \beta +  \widehat{a} (\widehat{g}_i)+ \widehat{b} (\widehat{h}_j))\bigg|_{\beta=\widehat{\beta}},
$$
where $\hat{\beta}$ is the maximizer of $\sum_{i=1}^n\sum_{j=1}^m  \log f(y_{ij}; x_{ij}^\top \beta +  \widehat{a} (\widehat{g}_i) + \widehat{b} (\widehat{h}_j))$, which we refer to ``plug-in'' method. 
On the other hand, as shown in Section~\ref{sec:theory}, the estimator of $\beta$ defined through the loss function (\ref{obj}) admits asymptotic normality and the sandwich-type asymptotic covariance matrix can be consistently estimated, which we refer to ``Sandwich'' method. 
As confirmed in our numerical experiments, these estimators perform well when the sample size is large. 
For more refined inference, we recommend using weighted bootstrap procedures, where the bootstrap sample is generated by minimizing $Q_w(\Psi)$ defined by replacing $\log f(y_{ij}; x_{ij}^\top \beta + a(g_i)+b(h_j))$ with $w_{ij}\log f(y_{ij}; x_{ij}^\top \beta + a(g_i)+b(h_j))$ in $Q(\Psi)$ and $w=(w_{11},\ldots, w_{nm})$ is generated from a Dirichlet distribution, ${\rm Dir}(1,\ldots, 1)$ (i.e. uniform distribution on a simplex). 
The weighted loss function $Q_w(\Psi)$ can be easily optimized by a slight modification of Algorithm~\ref{algo1}.
Since such bootstrap procedure requires repeatedly optimizing the loss function, the computation cost will be considerably larger than Plug-in and Sandwich methods while it can naturally take into account of the uncertainty of discretization and would be recommended under moderate sample size.

\subsection{General models}\label{sec:multi-way}

We here discuss general models of (\ref{model-proposal}), while focusing on models without interaction terms between the classification factors.
For $q=1,\ldots,N$, let $y_q$ be a response variable and $x_q$ be a vector of covariates. 
Moreover, let $\ell_{a,q}\in \{1,\ldots,n\}$ and $\ell_{b,q}\in \{1,\ldots,m\}$ be indicators to exhibit which classes the $q$th observation belongs to. 
Then, we can consider a general model as 
\begin{equation}\label{eq:general-model}
y_q|x_q\sim f(y_q; x_q^\top \beta + a(g_{\ell_{a,q}}) + b (h_{\ell_{b,q}}), \psi), \ \ \ \ q=1,\ldots, N,
\end{equation}
where $g_1,\ldots,g_n$ and $h_1,\ldots,h_m$ indicate which values in the supports $a_{\ell_{a,q}}$ and $b_{\ell_{b,q}}$ take. 
The general formulation in (\ref{eq:general-model}) allows the total sample size $N$ to be greater than, equal to, or smaller than $nm$. 
When $N = nm$ and $(\ell_{a,q}, \ell_{b,q}) \in \{1,\ldots,n\} \times \{1,\ldots,m\}$, the above model is equivalent to (\ref{model}). 
More generally, the formulation accommodates unbalanced designs: when $N < nm$, some combinations of classification factors are not observed, whereas when $N > nm$, multiple observations may arise within the same cell, resulting in imbalanced cell sizes. 
For the general model (\ref{eq:general-model}), we can apply a similar iterative algorithm as described in Algorithm~\ref{algo1}.

The proposed method can also be extended to handle more than two-way classified data. 
Let $i_k = 1, \ldots, n_k$ be an index of the $k$th classification, for $k = 1, \ldots, K$, where $K$ is the number of classification types and $n_k$ is the size of the $k$th classification. 
Each observational unit is associated with a tuple $(i_1, \ldots, i_K)$ indicating its membership in each classification group. 
We denote the corresponding random effects by $(a_{1,i_1}, \ldots, a_{K,i_K})$, where $a_{k,i_k}$ represents the effect associated with the $i_k$th group in the $k$th classification.
Let $y_{i_1\ldots i_K}$ and $x_{i_1\ldots i_K}$ be outcome and covariate vector in $K$-ways structures. 
Then, we can consider the following multi-way crossed effects model, given by 
$$
y_{i_1\ldots i_K}|x_{i_1\ldots i_K}
\sim f(y_{i_1\ldots i_K}; x_{i_1\ldots i_K}^\top \beta + a_{1}(g_{1,i_1})+\ldots+a_{K}(g_{K,i_K}), \psi),
$$
where, for $k = 1, \dots, K$, the effect $a_{k, i_k}$ has support $\{ a_k (1), \dots, a_k (G_k) \}$ and $g_{k,i_k}\in \{1,\ldots,G_k\}$ indicates which value in the support $a_{k, i_k}$ takes.
The above model can be estimated in the same way as Algorithm~\ref{algo1} by introducing penalty terms for the difference of random effect means, whose details are given in Section~\ref{sec:theory}, where we establish asymptotic properties of the estimator under the general model.

\section{Asymptotic Properties}\label{sec:theory}
This section establishes the asymptotic properties of the grouped crossed random effect estimator for general models induced by concave objective functions. We first formally define the estimator and then establish its consistency and asymptotic normality in subsequent subsections.
We illustrate our general results with some classes of generalized linear models.
\subsection{Setting and estimator}
We begin by introducing the general model setup. Henceforth, the superscript $``0"$ or the tilde accent $``\sim"$ signify the true parameter value that characterizes the distribution of the observed data.
Furthermore, we use $:=$ to indicate ``equal by definition.''

Let $\{ z_{i_1 \dots i_K} : 1 \leq i_k \leq n_k, \ 1 \leq k \leq K \}$ be an array of observed random vectors with $K$-ways crossed structure, which is defined on some underlying probability space $(\Omega, \mathcal F, \mathbb P)$. 
For each $k=1, \dots, K$, random effects $a_{k, i_k} \ (i_k = 1, \dots, n_k)$ are assumed to have a discrete support $\{ \tilde a_k(1), \dots, \tilde a_k(G_k) \}$ with known cardinality $G_k$. 
Let $g^0_{k, i_k} \in \{1, \dots, G_k \} \ (i_k = 1, \dots, n_k)$ denote a (random) indicator, defined on $(\Omega, \mathcal F, \mathbb P)$, that specifies which point in the support $a_{k, i_k}$ takes as its realized value. Then we may write $a_{k, i_k} = \tilde a_k(g^0_{k, i_k})$.
We collect the unobserved true indicator structure as $\gamma^0 := ( {\gamma_1^0}^{\intercal}, \dots, {\gamma_K^0}^{\intercal} )^{\intercal}$ with $\gamma^0_k := ( g^0_{k, i_k} )_{i_k = 1}^{n_k}$ for $k = 1, \dots, K$, where we suppress the dependence of $\gamma^0$ and $\gamma^0_k$ on $n_k$ for brevity. Similarly, we define the vector for the support of cross random effects as $\tilde \alpha := ( \tilde {\alpha_1}^{\intercal}, \dots, \tilde {\alpha_K}^{\intercal} )^{\intercal}$ with $\tilde \alpha_k := ( \tilde a_k (1), \dots, \tilde a_k (G_k) )^{\intercal}$ for $k = 1, \dots, K$.

In this setting, our purpose is to estimate some finite-dimensional structural parameter $\beta^0$, the true group membership $\gamma^0$ and some location transformation of the support of the cross random effects $\alpha^0 := ( \alpha_1^{0^\intercal}, \dots, \alpha_K^{0^\intercal})^{\intercal}$ with $\alpha_{k}^0 := ( a^0_{k} (1), \dots, a^0_{k} (G_k) )^{\intercal}$. Collecting $\Psi^0 := ({\theta^0}^{\intercal}, {\gamma^0}^{\intercal})$, where $\theta^0 := ({\beta^0}^{\intercal}, \alpha^{0^\intercal})^{\intercal}$, this $\Psi^0$ is assumed to be characterized as the maximizer of the conditional expectation of some objective function given the true indicator structure $\gamma^0$:
\begin{equation}
    \Psi^0 = \arg \max_{\Psi \in \Xi} \mathbb E[Q (\Psi) | \gamma^0], \label{population minimization}
\end{equation}
where $\Xi$ is the parameter space, whose detail is discussed later, and a variable $\Psi$ has the same coordinate structure as $\Psi^0$: $\Psi := (\theta^{\intercal}, \gamma^{\intercal}) := (\beta^{\intercal}, \alpha^{\intercal}, \gamma^{\intercal})$ where  $\alpha := ( \alpha_1^\intercal, \dots, \alpha_K^\intercal)^{\intercal}$ and $\gamma := (\gamma_1^{\intercal}, \dots, \gamma_K^{\intercal} )^{\intercal}$ with $\alpha_k := (a_{k} (1), \dots, a_{k} (G_k)) )^{\intercal}$ and $\gamma_k := ( g_{k, i_k} )_{i_k = 1}^{n_k}$ for $k = 1, \dots, K$. The concave objective function $Q (\Psi)$ is decomposed as 
\begin{equation}
   Q (\Psi) = Q^* (\Psi) - Pen(\alpha, \gamma), \notag
\end{equation}
where the leading term $Q^* (\Psi)$ and the penalty term $Pen(\alpha, \gamma)$ have the following expressions:
\begin{align} 
 Q^* (\Psi) := \ &\frac{1}{n_1 \dots n_K} \sum_{i_1 = 1}^{n_1} \dots \sum_{i_K = 1}^{n_K} M (z_{i_1\dots i_K} | \beta,
 a_{1} (g_{1, i_1}) + \dots + a_{K}(g_{K, i_K})), \label{Q} \\
 Pen(\alpha, \gamma) := \ & \frac{\lambda}{2} \sum_{k=1}^{K-1} \left( \frac{1}{n_k} \sum_{i_k = 1}^{n_k} a_{k} (g_{k, i_k}) - \frac{1}{n_{k+1}} \sum_{i_{k+1} = 1}^{n_{k+1}} a_{k+1} (g_{k+1, i_{k+1}})\right)^2, \notag
\end{align}
for a tuning parameter $\lambda > 0$. The exact form of $M(z | \beta, a)$ depends on specific applications.
The penalty term, $Pen(\alpha, \gamma)$, is for the sake of normalization as explained in Section \ref{sec:model}.
Due to this normalization, the solution to the maximization problem is the location transformation $\alpha^0$, not necessarily $\tilde \alpha$ itself.
A similar normalization approach is employed in \cite{fernandez2016individual}. Note that the location transformation $\alpha^0$ might depend on the sample size even though $\tilde  \alpha$ does not; however, we suppress this dependency for notational simplicity.

We illustrate the general setting with the following likelihood models, which will be used to develop the generalized linear models later in this section.
\begin{exm}[Likelihood] \label{exm likelihood} Let $z_{i_1 \dots i_K} = (y_{i_1 \dots i_K}, x_{i_1 \dots i_K}^{\intercal})^{\intercal}$, where $y_{i_1 \dots i_K}$ is a scalar outcome variable and $x_{i_1 \dots i_K}$ is a vector of explanatory variables. Assume the following conditional distribution of $y_{i_1 \dots i_K}$ given $x_{i_1 \dots i_K}$ and the crossed random effects $\tilde \alpha$:
\begin{equation}
    y_{i_1 \dots i_K} | x_{i_1 \dots i_K}, \tilde \alpha \sim f (y_{i_1 \dots i_K} | x_{i_1 \dots i_K}, \tilde a_{1} (g^0_{1, i_1}) + \dots + \tilde a_{K} (g^0_{K, i_K}); \beta^0, \psi^0), \label{likelihood}
\end{equation}
where $f$ is a conditional density of $y_{i_1 \dots i_K}$ with respect to some dominating measure, $\beta^0$ is unknown parameter of interest, and $\psi^0$ is unknown nuisance parameter if exists. In this example, the objective function $M(z| \beta, a)$, if possible, equals the log-likelihood function $\log f(y|x, a; \beta, \psi)$ up to the effect of $\psi$. Under certain regularity conditions, $(\beta, a) \mapsto \mathbb E [M(z_{i_1 \dots i_K} | \beta, a) | \gamma^0]$ is uniquely maximized at $(\tilde a_{1} (g^0_{1, i_1}) + \dots + \tilde a_{K} (g^0_{K, i_K}), \beta^0)$; hence, the unpenalized objective function $\mathbb E[Q^* (\Psi) | \gamma^0]$ is maximized at $(\beta^0, \tilde \alpha, \gamma^0)$. To take into account for the penalization, let $\eta_k := \frac{1}{n_k} \sum_{i_k = 1}^{n_k} \tilde a_{k} (g^0_{k, i_k}) - \frac{1}{_{k+1}} \sum_{i_{k + 1} = 1}^{n_{k + 1}} \tilde a_{k+1} (g^0_{k+1, i_{k+1}})$ for $k = 1, \dots, K-1$, and consider a matrix,
\begin{equation}
    W =
    \begin{pmatrix}
        1 & \cdots & \cdots & \cdots & \cdots & 1 \\
        1 & - 1 & 0 & \cdots & \cdots & 0 \\
        0 & 1 & -1 & 0 & \cdots & 0 \\
        \vdots & \vdots & \vdots & \vdots & \vdots & \vdots \\
        0 & \cdots & \cdots & 0 & 1 & -1
    \end{pmatrix}. \label{W}
\end{equation}
By Lemma 1 in online Appendix C, the system of equation $W \zeta = (0, -\eta_1, \dots, -\eta_{K-1})^{\intercal}$ is solvable. In particular, each element $\zeta_k$ of $\zeta = (\zeta_1, \dots, \zeta_K)^{\intercal}$ is a linear combination of $\frac{1}{n_k} \sum_{i_k = 1}^{n_k} \tilde a_{k} (g^0_{k, i_k}) \ (k = 1, \dots, K)$
with bounded coefficients. Now the normalized cross random effects can be obtained by setting $a^0_{k} (g_k) := \tilde a_{k} (g_k) + \zeta_k$ for $g_k = 1, \dots, G_k$ $(k = 1, \dots, K)$. By construction of $\zeta$, $a^0_{1}(g_1) + \dots + a^0_{K} (g_K) = \tilde a_{1}(g_1) + \dots + \tilde a_{K} (g_K)$ for any $g_1, \dots, g_K$ so that $(\beta^0, \tilde \alpha^0, \gamma^0)$ maximizes $\mathbb E[Q^* (\Psi)|\gamma^0]$. Furthermore, $Pen (\tilde \alpha^0, \gamma^0) = 0$; hence, $\mathbb E[Q(\Psi)|\gamma^0]$ is maximized at $(\beta^0, \tilde \alpha^0, \gamma^0)$.
\end{exm}

We estimate $\Psi^0$ via the sample analog of the maximization \eqref{population minimization}.
Namely, the grouped crossed random effect estimator $\hat \Psi := (\hat \theta^{\intercal}, \hat \gamma^{\intercal})^{\intercal}$ with $\hat \theta := (\hat \beta^{\intercal}, \hat \alpha^{\intercal})^{\intercal}$ is defined as follows.
\begin{equation}
	\hat \Psi := \arg \max_{\Psi \in \Xi } Q (\Psi). \label{sample minimization}
\end{equation} 
Note that this $\hat \Psi$ has the same coordinate structure as $\Psi^0$ such that $\hat \alpha = ( \hat \alpha_1^\intercal, \dots, \hat \alpha_K^\intercal )^{\intercal}$ and $\hat \gamma = ( \hat \gamma_1^{\intercal}, \dots, \hat \gamma_K^{\intercal} )^{\intercal}$, where $\hat \alpha_k := ( \hat a_{k}(1), \dots, \hat a_{k}(G_k) )^{\intercal}$ and $\hat \gamma_k := (\hat g_{k, i_k} )_{i_k = 1}^{n_k}$ for $k = 1, \dots, K$.

In the remainder of this section and the supplementary material, we use the following notation. For any real vector $a$ and real matrix $A$, $\| a \|$ denotes the Euclidean norm of $a$ while $\| A \|$ denotes the Frobenius norm of $A$. 
For a real-valued function $(\beta, a) \mapsto L(M(z_{i_1 \dots i_K} | \beta, a))$, with $L(\cdot)$ being some functional and $M(z_{i_1 \dots i_K} | \beta, a)$ as in \eqref{Q}, let $\nabla_{\beta} L(M(z_{i_1 \dots i_K} |  \beta^*, a^*))$ denote a vector of partial derivatives of $L(M(z_{i_1 \dots i_K}|\beta, a))$ with respect to $\beta$ evaluated at $(\beta^*, a^*)$, $\nabla_{a} L(M(z_{i_1 \dots i_K} | \beta^*, a^*))$ denote a partial derivative of the same function with respect to $a$ evaluated at $(\beta^*, a^*)$, and
\[\nabla_{\beta_{a}} L(M(z_{i_1 \dots i_K}|\beta^*, a^*)) = (\nabla_{\beta} L(M(z_{i_1 \dots i_K} | \beta^*,  a^*))^{\intercal}, \nabla_{a} L(M(z_{i_1 \dots i_K}|\beta^*, a^*)))^{\intercal}.\]
Similarly, we define scalars/vectors/matrices of the second partial derivatives, such as
$\nabla_{a a} L(M(z_{i_1 \dots i_K}| \beta^*, a^*))$, $\nabla_{a \beta_a^{\intercal}} L(M(z_{i_1 \dots i_K}|\beta^*, a^*))$ and $\nabla_{\beta_a \beta_a^{\intercal}} L(M(z_{i_1 \dots i_K}|\beta^*, a^*))$ in a usual manner.
We use the abbreviations $N := \prod_{k = 1}^K n_k$ and $\sum_{i_1 \dots i_K} := \sum_{i_1 = 1}^{n_1} \dots \sum_{i_K = 1}^{n_K}$. For any two real sequences $\{ b_{n_1 \dots n_K} \}$ and $\{ c_{n_1 \dots n_K} \}$, $b_{n_1 \dots n_K} \lesssim c_{n_1 \dots n_K}$ means that there exists a finite constant $\mathcal D$ independent of $n_1, \dots, n_K$ such that $b_{n_1 \dots n_K} \leq \mathcal D c_{n_1 \dots n_K}$ for all $n_1, \dots, n_K$.
For any two real numbers $b$ and $c$, $b \lor c$ denotes $\max(b, c)$ while $b \land c$ denotes $\min(b, c)$. All the limits are taken as $n_k \rightarrow \infty$ for all $k = 1, \dots, K$ unless stated otherwise.

\subsection{Consistency and the asymptotic normality}
We first show the consistency of $\hat \theta$ for $\theta^0$. Regarding $\hat \alpha$, note that the objective function is invariant to reordering of the support indices, $1, \dots, G_k \ (k = 1, \dots, K)$. Hence, following Appendix B of \cite{bonhomme2015grouped}, we introduce the following Hausdorff distance $d_{H, k}$ in $\mathbb R^{G_k} \ (k = 1, \dots, K)$ for the consistency of $\alpha$:
\begin{equation}
    d_{H, k} (b, c)^2 := \max_{g \in \{ 1, \dots, G_k \}} \left( \min_{g' \in \{ 1, \dots, G_k \}} (b_{g'} - c_g)^2 \right) \lor \max_{g'
     \in \{1, \dots, G_k \}} \left( \min_{g \in \{ 1, \dots, G_k \}} (b_{g'} - c_g)^2 \right), \label{harsdorff}
\end{equation}
for real vectors $b := (b_1, \dots, b_{G_k})^{\intercal}$ and $c := (c_1, \dots, c_{G_k})^{\intercal}$. The following theorem establishes the consistency of $\hat \theta$ for $\theta^0$.
\begin{thm} \label{consistency m-estimator}
    Assume that, conditionally on $\gamma^0$, $z_{i_1 \dots i_K}$ are independent across all $i_1, \dots, i_K$. Then, under Assumptions 1, 2 and 3 in online Appendix A, $\hat \beta \rightarrow_p \beta^0$, and $d_{H, k} (\hat \alpha_k, \alpha^0_k) \rightarrow_p 0$ for $k = 1, \dots, K$.
\end{thm}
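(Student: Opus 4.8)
The plan is to follow the classical two-step recipe for $M$-estimators — a uniform law of large numbers plus an identification-and-curvature argument — while handling the incidental grouping vectors $\gamma=(\gamma_1^\intercal,\dots,\gamma_K^\intercal)^\intercal$, whose dimension grows with the sample size. First, since $\hat\Psi$ maximizes $Q=Q^*-Pen$ over $\Xi$ and $Pen(\tilde\alpha^0,\gamma^0)=0$ by the construction of $\tilde\alpha^0$ in Example~\ref{exm likelihood}, we have the basic inequality $Q^*(\hat\Psi)\ge Q^*(\Psi^0)+Pen(\hat\alpha,\hat\gamma)\ge Q^*(\Psi^0)$. Writing $\mu_{i_1\dots i_K}(\alpha,\gamma):=a_{1,g_{1,i_1}}+\dots+a_{K,g_{K,i_K}}$ and $\mu^0_{i_1\dots i_K}:=\tilde a^0_{1,g^0_{1,i_1}}+\dots+\tilde a^0_{K,g^0_{K,i_K}}$, the goal is to combine this inequality with the uniform convergence $\sup_{\Psi\in\Xi}|Q^*(\Psi)-\mathbb E[Q^*(\Psi) | \gamma^0]|\to_p0$ and with the fact (Example~\ref{exm likelihood}) that $\mathbb E[Q^*(\Psi) | \gamma^0]=\frac1N\sum_{i_1\dots i_K}\mathbb E[M(z_{i_1\dots i_K} | \beta,\mu_{i_1\dots i_K}(\alpha,\gamma)) | \gamma^0]$ is uniquely maximized, term by term, at $(\beta^0,\mu^0_{i_1\dots i_K})$.

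The main obstacle is the uniform convergence, because the supremum over $\Psi$ ranges over the $\prod_kG_k^{n_k}$ possible values of $\gamma$. The key reduction is that $\mu_{i_1\dots i_K}(\alpha,\gamma)$ depends on the index $(i_1,\dots,i_K)$ only through the configuration $(g_{1,i_1},\dots,g_{K,i_K})\in\prod_k\{1,\dots,G_k\}$, a finite set. Splitting $\sum_{i_1\dots i_K}\{M(z_{i_1\dots i_K} | \beta,\mu_{i_1\dots i_K}(\alpha,\gamma))-\mathbb E[\,\cdot\, | \gamma^0]\}$ according to this configuration, each level set is a combinatorial box $I_1\times\dots\times I_K$ with $I_k\subseteq\{1,\dots,n_k\}$, so that, with $(\beta,a)$ ranging over a compact set by Assumption~\ref{as parameter space},
\[ \sup_{\Psi\in\Xi}\big|Q^*(\Psi)-\mathbb E[Q^*(\Psi) | \gamma^0]\big|\ \lesssim\ \Big(\prod_{k} G_k\Big)\,\sup_{\beta,a}\ \sup_{I_1,\dots,I_K}\ \frac1N\Big|\sum_{i_1\in I_1,\dots,i_K\in I_K}\big\{M(z_{i_1\dots i_K} | \beta,a)-\mathbb E[M(z_{i_1\dots i_K} | \beta,a) | \gamma^0]\big\}\Big|. \]
Discretizing $(\beta,a)$ via the dominated-derivative/Lipschitz and moment conditions of Assumption~\ref{as m}, it then suffices, at each grid point, to bound the supremum over boxes of an average of a conditionally independent (using the stated independence hypothesis), conditionally mean-zero array — a combinatorial discrepancy (cut-norm-type) quantity, which is $o(N)$, hence negligible after dividing by $N$, under the moment conditions of Assumption~\ref{as m} and the growth conditions on $n_1,\dots,n_K$ of Assumption~\ref{as group N} (e.g., via operator-norm or Azuma/McDiarmid-type concentration for random tensors). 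I expect this discrepancy bound over the growing family of partitions to be the crux of the argument; it has no counterpart in the one-way theory of \cite{bonhomme2015grouped} and it is what drives the sample-size requirements in Assumption~\ref{as group N}.

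With uniform convergence in hand, the remainder is comparatively routine. The basic inequality and the ULLN give $\mathbb E[Q^*(\Psi^0) | \gamma^0]+o_p(1)\le Q^*(\hat\Psi)=\mathbb E[Q^*(\hat\Psi) | \gamma^0]+o_p(1)\le\mathbb E[Q^*(\Psi^0) | \gamma^0]+o_p(1)$, hence $\mathbb E[Q^*(\Psi^0) | \gamma^0]-\mathbb E[Q^*(\hat\Psi) | \gamma^0]=o_p(1)$ and $Pen(\hat\alpha,\hat\gamma)=o_p(1)$. Since $\mathbb E[Q^*(\cdot) | \gamma^0]$ is an average of concave maps each uniquely maximized at $(\beta^0,\mu^0_{i_1\dots i_K})$, the local negative-definiteness/curvature condition of Assumption~\ref{as m} yields, by a Wald-type argument over the compact parameter space, $\|\hat\beta-\beta^0\|^2+\frac1N\sum_{i_1\dots i_K}(\hat\mu_{i_1\dots i_K}-\mu^0_{i_1\dots i_K})^2=o_p(1)$, where $\hat\mu_{i_1\dots i_K}:=\hat a_{1,\hat g_{1,i_1}}+\dots+\hat a_{K,\hat g_{K,i_K}}$; in particular $\hat\beta\to_p\beta^0$.

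Finally, to upgrade the average-$L^2$ consistency of the implied effects to $d_{H,k}(\hat\alpha_k,\tilde\alpha^0_k)\to_p0$, set $\delta_{k,i_k}:=\hat a_{k,\hat g_{k,i_k}}-\tilde a^0_{k,g^0_{k,i_k}}$, which depends on $(i_1,\dots,i_K)$ only through $i_k$, so that $\hat\mu_{i_1\dots i_K}-\mu^0_{i_1\dots i_K}=\sum_k\delta_{k,i_k}$. The algebraic identity $\frac1N\sum_{i_1\dots i_K}\big(\sum_{k=1}^K\delta_{k,i_k}\big)^2=\sum_{k=1}^K\widehat{\mathrm{Var}}_k(\delta_k)+\big(\sum_{k=1}^K\bar\delta_k\big)^2$, with $\bar\delta_k$ and $\widehat{\mathrm{Var}}_k(\delta_k)$ the empirical mean and variance of $(\delta_{k,i_k})_{i_k=1}^{n_k}$, then forces each $\widehat{\mathrm{Var}}_k(\delta_k)=o_p(1)$, so $\delta_{k,\cdot}$ is empirically near-constant; moreover $Pen(\hat\alpha,\hat\gamma)=o_p(1)$ together with $Pen(\tilde\alpha^0,\gamma^0)=0$ and the location normalization defining $\tilde\alpha^0$ forces the common value of $\delta_{k,\cdot}$ to be the same across $k$ and to sum (over $k$) to $o_p(1)$, hence to be $o_p(1)$ for every $k$. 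Since each true group occupies a nondegenerate fraction of indices and the true support points are separated (Assumption~\ref{as group N}), this matches every $\tilde a^0_{k,g'}$ with some $\hat a_{k,g}$ and, arguing symmetrically for the populated estimated groups, conversely; following Appendix~B of \cite{bonhomme2015grouped}, this is exactly $d_{H,k}(\hat\alpha_k,\tilde\alpha^0_k)\to_p0$ up to relabeling of groups, which completes the proof.
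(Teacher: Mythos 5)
Your proposal follows essentially the same architecture as the paper's proof: a uniform law of large numbers over the full parameter space obtained by reducing the supremum over groupings to a supremum over product index sets $I_1\times\dots\times I_K$ (the paper does exactly this with indicator sequences $\delta_{k,i_k}\in\{0,1\}^{n_k}$, handling the resulting maximum over $2^{\sum_k n_k}$ boxes via symmetrization, Orlicz-norm maximal inequalities, and the uniform entropy condition rather than your discretization-plus-concentration route); a basic inequality plus the curvature condition to get $\|\hat\beta-\beta^0\|^2+\frac1N\sum_{i_1\dots i_K}(\hat\mu_{i_1\dots i_K}-\mu^0_{i_1\dots i_K})^2=o_p(1)$ together with $Pen(\hat\alpha,\hat\gamma)=o_p(1)$; a per-way decomposition of the averaged squared sum (your variance-plus-squared-mean identity is a cleaner algebraic packaging of the paper's Step 4, which bounds the cross terms $\bar\delta_l\bar\delta_m$ directly using the normalization $S^0_1=\dots=S^0_K$); and a Bonhomme--Manresa-style Hausdorff matching. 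One side remark is off: Assumption \ref{as group N} contains no growth conditions on $n_1,\dots,n_K$ (the ULLN bound $\sqrt{\sum_k n_k/N}$ vanishes automatically once every $n_k\to\infty$), so the box-discrepancy bound is not what that assumption is for.

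The one step that would fail as literally written is the converse direction of the Hausdorff convergence, $\max_{g'}\min_{g}(\hat a_{k,g'}-\tilde a^0_{k,g})^2\to_p 0$. ``Arguing symmetrically for the populated estimated groups'' only controls estimated support points whose empirical group shares are bounded away from zero, and nothing in the assumptions prevents an estimated group from being empty or asymptotically negligible; such a point $\hat a_{k,g'}$ is invisible to the averaged $L^2$ criterion. The paper closes this via a counting argument: the map $\tau(g)=\arg\min_{g'}(\hat a_{k,g'}-\tilde a^0_{k,g})^2$ is one-to-one with probability tending to one (if $\tau(g)=\tau(h)$ then $\hat a_{k,\tau(g)}$ would be close to two true support points separated by Assumption \ref{as group N}$(b)$), hence a bijection on a set of $G_k$ elements, so \emph{every} estimated support point is matched to a true one. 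You cite the right source for this, but the sketch should invoke the injectivity/pigeonhole argument rather than symmetry restricted to populated groups.
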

This consistency result is the essential first step for deriving the asymptotic distribution of $\hat \theta$. The same type of results has been established for several fixed effect models with grouping structure \citep[e.g.][]{bonhomme2015grouped, liu2020identification}. It is noted that $\gamma^0$ in the statement of the theorem can be replaced by the corresponding random effects $\alpha^0$ because the randomness of the latter comes solely from its support indicator.
This assumption is the conditional independence assumption, and is standard in the literature on random effect models \citep[e.g.,][]{Jiang2025aos}. In contrast, a somewhat weaker condition is commonly found in the econometric literature on fixed effects models. For example, Assumption 4.1 of \cite{fernandez2016individual} assumes conditional independence along one way while allowing weak conditional dependency along the other for their two-way fixed effect models. 

To proceed to the asymptotic normality for $\hat \theta$, we consider relabelling of indices of $\hat \alpha_k$ to ensure $\| \hat \alpha_k - \alpha^0_{k} \| \rightarrow_p 0$, following \cite{bonhomme2015grouped}. For any $k = 1, \dots, K$, let $\tilde \tau:\{1, \dots, G_k \} \mapsto \{1, \dots, G_k \}$ be a permutation such that $\tilde \tau = \tau$ with $\tau$ defined in the proof of Theorem \ref{consistency m-estimator} if $\tau$ is one-to-one, and $\tilde \tau(g) = g$ otherwise.
It follows from the proof of Theorem \ref{consistency m-estimator} that $(\hat a_{k} (\tilde \tau(g)) - a^0_{k} (g))^2 \rightarrow_p 0$ for $g = 1, \dots, G_k \ (k = 1, \dots, K)$. By simply relabelling the indices, we may assume $\tilde \tau (g) = g$ so that $\|\hat \alpha_k - \alpha^0_k \| \rightarrow_p 0$. In the following, we adopt this convention of ordering. We now establish the asymptotic normality of $\hat \theta$ and the consistency of the grouping structure $\hat \gamma$.
\begin{thm}\label{prop asy dist}
    Assume the assumption of Theorem \ref{consistency m-estimator}. Furthermore, for any $k = 1, \dots, K$ and $l = 1, \dots, K$, assume that $(\log n_l \log n_k) / \left( \prod_{m=1, m \neq k}^K n_m \right) \rightarrow 0$. Then, under Assumptions 1, 2, 3 and 4 in online Appendix A, $\mathbb P (\hat \gamma = \gamma^0) \rightarrow 1$, and $\sqrt{N} (\hat \theta - \theta^0)$ converges in distribution to a mean zero normal distribution. In particular, $\sqrt{N} (\hat \beta - \beta^0) \rightarrow_d N(0, V)$ for some positive definite matrix $V$. 
\end{thm}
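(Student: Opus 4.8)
The plan is the two-stage argument familiar from grouped fixed-effects theory \citep{bonhomme2015grouped}: first strengthen the Hausdorff consistency of Theorem~\ref{consistency m-estimator} to \emph{exact} recovery of the grouping, $\mathbb P(\hat\gamma=\gamma^0)\to1$, and then, on that event, analyse $\hat\theta$ as an ordinary finite-dimensional concave M-estimator with $\gamma$ pinned at $\gamma^0$, for which a mean-value expansion of the first-order conditions produces the limiting normal law.

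\emph{Step 1: exact recovery of the grouping.} Fix a way $k$ and an index $i_k$. Since $\hat\Psi$ maximises $Q$, the event $\{\hat g_{k,i_k}=g\}$ for some $g\ne g^0_{k,i_k}$ forces
\[
\Delta_{k,i_k}(g):=\sum \Big\{ M\big(z_{i_1\dots i_K}\mid\hat\beta,\ \hat a_{k,g}+S\big)-M\big(z_{i_1\dots i_K}\mid\hat\beta,\ \hat a_{k,g^0_{k,i_k}}+S\big)\Big\}
\]
to be nonnegative up to the (negligible) penalty difference, where the sum runs over the $N/n_k$ tuples whose $k$-th coordinate equals $i_k$ and $S:=\sum_{m\ne k}\hat a_{m,\hat g_{m,i_m}}$ is the estimated offset. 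I would (i) use Theorem~\ref{consistency m-estimator}, together with a preliminary lemma bounding the \emph{fraction} of misclassified units in each way by $o_p(1)$ (a uniform-in-$\theta$ argument), to confine $\hat\beta$, $\hat\alpha$ and the offsets $S$ to a shrinking neighbourhood of their population counterparts on an event of probability tending to one; (ii) on that neighbourhood invoke the separation of the $k$-th way group effects (an identification condition in the assumptions) to show $\mathbb E[\Delta_{k,i_k}(g)\mid\gamma^0]\le -c\,(N/n_k)$ for some $c>0$, uniformly over $g\ne g^0_{k,i_k}$; and (iii) control the centred sum $\Delta_{k,i_k}(g)-\mathbb E[\Delta_{k,i_k}(g)\mid\gamma^0]$ by a Hoeffding/Bernstein inequality for the $N/n_k$ conditionally independent summands, after discretising the parameter neighbourhood at a polynomial-in-$n$ resolution. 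This yields $\mathbb P(\hat g_{k,i_k}\ne g^0_{k,i_k})\lesssim n_k^{c'}\exp(-c''N/n_k)$, and a union bound over $i_k=1,\dots,n_k$ and $k=1,\dots,K$ gives $\mathbb P(\hat\gamma=\gamma^0)\to1$ exactly under the stated rate condition $(\log n_l\log n_k)/\prod_{m\ne k}n_m\to0$, the second logarithmic factor being the cost of the discretisation.

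\emph{Step 2: asymptotic normality.} On the event $\{\hat\gamma=\gamma^0\}$ the estimator $\hat\theta=(\hat\beta^\intercal,\hat\alpha^\intercal)^\intercal$ maximises the concave map $\theta\mapsto Q\big((\theta^\intercal,\gamma^{0\intercal})^\intercal\big)$ over the fixed, finite-dimensional $\theta$-parameter space; since $\hat\theta\rightarrow_p\theta^0$ and $\theta^0$ is interior (Assumption~\ref{as parameter space}), the first-order condition $\nabla_\theta Q\big((\hat\theta^\intercal,\gamma^{0\intercal})^\intercal\big)=0$ holds with probability tending to one, and a mean-value expansion gives
\[
\sqrt N\,(\hat\theta-\theta^0)=-\Big[\nabla_{\theta\theta^\intercal}Q\big((\bar\theta^\intercal,\gamma^{0\intercal})^\intercal\big)\Big]^{-1}\sqrt N\,\nabla_\theta Q\big((\theta^{0\intercal},\gamma^{0\intercal})^\intercal\big).
\]
For the score, note that $\mathbb E[\nabla_\theta Q((\theta^{0\intercal},\gamma^{0\intercal})^\intercal)\mid\gamma^0]=0$ by the population characterisation \eqref{population minimization} restricted to the $\theta$-block (the penalty contributing nothing, since $Pen(\tilde\alpha^0,\gamma^0)=0$ makes each squared term and hence its gradient vanish), and that this score is a sum of $N$ conditionally independent summands, each a function of a single $z_{i_1\dots i_K}$, the crossed structure entering only through the indicators $I(g^0_{k,i_k}=g)$ that route a summand to a given coordinate; a multivariate Lindeberg--Feller CLT, its condition checked via the moment bounds in Assumption~\ref{as m}, then gives $\sqrt N\,\nabla_\theta Q((\theta^{0\intercal},\gamma^{0\intercal})^\intercal)\rightarrow_d N(0,\Sigma)$. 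For the Hessian, a law of large numbers for conditionally independent arrays gives $\nabla_{\theta\theta^\intercal}Q((\theta^{0\intercal},\gamma^{0\intercal})^\intercal)\rightarrow_p H=H_M+H_{Pen}$, where $H_{Pen}$ is the deterministic Hessian of the quadratic penalty evaluated at the limiting group proportions (which converge under Assumption~\ref{as group N}) and $H_M$ is negative definite by the local strong concavity at $\theta^0$, so that $H$ is negative definite and invertible; a standard stochastic-equicontinuity step replaces $\bar\theta$ by $\theta^0$ using $\bar\theta\rightarrow_p\theta^0$. Slutsky's lemma then yields $\sqrt N\,(\hat\theta-\theta^0)\rightarrow_d N(0,H^{-1}\Sigma H^{-1})$, and extracting the $p\times p$ block corresponding to $\beta$ gives $\sqrt N\,(\hat\beta-\beta^0)\rightarrow_d N(0,V)$ with $V$ positive definite.

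\textbf{Main obstacle.} The heart of the argument is Step~1: promoting the $o_p(1)$ Hausdorff and misclassification-fraction bounds to \emph{exact} recovery requires decoupling the per-unit reclassification comparison both from the estimated $(\hat\beta,\hat\alpha)$ and from the other ways' possibly still-imperfect memberships, and then pushing a sharp exponential inequality through the union bound. This is where the crossed structure genuinely complicates matters relative to the one-way case --- an error in way $m$ perturbs the offset $S$ seen by every unit in way $k$, so the ways cannot be treated in isolation --- and it is this coupling that dictates the particular sample-size requirement. Once $\mathbb P(\hat\gamma=\gamma^0)\to1$ is established, Step~2 is essentially textbook M-estimation, the only nonroutine bookkeeping being the penalty's contribution to $H$ and the non-identically-distributed (though independent) crossed covariance $\Sigma$.
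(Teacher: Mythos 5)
Your two-stage architecture (exact recovery of $\hat\gamma$, then finite-dimensional M-estimation with $\gamma$ pinned at $\gamma^0$) is exactly the paper's, and your Step~2 matches the paper's proof essentially line for line: first-order conditions on the event $\{\hat\gamma=\gamma^0\}$, $\mathcal S^{Pen}(\theta^0)=0$ from the normalization lemma, a Lindeberg-type CLT for the conditionally independent score array (the paper runs this as an explicit characteristic-function computation with a Lyapunov third-moment bound), convergence of the Hessian including the deterministic penalty block, and Slutsky. Where you genuinely diverge is Step~1. You propose a per-unit exponential (Hoeffding/Bernstein) bound, made uniform in the parameters by discretising a shrinking neighbourhood, followed by a union bound over $i_k$ and $k$. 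The paper instead shows that on the misclassification event (intersected with the event $E$ that $\hat\alpha$ lies within $\ep_a/2$ of $\tilde\alpha^0$) a single deterministic inequality $0\le \mathcal C_a(T_1+T_2+T_3+T_4)-\sigma\ep_a^2/8+\mathcal C_{Pen}/n_l$ must hold, where the $T_j$ already contain the maxima over $m=1,\dots,n_l$ and do not depend on the individual unit; it then drives $T_1,\dots,T_4\to_p0$ via Theorem~\ref{consistency m-estimator} (for the terms involving $\hat\beta-\beta^0$ and the cross-way offsets, handled by Cauchy--Schwarz against the consistency metric) and a maximal inequality of Chernozhukov et al.\ plus Orlicz-norm bounds under the sub-exponential tail condition of Assumption~\ref{as m added}$(b)$ (for the centred score maxima $T_1$ and $T_4$). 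This localisation-then-maximal-inequality route avoids discretising the parameter space altogether, and it is what actually produces the stated condition $(\log n_l\log n_k)/\prod_{m\neq k}n_m\to0$: one $\log$ comes from the maximum over the $n_l$ units and the other from the Orlicz bound on the maximum of the $N$ sub-exponential score increments (since $\log(1+N)\asymp\sum_k\log n_k$), not from a covering-number count as you conjecture. Your route is plausible but would need extra care to make the exponential bound uniform over the other ways' still-imperfect memberships, which cannot be union-bounded over; the paper sidesteps this by absorbing those errors into $T_3$ via the aggregate consistency metric.

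One concrete omission: the theorem asserts that $V$ is positive definite, and you assert this without argument. It does not follow merely from writing $V$ as the $\beta$-block of a sandwich $H^{-1}\Sigma H^{-1}$, because $\Sigma=\mathcal L\mathcal D\mathcal L^{\intercal}$ is built from a selection matrix $\mathcal L$ and is only positive semidefinite in general. The paper closes this by showing that for $b_0=(b^{\intercal},0^{\intercal})^{\intercal}$ with $b\neq0$ the vector $\mathcal L^{\intercal}(\mathcal J-\lambda\mathcal J_{Pen})^{-1}b_0$ is nonzero, using the explicit block structure of $\mathcal L_\beta$ and $\mathcal L_\alpha$ (the $\beta$-rows of $\mathcal L^{\intercal}$ are linearly independent and their image is not contained in the column space of $\mathcal L_\alpha^{\intercal}$), together with the negative definiteness of $\mathcal J-\lambda\mathcal J_{Pen}$, which itself requires a separate argument combining the negative definiteness of each $J_{g_1\dots g_K}$ with the positive semidefiniteness of the penalty Hessian. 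You should supply this step.
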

This theorem can be thought of as an extension of the asymptotic normality of the grouped-fixed estimator for one-way fixed effect \citep[e.g.,][]{bonhomme2015grouped, liu2020identification} to the multi-way cases. Furthermore, unlike usual two-way fixed effect estimators without the discrete support structure in \cite{fernandez2016individual}, our estimator does not suffer the asymptotic bias.
The presence of the logarithmic factors in the rate condition indicates that we allow the group size $n_k$ for the $k$-th way to diverge nearly exponentially fast relative to the other group sizes.
This rate condition is weaker than, for example, an assumption in Theorem 1 of \cite{Jiang2025aos} and Assumption 4.1$(i)$ of \cite{fernandez2016individual}, where different group sizes must be of the same order. In Appendix F of the supplementary material, we make a detailed comparison between the rate condition of ours and that of \cite{Jiang2025aos}.
Note that \cite{LyuSissonWelsh2024aos}
does not impose any condition on the relative speed of divergence, but only assume that the ratio of different group sizes have the limit in $[0, \infty]$ (see part 2 of Condition B in the supplementary B of \cite{LyuSissonWelsh2024aos}). 
Although the condition of \cite{LyuSissonWelsh2024aos} might appear weaker and this could be the case in practical situations, we can make an artificial example where the assumption of \cite{LyuSissonWelsh2024aos} does not hold while our assumption does. See Appendix F of the supplementary material for the detail.
Also note that the focus of \cite{LyuSissonWelsh2024aos} is on linear models while ours is on models with parameter defined by objective functions.

The expression of the asymptotic covariance matrix of $\sqrt{N} (\hat \theta - \theta^0)$ is given in the proof. The expression involves the tuning parameter $\lambda$; however, Lemma 4 in online Appendix C clarifies that the matrix, in fact, is independent of the value of $\lambda$ in reasonably general settings including Example \ref{exm likelihood}. 
Note that a simple expression for the asymptotic covariance matrix $V$ for $\hat \beta$ has not been derived. The construction of the sample analogue of $V$ as a submatrix of the asymptotic covariance matrix of $\sqrt{N} (\hat \theta - \theta^0)$ involves matrix calculations involving $p + \sum_{k = 1}^K G_k$ dimensional matrices, which can be computationally costly. Having said that, as implied in the proof, the lower-left and upper-right submatrices of the asymptotic covariance matrix of $\sqrt{N} (\hat \theta - \theta^0)$ should be of little magnitude when the numbers of the groups $G_k \ (k = 1, \dots, K)$ are moderately large and the random effects $\tilde a_{k} (1), \dots, \tilde a_{k} (G_k)$ are not skewed distributed. Based on this observation, a heuristic way to estimate $V$ is to treat the covariance matrix of $\sqrt{N}(\hat \theta - \theta^0)$ as being block diagonal, which leads to matrix calculation of only $p$-dimensional matrices as suggested in section \ref{subsec:inference}.

In the setting of Example \ref{exm likelihood}, of particular importance is the generalized linear model (GLM):
\begin{equation}
f(y|x, a; \beta, \psi) = \exp\left\{ \frac{y \mu (x^{\intercal} \beta + a) - \nu(x^{\intercal} \beta + a)}{\psi} + \chi(y, \psi) \right\}, \label{glm}
\end{equation}
for some known functions $\mu(\cdot), \nu (\cdot)$, and $\chi (\cdot)$. In this model, the objective function is 
\[M(y, x | \beta, a) = y \mu (x^{\intercal} \beta + a) - \nu (x^{\intercal} \beta + a).\]
In the following, we provide three examples of GLMs.
\begin{exm}[Normal linear regression]  \label{exm glm normal}
    Consider the following density with respect to Lebesgue measure: $f(y|x, a; \beta, \psi) = (2 \pi \psi)^{-1/2} \exp [-\{ y - (x^{\intercal} \beta + a)\}^2/(2\psi)]$.
    Within the framework of \eqref{glm}, this density can be described with $\mu (u) = u, \nu (u) = u^2/2$, and $\chi (y, \psi) = -(y^2/\psi + \log (2 \pi \psi))/2$. Accordingly, the objective function is $M(y, x|\beta, a) = y (x^{\intercal} \beta + a) - (x^{\intercal} \beta + a)^2/2$.
\end{exm}

\begin{exm}[Logistic regression] \label{exm glm logistic}
Consider the following density with respect to a counting measure on $\{0, 1 \}$: $f(y|x, a; \beta) = L(x^{\intercal}\beta + a)^y (1 - L(x^{\intercal} \beta + a))^{1-y}$ for $L(t) = \exp(t)/(1+\exp(t))$. Within the framework of \eqref{glm}, this density can be described with $\mu (u) = u$, $\nu (u) = \log (1 + \exp(u))$, $\chi(y, \psi) = 0$, and $\psi = 1$. Accordingly, the objective function is $M(y, x|\beta, a) = y(x^{\intercal} \beta + a) - \log(1 + \exp(x^{\intercal}\beta + a))$.
\end{exm}

\begin{exm}[Poisson regression] \label{exm glm poisson} Consider the following density with respect to a counting measure on $\mathbb N$: $f (y|x, a; \beta) = \exp\{ y(x^{\intercal} \beta + a)\} \exp\{-\exp(x^{\intercal} \beta + a)\}/y !$. Within the framework of \eqref{glm}, this density can be described with $\mu (u) = u$, $\nu (u) = \exp(u)$, $\chi(y, \psi) = - \log y!$, and $\psi = 1$. Accordingly, the objective function is $M(y, x|\beta, a) = y(x^{\intercal}\beta + a) -\exp(x^{\intercal} \beta + a)$. 
\end{exm}
In addition to some technical requirements, the boundedness of covariates $x_{i_1 \dots i_K}$ and some condition on the Fisher information matrices would be sufficient for Theorem \ref{prop asy dist} for the above three classes of GLMs. The detail is discussed in online Appendix D.

\section{Simulation Study}\label{sec:sim}

\subsection{Binary response with two-way effects}\label{sec:binary}

We evaluate the numerical performance of the proposed method by comparing it with existing random effects approaches in terms of estimation accuracy and computation time. Specifically, we consider five covariates, $x_q = (x_{q1}, \ldots, x_{q5})^\top$, where each element is independently generated as $x_{qk} \sim N(0, 1)$ for $k = 1, \ldots, 5$ and $i = 1, \ldots, N$. Cluster indicators $\ell_{a, q} \in \{1, \ldots, n\}$ and $\ell_{b, q} \in \{1, \ldots, m\}$ are also randomly assigned.

In the experiment, we set $m = n = \lfloor N^{1/2} \rfloor$ and consider five different sample sizes: $N \in \{5000, 10000, 20000, 40000, 80000\}$. 
For each setting, we generate data from the logistic model with crossed effects, given by the general form~(\ref{eq:general-model}):
$$
y_q \sim \mathrm{Ber}\left( \frac{e^{\psi_q}}{1 + e^{\psi_q}} \right), \quad \psi_q = \alpha + x_q^\top \beta + a_{\ell_{a,q}} + b_{\ell_{b,q}}, \quad q = 1, \ldots, N,
$$
where we set $\alpha = 1$ and $\beta = (-1, 0.5, 0, 0, 0)$. The random effects $a_{\ell_{a,q}}$ and $b_{\ell_{b,q}}$ follow one of the following two scenarios:
\begin{align*}
\text{(Scenario 1)} \quad & a_1, \ldots, a_n \sim N(0, 0.5^2), \quad b_1, \ldots, b_m \sim N(0, 1), \\
\text{(Scenario 2)} \quad & a_1 + 1, \ldots, a_n + 1 \sim \mathrm{Ga}(1, 1), \quad 1 - b_1, \ldots, 1 - b_m \sim \mathrm{Ga}(1, 1).
\end{align*}
Note that in both scenarios, the random effects have zero mean. Scenario~2 introduces skewed distributions, and thus the normality assumption typically made for random effects is misspecified.

For each dataset, we apply the proposed logistic model with crossed discretized effects (CDE), using $G = \lfloor \sqrt{n} \rfloor$, $H = \lfloor \sqrt{m} \rfloor$, and regularization parameter $\lambda = 10^2$, as described in Section~2.3. For comparison, we also apply standard logistic mixed models with crossed random effects using various existing estimation methods. Specifically, we used the R package \verb+lme4+ \citep{bates2015fitting} to compute estimators via penalized quasi-likelihood (MLE0) and Laplace approximation to the marginal likelihood (MLE1). 
We also employ the R package \verb+glmmTMB+ \citep{mcgillycuddy2025parsimoniously}, which implements the Laplace approximation, and the backfitting (BF) algorithm by \citet{ghosh2022backfitting} for fast penalized quasi-likelihood estimation.

For each $N$, we assess the actual computation time and the performance of the estimation and inference on the regression coefficient $\beta$.
Based on $R=100$ replications, we compute the mean squared error (MSE), $R^{-1}\sum_{r=1}^R (\hat{\beta}_k^{(r)}-\beta_k)^2$ and coverage probability (CP) of 95\% confidence intervals, $R^{-1}\sum_{r=1}^R I(\beta_k\in {\rm CI}_k^{(r)})$, where $\beta_k^{(r)}$ is the true value, $\hat{\beta}_k^{(r)}$ is a point estimate, and ${\rm CI}_k^{(r)}$ is a  95\% confidence interval at the $r$th replication.
The averaged values over all coefficients, $k=1,\ldots,p$, are reported in Table~\ref{tab:sim-bin}. 
Note that the BF method does not provide interval estimation results, and CP of BF is not reported.
Figure~\ref{fig:sim-bin} displays the average computation time across replications, along with error bars indicating standard errors. As shown in the figure, the BF and MLE0 methods are considerably faster than the others. However, as seen in Table~\ref{tab:sim-bin}, their estimation accuracy is notably worse. In particular, MLE0 fails to achieve the nominal coverage level, indicating poor performance in interval estimation.
In contrast, CDE, MLE1, and TMB exhibit comparable levels of estimation accuracy and interval coverage. Among these, the proposed CDE method demonstrates a clear advantage in terms of computation time.

\begin{figure}[htbp!]
\centering
\includegraphics[width=\linewidth]{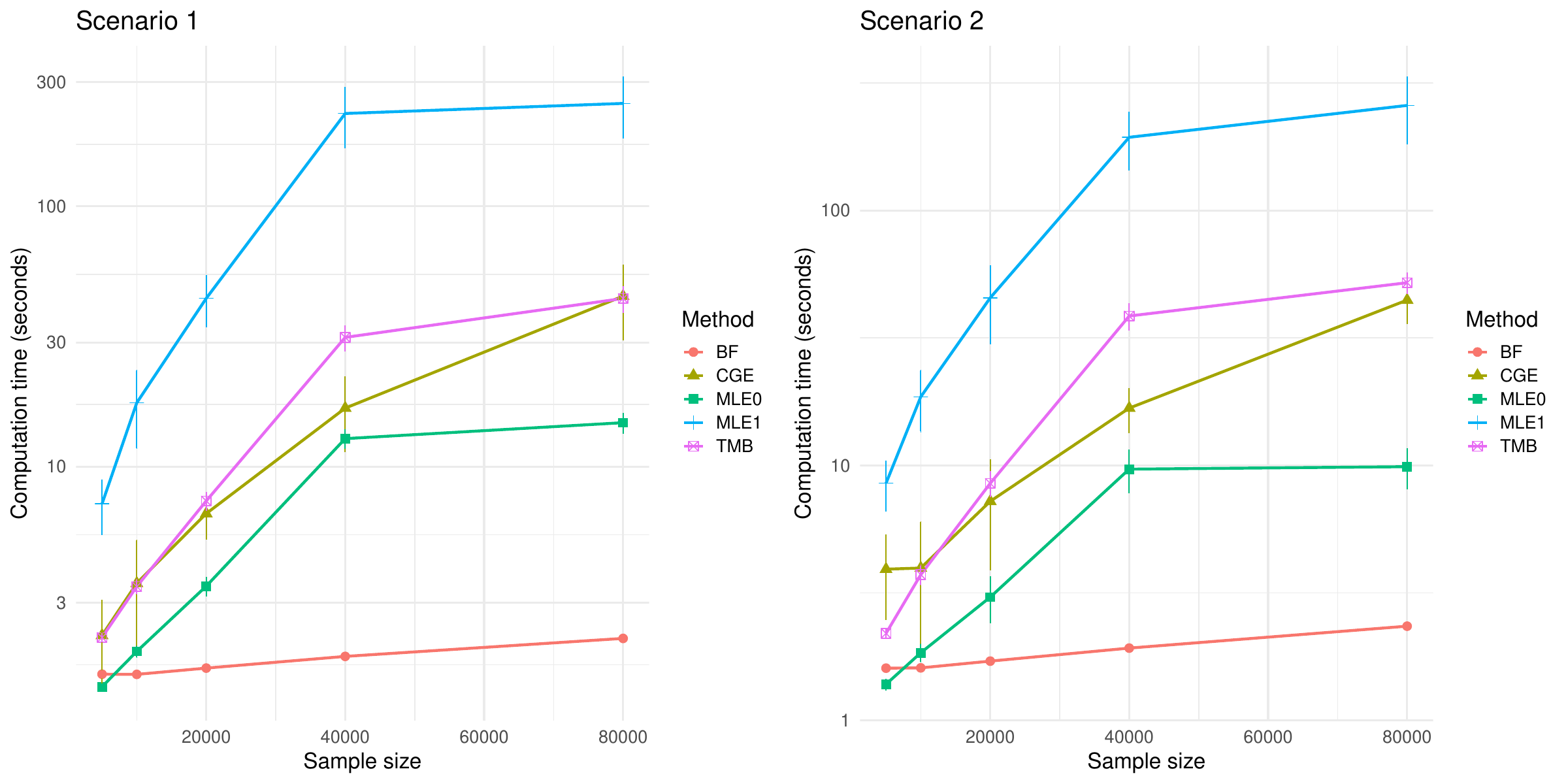}
\caption{The actual computation time (seconds) of the five methods under two-way logistic mixed models. } 
\label{fig:sim-bin}
\end{figure}

\begin{table}[htbp]
\centering
\begin{tabular}{cccccccccccccccc}
\hline
&&&  \multicolumn{5}{c}{Scenario 1} && \multicolumn{5}{c}{Scenario 2} \\
 \multicolumn{2}{r}{\footnotesize $N(\times 10^3)$} &  & {\footnotesize CDE} & {\footnotesize MLE1} & {\footnotesize MLE0} & {\footnotesize TMB} & {\footnotesize BF} &  & {\footnotesize CDE} & {\footnotesize MLE1} & {\footnotesize MLE0} & {\footnotesize TMB} & {\footnotesize BF} \\
\hline
 & 5 &  & 1.75 & 1.66 & 1.88 & 1.66 & 1.83 &  & 1.56 & 1.55 & 1.92 & 1.56 & 1.82 \\
 & 10 &  & 0.64 & 0.63 & 0.90 & 0.63 & 0.86 &  & 0.72 & 0.73 & 0.94 & 0.73 & 0.89 \\
MSE & 20 &  & 0.33 & 0.33 & 0.41 & 0.33 & 0.40 &  & 0.33 & 0.34 & 0.48 & 0.34 & 0.46 \\
 & 40 &  & 0.19 & 0.19 & 0.27 & 0.19 & 0.27 &  & 0.19 & 0.18 & 0.28 & 0.18 & 0.27 \\
 & 80 &  & 0.09 & 0.09 & 0.15 & 0.09 & 0.14 &  & 0.10 & 0.09 & 0.14 & 0.09 & 0.14 \\
 \hline
 & 5 &  & 93.0 & 94.4 & 92.8 & 94.4 & - &  & 93.0 & 94.8 & 91.8 & 95.0 & - \\
 & 10 &  & 95.2 & 96.0 & 92.8 & 96.2 & - &  & 94.8 & 95.2 & 93.0 & 95.2 & - \\
CP & 20 &  & 94.4 & 96.0 & 94.6 & 96.0 & - &  & 96.2 & 96.8 & 93.0 & 97.0 & - \\
 & 40 &  & 92.6 & 93.4 & 89.0 & 93.6 & - &  & 94.0 & 94.6 & 89.8 & 95.0 & - \\
 & 80 &  & 93.2 & 94.0 & 89.2 & 94.2 & - &  & 94.4 & 95.2 & 88.8 & 95.6 & - \\
\hline
\end{tabular}
\caption{Mean squared errors (MSE) of the point estimates and coverage probability (CP) of $95\%$ confidence intervals for $\beta$ under two-way logistic mixed models.}
\label{tab:sim-bin}
\end{table}

\subsection{Count response with three-way effects}
We further investigate the performance of the proposed method in the context of a three-way Poisson mixed model. Similar to the setup in the previous section, we generate covariates $x_q = (x_{q1}, \ldots, x_{q5})^\top$, where each element is independently sampled as $x_{qk} \sim N(0, 1)$ for $k = 1, \ldots, 5$ and $q = 1, \ldots, N$. In addition, we assign three independent clustering indicators: $\ell_{a,q} \in \{1, \ldots, n_a\}$, $\ell_{b,q} \in \{1, \ldots, n_b\}$, and $\ell_{c,q} \in \{1, \ldots, n_c\}$, drawn uniformly at random.
In this experiment, we set $n_a = n_b = n_c = 2\lfloor N^{1/2} \rfloor$ and considered five different sample sizes: $N \in \{2500, 5000, 10000, 20000, 40000\}$. For each setting, data are generated from the following three-way Poisson mixed model:
$$
y_q \sim \mathrm{Poisson}\left( \exp(\psi_q) \right), \quad \psi_q = \alpha + x_q^\top \beta + a_{\ell_{a,q}} + b_{\ell_{b,q}} + c_{\ell_{c,q}}, \quad q = 1, \ldots, N,
$$
where $\alpha = 1$ and $\beta = (-0.3, 0.3, 0, 0, 0)$. 
The three random effects are specified under the following two scenarios:
\begin{align*}
\text{(Scenario 1)} \quad & a_1, \ldots, a_{n_a} \sim N(0, (0.2)^2), \quad b_1, \ldots, b_{n_b} \sim N(0, (0.3)^2), \\
& c_1, \ldots, c_{n_c} \sim N(0, (0.3)^2), \\
\text{(Scenario 2)} \quad & a_1 + \frac{1}{5}, \ldots, a_{n_a} + \frac{1}{5} \sim \mathrm{Ga}(1, 5), \quad \frac{1}{5} - b_1, \ldots, \frac{1}{5} - b_{n_b} \sim \mathrm{Ga}(1, 1/5), \\
& c_1, \ldots, c_{n_c} \sim 0.5N(-0.3, (0.15)^2) + 0.5 N(0.3, (0.15)^2).
\end{align*}
As in the two-way setting, the expectations of the random effects are zero in both scenarios. Scenario 2 represents a case with misspecified, skewed, and bimodal random effects.

For each dataset, we apply the Poisson model with the proposed crossed discretized effects (CDE), using the group numbers, $\lfloor n_a^{1/2} \rfloor$, $\lfloor n_b^{1/2} \rfloor$ and $\lfloor n_c^{1/2} \rfloor$, and regularization parameter $\lambda = 10^2$, following the formulation in Section~2.4. For comparison, we also apply standard Poisson mixed models with three-way crossed random effects estimated using existing methods. These include penalized quasi-likelihood (MLE0) and Laplace approximation (MLE1) from the R package \verb+lme4+, and the Laplace-based estimator from \verb+glmmTMB+.
Each Monte Carlo experiment is repeated 100 times. As in Section~\ref{sec:binary}, we evaluate the mean squared error (MSE) of the point estimates of $\beta$ and the empirical coverage probability of the 95\% confidence intervals, averaged across all coefficients $\beta_k$, $k = 1, \ldots, p$. 
Figure~\ref{fig:sim-po} summarizes the average computation times and associated standard errors across the 100 replications. 
In this case, the computation time of the proposed CDE method is minimum among the four comparative methods in both scenarios. 
In Table~\ref{tab:sim-po}, we show MSE of the point estimates and CP of $95\%$ confidence intervals, which indicates that the estimation accuracy of the four methods is quite comparable, and CPs are around the nominal level. 
Hence, the proposed CDE method provides comparable point and interval estimation accuracy to existing methods, while offering advantages in terms of computational efficiency.

\begin{figure}[htbp!]
\centering
\includegraphics[width=\linewidth]{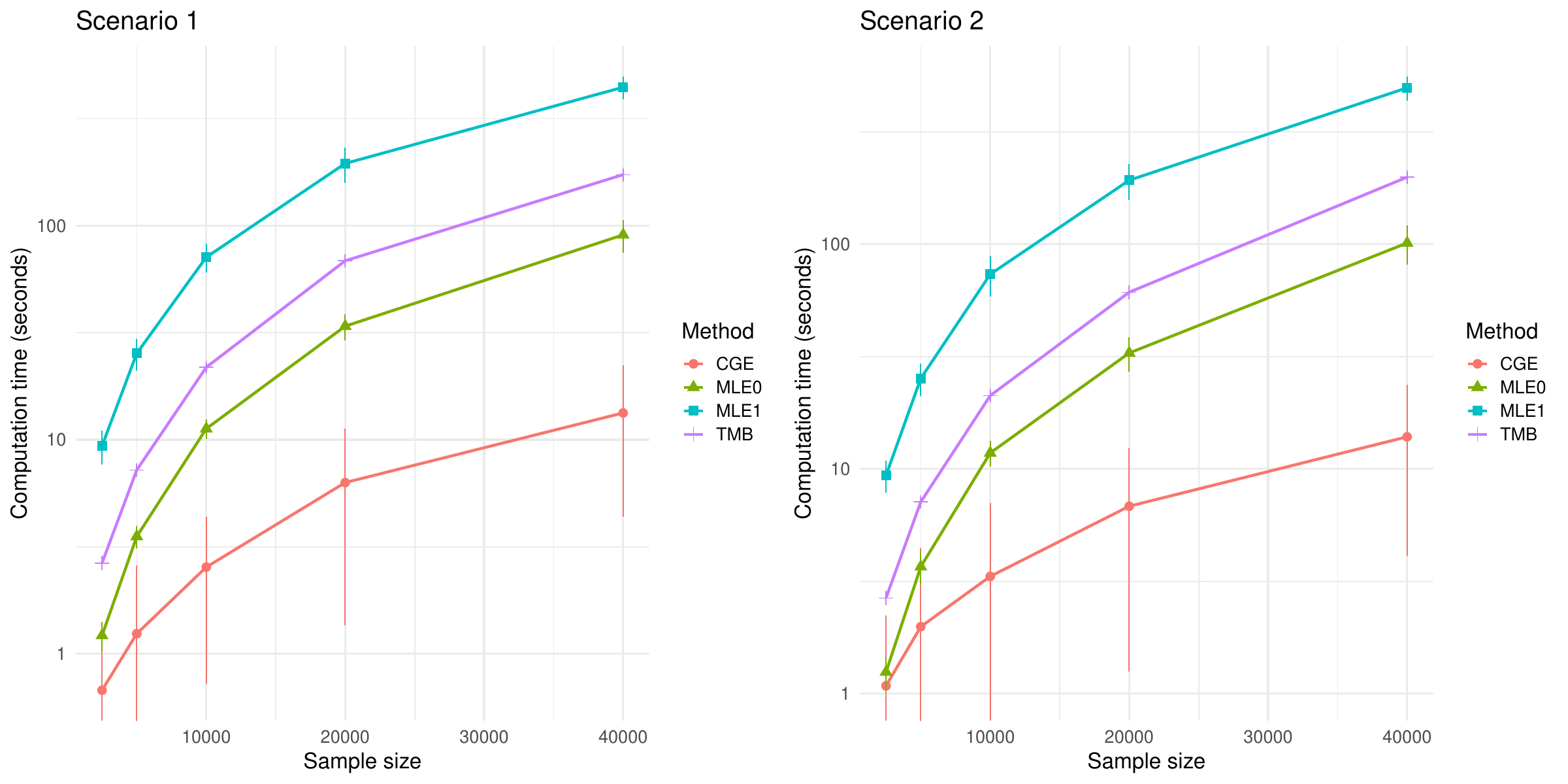}
\caption{The actual computation time (seconds) of the four methods under three-way Poisson mixed models. } 
\label{fig:sim-po}
\end{figure}

\begin{table}[htbp]
\centering
\begin{tabular}{cccccccccccccccc}
\hline
&&&  \multicolumn{4}{c}{Scenario 1} && \multicolumn{4}{c}{Scenario 2} \\
 \multicolumn{2}{r}{\footnotesize $N(\times 10^3)$} &  & {\footnotesize CDE} & {\footnotesize MLE1} & {\footnotesize MLE0} & {\footnotesize TMB} &  & {\footnotesize CDE} & {\footnotesize MLE1} & {\footnotesize MLE0} & {\footnotesize TMB} & \\
 \hline
 & 2.5 &  & 1.31 & 1.26 & 1.26 & 1.26 &  & 1.41 & 1.36 & 1.36 & 1.36 \\
 & 5 &  & 0.62 & 0.60 & 0.60 & 0.60 &  & 0.78 & 0.75 & 0.75 & 0.75 \\
MSE & 10 &  & 0.36 & 0.35 & 0.35 & 0.35 &  & 0.32 & 0.30 & 0.30 & 0.30 \\
 & 20 &  & 0.17 & 0.16 & 0.16 & 0.16 &  & 0.19 & 0.18 & 0.18 & 0.18 \\
 & 40 &  & 0.09 & 0.09 & 0.09 & 0.09 &  & 0.09 & 0.09 & 0.09 & 0.09 \\
  \hline
 & 2.5 &  & 92.0 & 96.0 & 96.2 & 96.2 &  & 91.6 & 94.4 & 94.6 & 94.6 \\
 & 5 &  & 93.4 & 95.4 & 95.6 & 95.4 &  & 91.8 & 93.4 & 93.4 & 93.4 \\
CP & 10 &  & 91.6 & 93.4 & 93.4 & 93.4 &  & 94.6 & 95.2 & 95.0 & 95.2 \\
 & 20 &  & 92.4 & 94.6 & 94.4 & 94.6 &  & 90.0 & 92.6 & 92.6 & 92.6 \\
 & 40 &  & 93.2 & 94.6 & 94.6 & 94.6 &  & 92.6 & 94.0 & 94.0 & 94.0 \\
\hline
\end{tabular}
\caption{Mean squared errors (MSE) of the point estimates and coverage probability (CP) of $95\%$ confidence intervals for $\beta$ under three-way Poisson mixed models.}
\label{tab:sim-po}
\end{table}

\subsection{Additional simulation studies}

We conducted additional simulation studies regarding configurations of the proposed method under both two-way binary and three-way Poisson settings.
Specifically, we examined the sensitivity to the choice of group numbers around the heuristic $(\sqrt{n}, \sqrt{m})$, and evaluated both estimation accuracy and computational cost across a range of settings. 
The results show that estimation accuracy is stable over a wide range of group numbers, while excessively large group numbers lead to increased computational cost with little gain in performance. 
We also explored adaptive selection strategies based on information criteria, but found that they offer limited practical benefit.
In addition, we investigated inference based on a smoothed bootstrap approach using weighted likelihood minimization. 
This approach can be implemented with only minor modifications of the proposed algorithm and does not require explicit derivation of asymptotic variance formulas. 
Simulation results indicate that the bootstrap method provides slightly improved coverage accuracy in finite samples, while yielding similar inference to asymptotic approximations in large samples.
Detailed results are reported in the Supplementary Material.

\section{Real Data Application}\label{sec:app}

We apply the proposed method together with existing methods to the MovieLens 100K dataset \citep{harper2015movielens}, which is publicly available from the GroupLens research group at the University of Minnesota (\url{https://grouplens.org/datasets/movielens/100k/}) and comprises 100,000 ratings provided by 943 users on 1,682 movies.
The outcome of the dataset is a rating on a scale of 1 to 5.
As covariates, we employ variables including 20 dummy variables for user occupation, 18 dummy variables for movie genres (with multiple genres possibly assigned to each movie), a dummy variable for gender, and age.
Age is modeled using a second-order P-spline with knots placed at the 10th, 20th, ..., and 90th percentiles of the age distribution.
In total, 50 covariates are used in the model.

To model the ratings by the available covariates and potential heterogeneous effects across users and movies, we consider the ordered probit model with crossed effects.
To our knowledge, no scalable algorithm for the ordered probit model with crossed random effects is available, and standard maximum likelihood estimation of the ordered probit model with crossed random effects would be infeasible under large sample sizes such as in this example.
Hence, we apply the proposed approach to fitting the ordered probit model with two-way effects (user and movie effects), and also apply the standard probit regression without such effects for comparison.
Let $y_q$ be the ordered outcome taking values in ${1,\ldots,K}$ ($K=5$ in this example), and $x_q$ be a $p$-dimensional vector of covariates ($p=50$ in this case).
Further, let $\ell_{a,q}\in {1,\ldots,n}$ and $\ell_{b,q}\in{1,\ldots,m}$ be indicators of users and movies, where $n=943$ and $m=1682$.
We then consider the following ordered probit model:
\begin{equation}\label{probit-model}
{\rm P}(y_q=k|x_q)=\Phi(c_k-\eta_q)-\Phi(c_{k-1}-\eta_q), \quad \eta_q=x_q^\top \beta + a_{\ell_{a,q}} + b_{\ell_{b,q}}
\end{equation}
for $k=1,\ldots,K$, where $\Phi(\cdot)$ is the standard normal distribution function, $a_{\ell_{ai}}$ and $b_{\ell_{bi}}$ are user and movie effects, respectively, and $\beta$ is a vector of regression coefficients.
Here, $-\infty=c_0<c_1<\cdots < c_{K-1}< c_K=\infty$ is a set of thresholds.
We first apply the standard ordered probit regression without user and movie effects (denoted by WoE), corresponding to a model without $a_{\ell_{ai}}$ and $b_{\ell_{bi}}$ in (\ref{probit-model}), by using the R package \verb+ordinal+ \citep{christense2023}.
We then apply the proposed crossed discretized effects (CDE) model, assuming $a_{\ell_{a,q}}=a(g_{\ell_{a,q}})$ and $b_{\ell_{b,q}}=b(h_{\ell_{b,q}})$, where $\ell_{a,q}\in \{1,\ldots,G\}$ and $\ell_{b,q}\in \{1,\ldots,H\}$ are unknown indicators.
We set the unknown thresholds $c_1,\ldots,c_{K-1}$ to the values obtained from the results of the standard ordered probit regression.
We provide details of the iterative algorithm for the CDE model in the Supplementary Material.
In this example, we set $(G, H)=(\lfloor \sqrt{n}\rfloor, \lfloor \sqrt{m}\rfloor)=(30, 41)$ (denoted by CDE1) and $(G, H)=(20, 20)$ (denoted by CDE2).
For comparison, we also fitted the crossed random effects (RE) model via the maximum likelihood method using the Laplace approximation, as implemented in the TMB framework \citep{kristensen2016tmb}, where the thresholds are fixed to the same values as CDE.

The computation times for CDE1 and CDE2 are 92 and 152 seconds, respectively, while that for WoE is 9 seconds and that for RE is 530 seconds.
This indicates that the proposed method is feasible within a reasonable amount of time even under such a large sample size, and the computation time is considerably smaller than RE. 
Since the results of CDE1 and CDE2 were quite similar, we only show the results of CDE1 in what follows. 
In Figure~\ref{fig:app-reg}, we present the estimated curves of the age effects (left panel) and scatter plots of the estimated dummy effects (right panel) with 95\% confidence intervals.
These results show the regression structures obtained from CDE and WoE.
In particular, the dummy effects detected by CDE tend to be more variable than those by WoE, and the nonlinear effects of age detected by WoE are more smoothed than those by CDE.
In Figure~\ref{fig:app-RE}, we provide histograms of the estimated user and movie effects by CDE, which clearly exhibit the existence of heterogeneity among users and movies that cannot be explained by the covariates.

To compare CDE, RE and WoE, we evaluate the performance of out-of-sample prediction. 
Specifically, we randomly hold out 10,000 observations as test data and train the models on the remaining 90,000 observations.
We then compute the mean absolute error (MAE) between the posterior predictive means and the observed outcomes in the test data.
In addition, we obtain predicted ordered categories by rounding the posterior predictive means to the nearest integers and compute the classification accuracy (AC0) as well as the 1-off classification accuracy (AC1), which allows for a tolerance of one level in the predicted outcome.
This procedure is repeated 20 times, and we report the average and standard error of each evaluation metric.
The results are summarized in Table~\ref{tab:validation}, which shows the superior performance of the CDE to WoE. 
It is also observed that the performance of CDE1 and CDE2 is quite comparable, showing the limited effects of the settings of $G$ and $H$.

\begin{figure}[htbp!]
\centering
\includegraphics[width=\linewidth]{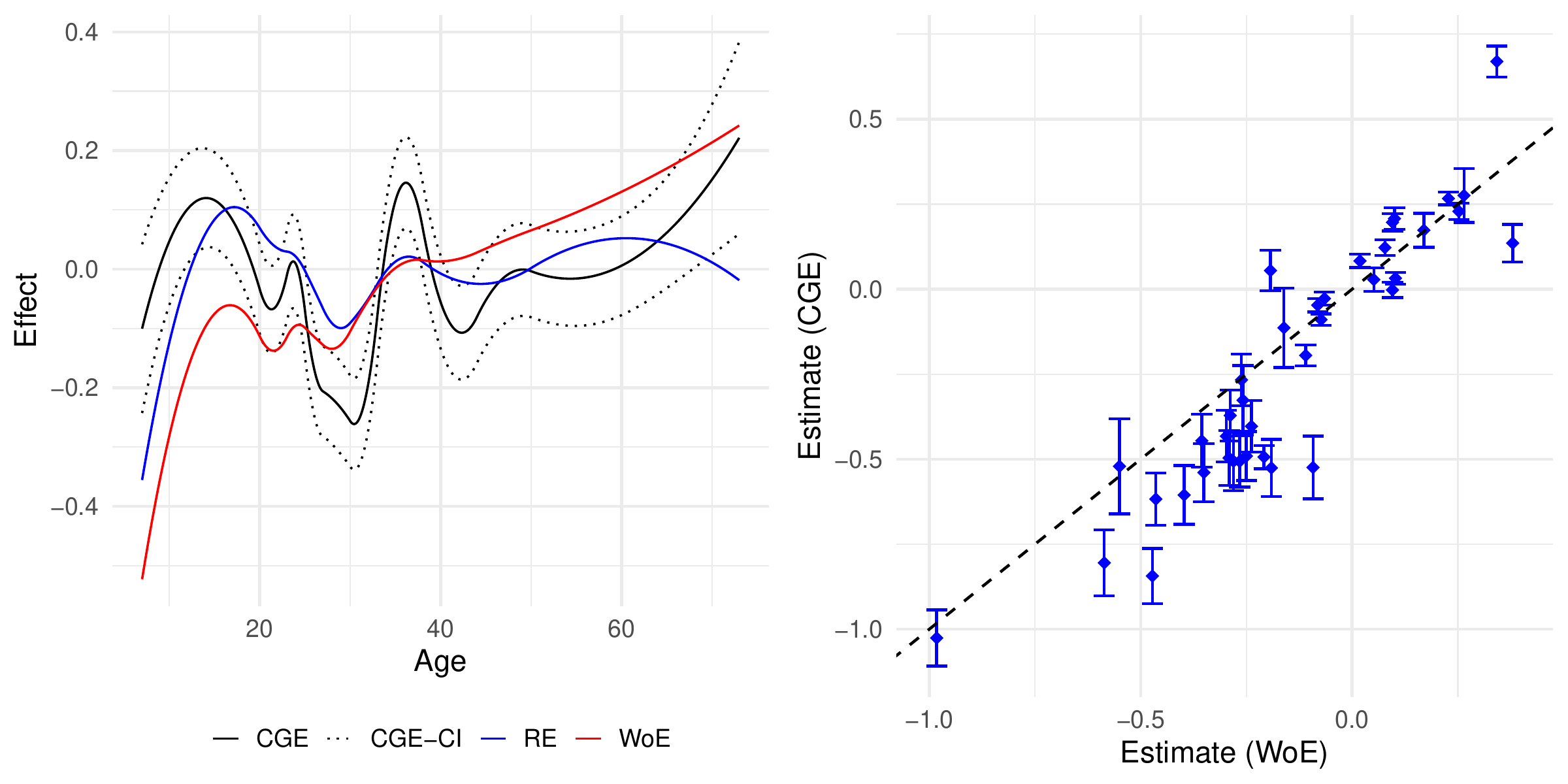}
\caption{Estimated regression function of age (left) and scatter plots of estimates of regression coefficients for dummy variables with $95\%$ confidence intervals for CDE (right). } 
\label{fig:app-reg}
\end{figure}

\begin{figure}[htbp!]
\centering
\includegraphics[width=\linewidth]{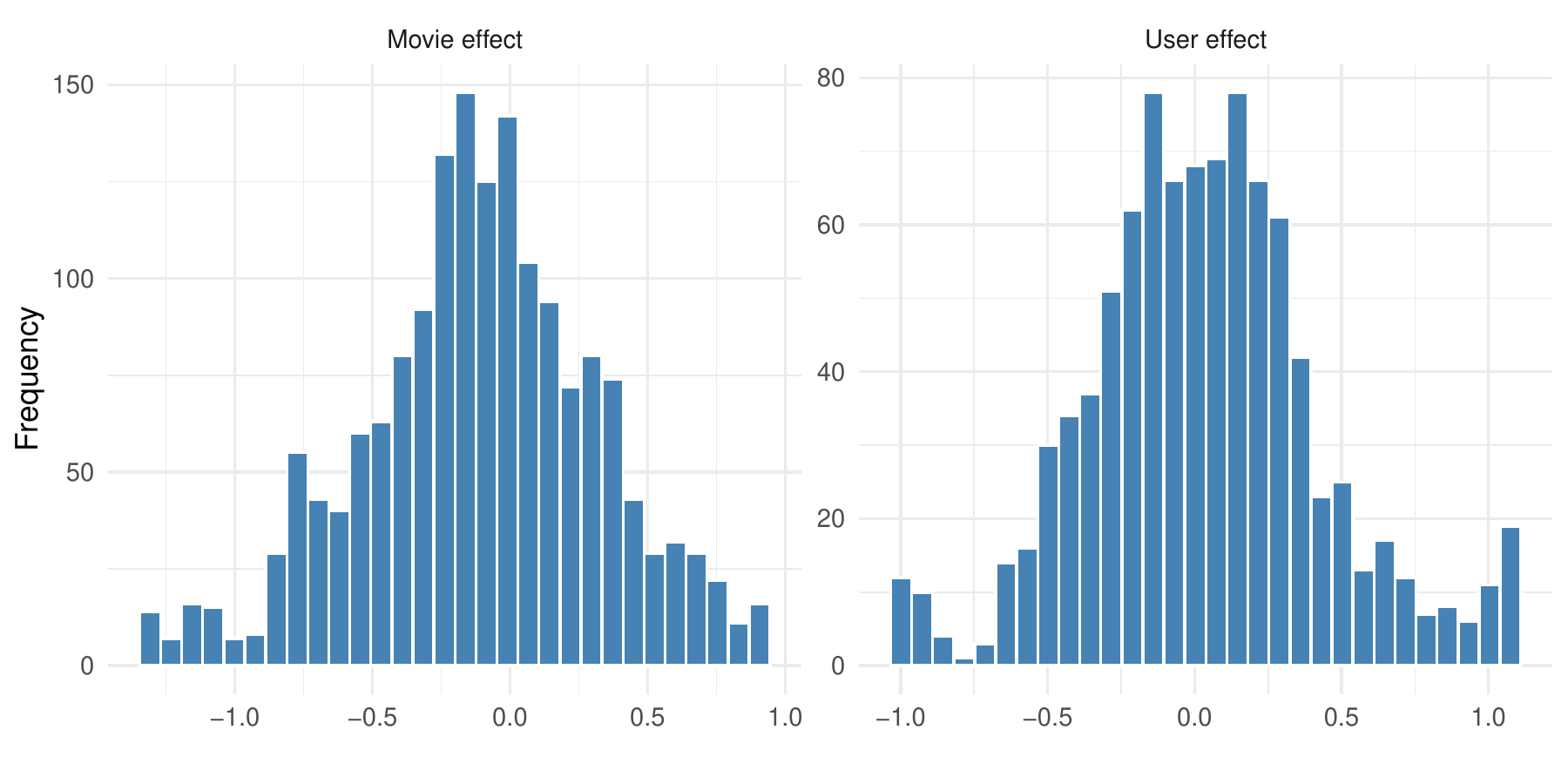}
\caption{Histograms of estimated user and movie effects. } 
\label{fig:app-RE}
\end{figure}


\begin{table}[htbp]
\centering
\begin{tabular}{cccccccccccc}
\hline
&& \multicolumn{2}{c}{MAE} && \multicolumn{2}{c}{AC0 (\%)} && \multicolumn{2}{c}{AC1 (\%)} \\
\hline
WoE &  & 0.846 & (0.013) &  & 34.8 & (0.76) &  & 83.6 & (0.53) \\
CDE1 &  & 0.690 & (0.009) &  & 43.2 & (0.72) &  & 89.6 & (0.33) \\
CDE2 &  & 0.690 & (0.010) &  & 43.1 & (0.80) &  & 89.6 & (0.32) \\
RE &  & 0.690 & (0.009) &  & 43.1 & (0.76) &  & 89.6 & (0.31) \\
\hline
\end{tabular}
\caption{Mean absolute errors (MAE), accuracy (AC0) and 1-off accuracy (AC1) of the ordered probit regression without individual and movie effects (WoE), with the proposed crossed discretized effects (CDE) and with the crossed random effect (RE), evaluated for the test data. 
The reported values are averaged over 20 replications of splitting the dataset. The standard deviations are given in parentheses.  }
\end{table}
\label{tab:validation}

Since the dataset contains ZIP code of users (795 categories in total), we can also incorporate the information into the model, leading to a three-way ordered probit model. 
We applied the proposed CDE1, CDE2 and RE models. 
The computation time was 115 (CDE1), 160 (CDE2) and 1378 (RE) seconds, indicating considerable computational reduction of the proposed CDE model.
It also shows the scalability of CDE since the increase of computation time for CDE is limited while that of RE is quite large. 
The detailed estimation results of CDE and RE are provided in the Supplementary Material.

\section{Discussion}\label{sec:disc}

While we focus on crossed random effects only for intercept terms, the proposed model (\ref{model-proposal}) can be extended to crossed random slope model, described as $y_{ij}|x_{ij}\sim f(y_{ij}; x_{ij}^\top a(g_i)+x_{ij}^\top b(h_j))$, where $a_{g}$ and $b_h$ are $p$-dimensional vectors of group-wise regression coefficients and $g_i$ and $h_j$ are unknown grouping parameters. 
While fitting the standard crossed random slope models may be computationally infeasible, the above grouped model can be fitted using an algorithm similar to Algorithm~\ref{algo1}, and would be scalable under large samples. 
The detailed investigation is left to a future study. 
As a potential extension of the proposed method, it would be possible to incorporate the auxiliary information (e.g. location information) into the clustering step, as considered in \cite{sugasawa2021grouped} and \cite{sugasawa2021spatially}, such that the observations at closer locations are more likely to be classified to the same group.

\section*{Acknowledgement}
This work is partially supported by JSPS KAKENHI Grant Numbers 24K21420 and 25H00546.

\section*{Data Availability Statement }
The data that support the findings of this study are openly available from the GroupLens research group at the University of Minnesota (\url{https://grouplens.org/datasets/movielens/100k/})

\vspace{1cm}
\bibliographystyle{chicago}
\bibliography{ref}

\clearpage
\setcounter{equation}{0}
\setcounter{section}{0}
\setcounter{table}{0}
\setcounter{page}{1}
\renewcommand{\thesection}{S\arabic{section}}
\renewcommand{\theequation}{S\arabic{equation}}
\renewcommand{\thetable}{S\arabic{table}}

\vspace{1cm}
\begin{center}
{\LARGE
{\bf Supplementary Material for  ``Scalable Estimation of Crossed Random Effects Models via Multi-way Discretization''}
}
\end{center}

This Supplementary Material provides assumptions for the theorems in the main text (Appendix \ref{app:assumption}), the proofs of these theorems (Appendix \ref{app:proofs}), auxiliary results required for the proof (Appendix \ref{app:lemma}), sufficient conditions for the theorems in the case of GLMs (Appendix \ref{app:glm}), the details of the maximization algorithms for the ordered probit with crossed effects (Appendix \ref{app:algorithm}), the discussion on the rate condition in Theorem 2 in the main text (Appendix \ref{app:rate}), additional simulation results (Appendix \ref{app:addition}), and the detail of additional application results (Appendix \ref{app:three-way}).


\begin{appendix}
    \section{Assumptions}\label{app:assumption}
    The parameter space $\Xi$ for $\Psi$ is defined to be the product, $\Xi = \mathcal B \times \mathcal A^{\prod}  \times \Gamma$ with $\mathcal A^{\prod} := \prod_{l=1}^{\sum_{k=1}^{K} G_k} \mathcal A$.
Here, $\mathcal B$ and $\mathcal A$ denote parameter spaces for $\beta$ and a component of $\alpha$, respectively, while $\Gamma$ collects all possible groupings for $\gamma$. We set the following assumptions with regard to the parameter space.
\begin{as} \label{as parameter space}
    $(a)$ $\mathcal B$ is a compact, convex subset of $\mathbb R^{p}$, where $p$ is a dimension of $\beta$ while $\mathcal A$ is a compact, convex subset of $\mathbb R$, $(b)$ $\beta^0$ lies in an interior of $\mathcal B$, and $(c)$ there exists $\ep_{a} > 0$ such that, for all $g_k = 1, \dots, G_k \ (k = 1, \dots, K)$, an $\ep_{a}$-neighborhood of $a^0_{k}(g_k)$ is included in $\mathcal A$.
\end{as}
Assumptions \ref{as parameter space}$(a)$ and $(b)$ are a standard assumption for establishing consistency in the literature. For example, see Assumption 1 of \cite{bonhomme2015grouped} and Section 2 of \cite{liu2020identification} for fixed effect estimations with grouped structure. Because the normalized parameter $a^0_{k}(g_k)$ may depend on the sample size, Assumption \ref{as parameter space}$(c)$ strengthens the usual interior condition such that $ a^0_{k}(g_k)$ is distant away from the boundary of $\mathcal A$ uniformly in the sample size.
In Example 1 in the main text, $a^0_{k} (g_k)$ is within a bounded distance of the true random effect $\tilde a_{k}(g_k)$ so that taking sufficiently large $\mathcal A$ would ensure Assumption \ref{as parameter space}$(c)$.

The following condition is about $M(z_{i_1 \dots i_K} | \beta, a)$ for Theorem 1 in the main text.
\begin{as}\label{as m}
    Let $\mathcal A^{+} := \{ a^*_1 + \dots + a^*_K : (a^*_1, \dots,  a^*_K) \in \mathcal A \times \dots \times \mathcal A \}$, then the following holds:
    $(a)$ the function $(\beta, a) \mapsto \mathbb E[M(z_{i_1 \dots i_K} | \beta, a)|\gamma^0]$ is twice continuously differentiable  with its unique maximum attained at $(\beta^0, a^0_{1}(g^0_{1, i_1}) + \dots + a^0_{K} (g^0_{K, i_K}))$ over $\mathcal B \times \mathcal A^{+}$,
    $(b)$ for the matrix $\nabla_{\beta_a \beta_a^{\intercal}} \mathbb E[M(z_{i_1 \dots i_K} | \beta, a) | \gamma^0]$, there exists $\sigma > 0$, independent of $i_1, \dots, i_K$, such that the maximum eigenvalue of the matrix is smaller than $- \sigma < 0$ for all $\beta \in \mathcal B$ and $a \in \mathcal A^{+}$, $(c)$ there exists an envelope function $M_{env}(z)$ such that, for any $\beta \in \mathcal B$ and $a \in \mathcal A^{+}$, $|M(z|\beta, a)| \leq M_{env}(z)$ holds for all $z$, and $\mathbb E [M_{env}(z_{i_1 \dots i_K})^2|\gamma^0]$ is uniformly bounded over all $i_1, \dots, i_K$, and $(d)$ for a class of functions $\mathcal M := \{ M(\cdot | \beta, a) : \beta \in \mathcal B, a \in \mathcal A^{+} \}$ and $L_2$ norm $\| \cdot \|_{R, 2}$ associated with a probability measure $R$, the covering number $\mathcal N(\ep \| M_{env} \|_{R, 2}, \mathcal M, L_2 (R))$ satisfies the uniform entropy condition, $\sup_{R} \int^{1}_0 \sqrt{1 + \log \mathcal N(\ep \| M_{env} \|_{R, 2}, \mathcal M, L_2(R))} d\ep < \infty$,
    where the supremum is take over all discrete probability measure.
\end{as}
Assumption \ref{as m}$(a)$ and $(b)$ are concerned with the concavity of $M(z_{i_1 \dots i_K}|\beta, a)$, which is easy to verify for our intended applications. Similar concavity assumptions can be found, for example, in Assumption 4.1 of \cite{fernandez2016individual}.
Assumptions \ref{as m}$(c)$ and $(d)$ are for the convergence of the sample objective function to its population counterpart uniformly over the parameter space, which is a key step for proving the consistency. For the definition of the covering number, refer to Definition 2.1.5 of \cite{van1996weak}.
In GLM examples in the main text, it is not hard to verify Assumption \ref{as m}$(d)$; see Examples \ref{exm glm normal as}, \ref{exm glm logistic as}, and \ref{exm glm poisson as} in Appendix \ref{app:glm}.

We also need the following assumptions on the distribution of the grouping structure $\gamma^0$ for Theorem 1 in the main text.
\begin{as}\label{as group N}
    For all $k = 1, \dots, K$, the following holds: $(a)$ for all $g \in \{1, \dots, G_k \}$, $\frac{1}{N_k} \sum_{i_k = 1}^{N_k} 1 \{ g^0_{k, i_k} = g \} \rightarrow \pi_{k, g}$ almost surely for some $\pi_{k, g} > 0$, and $(b)$ for all $g, g' \in \{1, \dots, G_k \}$ with $g \neq g'$, there exists $c^{k}_{g, g'} > 0$, independent of $n_1, \dots, n_K$, such that $|a^0_{k}(g) - a^0_{k}(g')| > c^k_{g, g'}$
\end{as}
These assumptions ensure that the numbers of the groups are correctly specified, and the grouped random effects are well-separated. Similar assumptions can be found in Assumption 2$(a)$ and $(b)$ of \cite{bonhomme2015grouped}. 
In Example 1 in the main text, $|a^0_{k}(g) - a^0_{k}(g')|$ equals $|\tilde a_{k}(g) - \tilde a_{k}(g')|$, which does not depend on $n_1, \dots, n_K$. Hence, in this case, $|a^0_{k}(g) - a^0_{k}(g')| > c^k_{g, g'}$ in Assumption \ref{as group N}$(b)$ can be simplified to $a^0_{k}(g) \neq a^0_{k}(g')$.
Note that \cite{liu2020identification} relax their version of Assumption \ref{as group N}$(b)$, allowing the overspecified number of groups in the context of their grouped coefficients models.

For Theorem 2 in the main text, we additionally need the following assumptions on the objective function $M(z_{i_1 \dots i_K}| \beta, a)$. We introduce the notation for the derivative matrices:
\begin{align}
    D_{i_1 \dots i_K}^{\beta_a} (\beta^*, a^*) &:= \mathbb E [\nabla_{\beta_a} M(z_{i_1 \dots i_K} | \beta^*, a^*) \nabla_{\beta_a} M(z_{i_1 \dots i_K} | \beta^*, a^*)^{\intercal}| \gamma^0], \notag \\
    J_{i_1 \dots i_K}^{\beta_a} (\beta^*, a^*) &:= \mathbb E [\nabla_{\beta_a \beta_a^{\intercal}} M(z_{i_1 \dots i_K}|\beta^*, a^*) | \gamma^0]. \notag
\end{align}
We then set the following assumptions.
\begin{as} \label{as m added}
    For all $i_1, \dots, i_K$ with $i_k = 1, \dots, n_k$ $(k = 1, \dots, K)$, the followings hold: $(a)$ the function $(\beta, a) \mapsto M(z_{i_1 \dots i_K} | \beta, a)$ is twice continuously differentiable and concave over $\mathcal B \times \mathcal A^{+}$ with all the derivatives being dominated by some function $\bar M(z_{i_1 \dots i_K})$, uniformly over $\mathcal B \times \mathcal A^{+}$, such that $\mathbb E[\bar M(z_{i_1 \dots i_K})|\gamma^0] < \infty$,
    $(b)$ for any $(\beta, a) \in \mathcal B \times \mathcal A^{+}$, there exist finite constants $\mathcal C_1$ and $\mathcal C_2$, independent of $i_1, \dots, i_K$ and $\gamma^0$, such that, for all $x > 0$, $\mathbb P(|\nabla_{a} M (z_{i_1 \dots i_K} | \beta, a)| > x | \gamma^0) \leq \mathcal C_1 \exp(-\mathcal C_2 x)$, $(c)$  $\nabla_{aa} M(z_{i_1 \dots i_K}|\beta, a)$ and $\| \nabla_{a \beta^{\intercal}} M(z_{i_1 \dots i_K}|\beta, a) \|$ are uniformly bounded over $\mathcal B \times \mathcal A^{+}$, $(d)$ there exist a finite constant $\mathcal C_3$ such that, for any $\beta_{a}^* := ( {\beta^*}^{\intercal}, a^*)^{\intercal}$ and $\bar \beta_{a} := (\bar \beta^{\intercal}, \bar a)^{\intercal}$ in $B \times A^{+}$, it holds that $\| J^{\beta_a}_{i_1 \dots i_K} (\beta^*, a^*) - J^{\beta_a}_{i_1 \dots i_K} (\bar \beta, \bar a) \| \leq \mathcal C_3 \| \beta_{a}^* - \bar \beta_{a}  \|$ uniformly over $i_1, \dots, i_K$,  $(e)$ $\mathbb E [\| \nabla_{\beta_a} M(z_{i_1 \dots i_K} | \beta^0, a^0_{1}(g^0_{1, i_1}) + \dots + a^0_{K} (g^0_{K, i_K})) \|^l | \gamma^0] $ is uniformly bounded over $i_1, \dots i_K$, for $l = 2, 3$, $(f)$ the matrices 
    \begin{align}
        &\frac{1}{N} \sum_{i_1 \dots i_K} 1 \{ g^0_{1, i_1} = g_1, \dots, g^0_{K, i_K} = g_K \} D^{\beta_a}_{i_1 \dots i_K} (\beta^0, a^0_{1}(g^0_{1, i_1}) + \dots + a^0_{K}(g^0_{K, i_k})), \notag \\
        &\frac{1}{N} \sum_{i_1 \dots i_K} 1 \{ g^0_{1, i_1} = g_1, \dots, g^0_{K, i_K} = g_K \} J^{\beta_a}_{i_1 \dots i_K} (\beta^0, a^0_{1} (g^0_{1, i_1}) + \dots + a^0_{K} (g^0_{K, i_K})), \notag
    \end{align}
    converge to some non-random positive definite $D_{g_1 \dots g_K}$, and negative definite $J_{g_1 \dots g_K}$, respectively, almost surely for all $g_k = 1, \dots, G_k$ $(k = 1, \dots, K)$, and $(g)$ for any element $M''(\cdot | \beta, a)$ of a matrix $\nabla_{\beta_a \beta_a^{\intercal}} M(\cdot | \beta, a)$, a class of functions $ \mathcal M''  := \{ M''(\cdot| \beta, a) : \beta \in \mathcal B, a \in \mathcal A^{+} \}$ satisfies $\sup_{R} \int^1_0 \sqrt{1 + \log \mathcal N(\ep \| M_{env}'' \|_{R, 2}, \mathcal M'', L_2 (R))} d \ep < \infty$, where the supremum is taken over all discrete probability measure, and $M_{env}'' (z)$ is an envelope function for $\mathcal M''$ such that $\mathbb E[M_{env}'' (z_{i_1 \dots i_K})^2 | \gamma^0] $ is  uniformly bounded over $i_1, \dots, i_K$.
\end{as}
Assumption \ref{as m added}$(a)$ allows us to apply the dominated convergence theorem, which ensures the exchangeability of the conditional expectation and the derivatives, and the continuity of the expectation of the derivatives in $\beta$ and $a$. The sub-exponentiality in Assumption \ref{as m added}$(b)$ helps bounding the expectations of maxima in the proof. The uniform entropy condition in Assumption \ref{as m added}$(f)$ can be checked similarly to Assumption \ref{as m}$(d)$. Assumption \ref{as m added}$(c)$ and $(d)$ are other technical conditions for the proofs.
\end{appendix}

\section{Proofs}\label{app:proofs}
Throughout all the proofs in Appendix \ref{app:proofs} and Appendix \ref{app:lemma}, we use the following notations. For any random vectors $w_1$ and $w_2$, we write the conditional expectation $\mathbb E[w_1 | w_2]$ as $\mathbb E_{w_2} [w_1]$.
We employ operators $P$ and $\mathbb P_N$ such that, for an array of random vectors $\{ w_{i_1 \dots i_K} : 1 \leq i_k \leq N_k, 1 \leq k \leq K \}$ on $(\Omega, \mathcal F, \mathbb P)$, we write $P w_{i_1 \dots i_K} = \frac{1}{N} \sum_{i_1 \dots i_K} \mathbb E_{\gamma^0} [w_{i_1 \dots i_K}]$, and $ \mathbb P_{N} w_{i_1 \dots i_K} = \frac{1}{N} \sum_{i_1 \dots i_K} w_{i_1 \dots i_K}$.
We also abbreviate $a_{1}(g_{1, i_1}) + \dots + a_{K}(g_{K, i_K})$ to $a_{i_1 \dots i_K}$, $a^0_{1}(g^0_{1, i_1}) + \dots + a^0_{K}(g^0_{K, i_K})$ to $ a^0_{i_1 \dots i_K}$, and $M(z_{i_1 \dots i_K} | \beta, a)$ to $M_{i_1 \dots i_K} (\beta, a)$. Furthermore, for all $i_1, \dots, i_K$, we write $g^0_{i_1 \dots i_K} = (g^0_{1, i_1}, \dots, g^0_{K, i_K})^{\intercal}$, and $\hat g_{i_1 \dots i_K} = (\hat g_{1, i_1}, \dots, \hat g_{K, i_K})^{\intercal}$. For an event $E$, let $I(E)$ denote an indicator function for $E$.
Let $I_{d}$ denote $d$-dimensional identity matrix while $0_{d_1, d_2}$ denotes $d_1 \times d_2$ zero matrix.

\begin{proof} [Proof of Theorem 1]
    We first prove an intermediate result: $d (\hat \Psi, \Psi^0) \rightarrow_p 0$, where
\begin{equation} 
	d(\Psi, \Psi^0)^2:= \| \beta - \beta^0\|^2 + \sum_{k = 1}^{K} \frac{1}{n_k} \sum_{i_k = 1}^{n_k} (a_{k}(g_{k, i_k}) - a^0_{k}(g^0_{k, i_k}))^2. \label{metric}
\end{equation}
	Our proof strategy follows that of Theorem 1 of \cite{liu2020identification}.
	We divide the proofs into four steps.

	\underline{Step 1.}
	Define $d^*(\Psi, \Psi^0)^2 := \| \beta - \beta^0 \|^2 + \mathbb P_{N} ( a_{i_1 \dots i_K} - a^0_{i_1 \dots i_K} )^2
		+  Pen(\alpha, \gamma)$.
	In this step, we show that there exists $c_0 > 0$ such that, uniformly over $\theta \in \Theta$,  
	\begin{equation} 
		\mathbb E_{\gamma^0} [Q(\Psi^0)] - \mathbb E_{\gamma^0} [Q(\Psi)] \geq c_0 d^*(\Psi, \Psi^0)^2, \label{well separation}
	\end{equation}
	holds.
    Lemma \ref{normalization} implies that $\mathbb E_{\gamma^0} [Q(\Psi^0)] - \mathbb E_{\gamma^0} [Q(\Psi)] = \mathbb E_{\gamma^0} [Q^*(\Psi^0)] - \mathbb E_{\gamma^0} [Q^*(\Psi)] + Pen(\alpha, \gamma)$. By Taylor's theorem in conjunction with Assumptions \ref{as parameter space}$(b)$ and $(c)$, and \ref{as m}$(a)$ and $(b)$, the latter term is no smaller than $\sigma/2 (\| \beta - \beta^0 \|^2 + \mathbb P_N(a_{i_1 \dots i_K} - a^0_{i_1 \dots i_K})^2) + Pen(\alpha, \gamma)$. Choosing $c_0 = (\sigma/2) \land 1$ completes the step 1.

	\underline{Step 2.}
	In this step, we show that $\sup_{\Psi \in \Xi} |Q(\Psi)  - \mathbb E_{\gamma^0}[Q(\Psi)]| \rightarrow_p 0$.
	By Markov's inequality and the law of iterated expectations, it suffices to prove that
	\begin{equation} 
    \mathbb E \left[ \sup_{\Psi \in \Xi} \left| Q(\Psi) - \mathbb E_{\gamma^0} [ Q(\Psi)]\right|  \right] =
		\mathbb E \left[ \mathbb E_{\gamma^0} \left[ \sup_{\Psi \in \Xi} \left| Q(\Psi) - \mathbb E_{\gamma^0} [ Q(\Psi)]\right|  \right] \right] \rightarrow 0. \label{expectation uniform convergence}
	\end{equation}
	We examine the conditional expectation in \eqref{expectation uniform convergence}.
    Define an $i.i.d.$ sequence of Rademacher variables $\{ \xi_{i_1 \dots i_K} : 1 \leq i_k \leq n_k, 1 \leq k \leq K \}$ that are independent of $\{z_{i_1 \dots i_K} : 1 \leq i_k \leq n_k, 1 \leq k \leq K \}$. It is straightforward to adapt the proof of Lemma 2.3.1 of \cite{van1996weak} to the case of independent and non-identically distributed variables. Thus, we obtain the symmetrization inequality for the conditional expectation by conditional independence assumption on $z_{i_1 \dots i_K}$:
    \begin{align}
        \mathbb E_{\gamma^0} \left[ \sup_{\Psi \in \Xi} \left| Q(\Psi) - \mathbb E_{\gamma^0} [ Q(\Psi)]\right|  \right] = \ &\mathbb E_{\gamma^0} \left[\sup_{\Psi \in \Xi}| (\mathbb P_N - P) M_{i_1 \dots i_K} (\beta, a_{i_1 \dots i_K})|\right] \notag \\
        \leq \ &2\mathbb E_{\gamma^0} \left[\sup_{\Psi \in \Xi} | \mathbb P_N \xi_{i_1 \dots i_K} M_{i_1 \dots i_K} (\beta, a_{i_1 \dots i_K}) \ | \right].\label{consistency symmetrization}
    \end{align}
    For the right side, observe that 
    \begin{align}
        &\mathbb E_{\gamma^0} \left[ \sup_{\Psi \in \Xi}| \mathbb P_N \xi_{i_1 \dots i_K} M_{i_1 \dots i_K} (\beta, a_{i_1 \dots i_K}) \ | \right] \notag \\
        = \ &\mathbb E_{\gamma^0} \left[ \max_{\gamma \in \Gamma} \sup_{\beta \in \mathcal B, \alpha \in \mathcal A^{\prod}} \left| \frac{1}{N} \sum_{i_1 \dots i_K} \sum_{g_1 \dots g_K} \xi_{i_1 \dots i_K} 1 \{ (g_{k, i_k})_{k=1}^K = (g_k)_{k=1}^K \} M_{i_1 \dots i_K} (\beta, a_{g_1 \dots g_K}) \right|\right], \label{consistency grouping}
    \end{align}
    where $\sum_{g_1 \dots g_K}$ denotes $\sum_{g_1 = 1}^{G_1} \dots \sum_{g_K = 1}^{G_K}$ and $a_{g_1 \dots g_K}$ denotes $a_{1}(g_1) + \dots + a_{K}(g_K)$, respectively. The latter term is bounded by 
    \begin{align}
        &\sum_{g_1 \dots g_K} \mathbb E_{\gamma^0} \left[ \max_{\gamma \in \Gamma} \sup_{\beta \in \mathcal B, \alpha \in \mathcal A^{\prod}} \left| \frac{1}{N} \sum_{i_1 \dots i_K} \xi_{i_1 \dots i_K} 1 \{ (g_{k, i_k})_{k=1}^K = (g_k)_{k=1}^K \} M_{i_1 \dots i_K} (\beta, a_{g_1 \dots g_K}) \right|\right] \notag \\
        \leq \ &\left( \prod_{k = 1}^K G_k \right) \mathbb E_{\gamma^0} \left[ \max_{ \{ \delta_{k, i_k} \}_{i_k = 1}^{N_k} (k = 1, \dots, K) } \sup_{\beta \in \mathcal B, a \in \mathcal A^{+}} \left| \frac{1}{N} \sum_{i_1 \dots i_K} \delta_{1, i_1} \dots \delta_{K, i_K} \xi_{i_1 \dots i_K} M_{i_1 \dots i_K} (\beta, a) \right| \right], \label{consistency absolute bound}
    \end{align}
    where the maximum of $\{ \delta_{k, i_k} \}_{i_k = 1}^{n_k}$ is taken over $\{0, 1\}^{n_k}$, the set of all sequences of length $n_k$ consisting only of $0$ and $1$ $(k = 1, \dots, K)$.
    We now bound the conditional expectation in the last term. Let $\mathbb E_{\gamma^0, Z}$ be the conditional expectation given the array $\{z_{i_1 \dots i_K} : 1 \leq i_k \leq n_k, 1 \leq k \leq K \}$, in addition to $\gamma^0$. Let $\| \cdot \|_{\psi_2}$ be the Orlicz norm with $\psi_2 (x) = \exp(x^2) - 1$ with respect to $\mathbb E_{\gamma^0, Z}$ (see page 144 of \cite{van1996weak}) for the definition). Letting $\max_{\delta_{k, i_k} }$ be a shorthand notation for $\max_{\{ \delta_{k, i_k} \}_{i_k = 1}^{n_k} (k = 1, \dots, K) }$, we observe that 
    \begin{align}
        &\mathbb E_{\gamma^0, Z} \left[ \max_{\delta_{k, i_k}} \sup_{\beta \in \mathcal B, a \in \mathcal A^{+}} |\mathbb P_N \delta_{1, i_1} \dots \delta_{K, i_K} \xi_{i_1 \dots i_K} M_{i_1 \dots i_K} (\beta, a)| \right] \notag \\
        \leq \ & \left\|\max_{\delta_{k, i_k}} \sup_{\beta \in \mathcal B, a \in \mathcal A^{+}} |\mathbb P_N \delta_{1, i_1} \dots \delta_{K, i_K} \xi_{i_1 \dots i_K} M_{i_1 \dots i_K} (\beta, a)|  \right\|_{\psi_2}, \label{consistency orlicz bound}
    \end{align}
    by the page 145 of \cite{van1996weak} combined with the Cauchy-Schwarz inequality.
	As the cardinality of $\{0, 1 \}^{n_k}$ is $2^{n_k}$ $(k = 1, \dots, K)$, Lemma 2.2.2 of \cite{van1996weak} gives that 
    \begin{align}
        &\left\|\max_{\delta_{k, i_k}} \sup_{\beta \in \mathcal B, a \in 
        \mathcal A^{+}} |\mathbb P_N \delta_{1, i_1} \dots \delta_{K, i_K} \xi_{i_1 \dots i_K} M_{i_1 \dots i_K} (\beta, a)|  \right\|_{\psi_2} \notag \\
        \lesssim \ & \sqrt{\log (1 + 2^{\sum_{k = 1}^K n_k})} \max_{\delta_{k, i_k}} \left\| \sup_{\beta \in \mathcal B, a \in \mathcal A^{+}} |\mathbb P_N \delta_{1, i_1} \dots \delta_{K, i_K} \xi_{i_1 \dots i_K} M_{i_1 \dots i_K} (\beta, a)| \right\|_{\psi_2}. \label{consistency maximal bound}
    \end{align}
    By a straightforward calculation, the latter term is further bounded, up to a constant, by
    \begin{align}
        &\sqrt{\sum_{k = 1}^K n_k} \max_{\delta_{k, i_k}} \left\| \sup_{\beta \in \mathcal B, a \in \mathcal A^{+}} |\mathbb P_N \delta_{1, i_1} \dots \delta_{K, i_K} \xi_{i_1 \dots i_K} M_{i_1 \dots i_K} (\beta, a)| \right\|_{\psi_2} \notag \\
        \lesssim \ &\sqrt{\frac{\sum_{k=1}^K n_k}{N}} \max_{\delta_{k, i_k}} \sqrt{\frac{\Delta}{N}} \left\| \sup_{\beta \in \mathcal B, a \in \mathcal A^{+}} \left| \frac{1}{\sqrt{\Delta}} \sum_{i_1 \dots i_K} \delta_{1, i_1} \dots \delta_{K, i_K} \xi_{i_1 \dots i_K} M_{i_1 \dots i_K} (\beta, a)\right| \right\|_{\psi_2}, \label{consistency Delta bound}
    \end{align}
    where $\Delta := \prod_{k = 1}^K \Delta_k$ with $\Delta_k := \sum_{i_k = 1}^{n_k} \delta_{k, i_k}$ $(k = 1, \dots, K)$. We may ignore the case where $\Delta_k = 0$ for some $k$ as the summation is zero in this case. For the last term in the above display, by the proof of Theorem 2.14.1 of \cite{van1996weak} in conjunction with Assumption \ref{as m}$(c)$ and $(d)$, we have the following bound.
    \begin{align}
        &\sqrt{\frac{\Delta}{N}} \left\| \sup_{\beta \in \mathcal B, a \in \mathcal A^{+}} \left| \frac{1}{\sqrt{\Delta}} \sum_{i_1 \dots i_K} \delta_{1, i_1} \dots \delta_{K, i_K} \xi_{i_1 \dots i_K} M_{i_1 \dots i_K} (\beta, a)\right| \right\|_{\psi_2} \notag \\
        \lesssim \ &\sqrt{\frac{\Delta}{N}} \sqrt{ \frac{1}{\Delta} \sum_{i_1 \dots i_K} \delta_{1, i_1} \dots \delta_{K, i_K} M_{env} (z_{i_1 \dots i_K})^2 } \notag \\ 
        &\times \sup_{R} \int^{1}_{0} \sqrt{1 + \log \mathcal N (\ep \| M_{env} \|_{R, 2}, \mathcal M, L_2 (R))} d\ep \notag \\
        \lesssim\ & \sqrt{\mathbb P_N M_{env}(z_{i_1 \dots i_K})^2}. \notag
    \end{align}
    As the last term does not depend on $\delta_{k, i_k}$, \eqref{consistency orlicz bound}, \eqref{consistency maximal bound} and \eqref{consistency Delta bound} lead to
    \begin{align}
        &\mathbb E_{\gamma^0, Z} \left[ \max_{\delta_{k, i_k}} \sup_{\beta \in  \mathcal B, a \in \mathcal A^{+}} |\mathbb P_N \delta_{1, i_1} \dots \delta_{K, i_K} \xi_{i_1 \dots i_K} M_{i_1 \dots i_K} (\beta, a)| \right] \notag \\
        \lesssim \ & \sqrt{\frac{\sum_{k = 1}^K n_k}{N}} \sqrt{\mathbb P_N M_{env}(z_{i_1 \dots i_K})^2}. \notag
    \end{align}
    Taking the conditional expectation of both sides with respect to $\{z_{i_1 \dots i_K} : 1 \leq i_k \leq n_k, 1 \leq k \leq K \}$ given $\gamma^0$, we have
    \begin{align}
        &\mathbb E_{\gamma^0} \left[ \max_{\delta_{k, i_k}} \sup_{\beta \in \mathcal B, a \in \mathcal A^{+}} |\mathbb P_N \delta_{1, i_1} \dots \delta_{K, i_K} \xi_{i_1 \dots i_K} M_{i_1 \dots i_K} (\beta, a)| \right] \notag \\
        \lesssim \ &\sqrt{\frac{\sum_{k = 1}^K n_k}{N}} \mathbb E_{\gamma^0} \left[  \sqrt{\mathbb P_N M_{env}(z_{i_1 \dots i_K})^2}\right]. \notag 
    \end{align}
    Because $\mathbb E_{\gamma^0} \left[  \sqrt{\mathbb P_N M_{env}(z_{i_1 \dots i_K})^2}\right] \leq \sqrt{P M_{env} (z_{i_1 \dots i_K})^2}$ by Jensen's inequality, Assumption \ref{as m}$(c)$ yields that the above display is uniformly bounded, up to a universal constant, by $\sqrt{\sum_{k=1}^{K} n_k / N}$.
    Combining this with \eqref{consistency symmetrization}, \eqref{consistency grouping} and \eqref{consistency absolute bound}, we conclude that
    \[
        \mathbb E_{\gamma^0} \left[ \sup_{\Psi \in \Xi} |Q(\Psi) - \mathbb E_{\gamma^0} [Q(\Psi) ]| \right] \lesssim \sqrt{\frac{\sum_{k = 1}^K n_k}{N}}.
    \]
    Because the term on the right side does not depend on $\gamma^0$, the desired convergence follows.
	
	\underline{Step 3.}
	This step shows that $d^*(\hat \Psi, \Psi^0) \rightarrow_p 0$, where $d^*(\Psi, \Psi^0)$ is defined in Step 1.
	Define $S_{Q} := \sup_{\Psi \in \Xi} |Q(\Psi) - \mathbb E_{\gamma^0}[Q(\Psi)]|$. Then, by the definition of $\hat \Psi$,
	we have 
	\begin{equation} 
		\mathbb E_{\gamma^0} [Q(\Psi)]|_{\Psi = \hat \Psi} + S_{Q} \geq Q (\hat \Psi)  \geq Q (\Psi^0) \geq \mathbb E_{\gamma^0}[Q(\Psi^0)] - S_{Q}. \notag
	\end{equation}
    Subtracting $\mathbb E_{\gamma^0} [Q(\Psi)]|_{\Psi = \hat \Psi} - S_{Q}$ from the far left and the far right sides of the inequality, we have $\mathbb E_{\gamma^0} [Q(\Psi^0) ] - \mathbb E_{\gamma^0} [Q(\Psi) ]|_{\Psi = \hat \Psi} \leq 2 S_{Q}$.
    It follows from \eqref{well separation} in Step 1 that $c_0 d^* (\hat \Psi, \Psi^0)^2 \leq 2 S_{Q}$.
	As $S_{Q} \rightarrow_p 0$ from Step 2, we now conclude this step.

	\underline{Step 4.}
    This step completes the proof for $d (\hat \Psi, \Psi^0) \rightarrow_p 0$. By a straightforward calculation,
    \begin{equation} 
        d^*(\hat \Psi, \Psi^0)^2 = d (\hat \Psi, \Psi^0)^2 + Rem(\hat \Psi, \Psi^0), \notag
    \end{equation}
    where $Rem(\hat \Psi, \Psi^0)$ is defined to be
    \[
        2 \sum_{l = 1}^{K} \sum_{m = l+1}^{K} \left( \frac{1}{n_l} \sum_{i_l = 1}^{n_l} \left(\hat a_{l}(\hat g_{l, i_l}) - a^0_{l} (g^0_{l, i_l}) \right) \right) 
        \left( \frac{1}{n_m} \sum_{i_m = 1}^{n_m} \left(\hat a_{m} (\hat g_{m, i_m}) - a^0_{m}(g^0_{m, i_m})\right) \right) + Pen(\hat \alpha, \hat \gamma). 
    \]
    Let $\hat S_{k} := \frac{1}{n_k} \sum_{i_k = 1}^{n_k} \hat a_{k}(\hat g_{k, i_k})$ and $S^0_{k} := \frac{1}{n_k} \sum_{i_k = 1}^{n_k} a^0_{k}(g^0_{k, i_k})$.
    Then for each $l, m$ with $l < m$, we have that 
    \begin{align}
        (\hat S_l - S^0_{l}) (\hat S_m - S^0_{m}) = \ &\{ (\hat S_m - S^0_{m}) + (\hat S_l - \hat S_m) + (S^0_m - S^0_l)\} (\hat S_m - S^0_m) \notag \\
        = \ &\{ (\hat S_m - S^0_{m}) + (\hat S_l - \hat S_m)\} (\hat S_m - S^0_m) \notag \\
        \geq \ & (\hat S_l - \hat S_m)(\hat S_m - S^0_m), \label{consistency S inequality}
    \end{align}
    where the second equality follows from Lemma \ref{normalization}. Here, $\hat S_l - \hat S_m$ in the last term converges to zero in probability because of the convergence of $Pen(\hat \alpha, \hat \gamma)$, which is implied by the convergence of $d^*(\hat \Psi, \Psi^0)$. Additionally, $\hat S_m - S_m^0$ is bounded by Assumption \ref{as parameter space}$(a)$ so that $(\hat S_l - \hat S_m)(\hat S_m - S^0_m)$ converges to zero in probability. Combining this fact with \eqref{consistency S inequality} and the form of $Rem(\hat \Psi, \Psi^0)$, it follows that $Rem(\hat \Psi, \Psi^0) \geq \varrho_N$ for some random $\varrho_N$ converging to zero in probability. Therefore, $d^*(\hat \Psi, \Psi^0)^2 \geq d (\hat \Psi, \Psi^0)^2 + \varrho_N$ and the proof for $d (\hat \Psi, \Psi^0) \rightarrow_p 0$ is complete. Then $\hat \beta \rightarrow_p \beta^0$ immediately follows.

    To prove the latter part of the theorem, let $k = 1, \dots, K$. As in the proof of Lemma B.3 of \cite{bonhomme2015grouped}, we evaluate the two terms in the max operator of $d_{H, k}$ in turn.

We first show that, for all $g \in \{1, \dots, G_k\}$, it holds that
\begin{equation}\label{alpha mim convertence}
\min_{g' \in \{1, \dots, G_k\}} (\hat a_{k}(g') - a^0_{k}(g))^2 \rightarrow_p 0.
\end{equation}
Let $g \in \{1, \dots, G_k \}$. We observe that 
\begin{align} 
&\frac{1}{n_k} \sum_{i_k = 1}^{n_k} \min_{g' \in \{1, \dots, G_k \}} 1 \{g^0_{k, i_k} = g \} (\hat a_{k}(g') - a^0_{k}(g))^2 \notag \\
= \ & \min_{g' \in \{1, \dots, G_k \}} (\hat a_{k}(g') - \tilde a^0_{k}(g))^2 \frac{1}{n_k}\sum_{i_k = 1}^{n_k} 1 \{g^0_{k, i_k} = g \}. \notag
\end{align}
Assumption \ref{as group N}$(a)$ implies that $\frac{1}{n_k} \sum_{i_k = 1}^{n_k} 1 \{ g^0_{k, i_k} = g \} \rightarrow_p \pi_{k, g} > 0$. Thus, for \eqref{alpha mim convertence} to hold, it suffices to show that 
\[
    \frac{1}{n_k} \sum_{i_k = 1}^{n_k} \min_{g' \in \{1, \dots, G_k \}} 1 \{g^0_{k, i_k} = g \} (\hat a_{k}(g') - a^0_{k}(g))^2 \rightarrow_p 0.
\]
However, this follows because the left term is bounded by
\begin{align}
    \frac{1}{n_k} \sum_{i_k = 1}^{n_k} 1 \{ g^0_{k, i_k} = g \} \min_{g' \in \{1, \dots, G_k \}} (\hat a_{k}(g') - a^0_{k}(g))^2 \leq \ &\frac{1}{n_k} \sum_{i_k = 1}^{n_k} 1 \{ g^0_{k, i_k} = g \} (\hat a_{k}(\hat g_{k, i_k}) - a^0_{k}( g^0_{k, i_k}))^2 \notag \\
    \leq \ &\frac{1}{n_k} \sum_{i_k = 1}^{n_k} 1 (\hat a_{k}(\hat g_{k, i_k}) - a^0_{k}(g^0_{k, i_k}))^2, \notag
\end{align}
and the last term converges to zero in probability by an earlier part of this proof.

We next show that, for all $g' \in \{1, \dots, G_k\}$, it holds that
\begin{equation}\label{alpha min convergence 2}
    \min_{g \in \{1, \dots, G_k \} } (\hat a_{k}(g') - a^0_{k}(g))^2 \rightarrow_p 0,
\end{equation}
Define $\tau (g) := {\rm argmin}_{g' \in \{ 1, \dots, G_k\}} (\hat a_{k}(g') - a^0_{k}(g))^2$
for all $g \in \{1, \dots, G_k \}.$ We start by proving that the map $\tau: \{1, \dots, G_k \} \mapsto \{1, \dots, G_k \}$ is one-to-one with probability approaching one. Let $g \neq h$. Using the triangle inequality,
\begin{align}
    |\hat a_{k}(\tau(g)) - \hat a_{k}(\tau(h))| \geq \ &| a^0_{k}(g) - a^0_{k}(h)| - |a^0_{k}(g) - \hat a_{k}(\tau(g))| - |\hat a_{k}(\tau(h)) - a^0_{k}(h)| \notag \\
    = \ &|a^0_{k}(g) - a^0_{k}(h)|
    - \min_{g' \in \{1, \dots, G_k \}} | \hat a_{k}(g') - a^0_{k}(g) | -  \min_{h' \in \{1, \dots, G_k \}} |\hat a_{k}(h') - a^0_{k}(h)|, \notag
\end{align}
where the far right side converges to $|a^0_{k}(g) - a^0_{k}(h)|$ in probability by \eqref{alpha mim convertence}. It now follows from Assumption \ref{as group N}$(b)$ that there exists a positive constant $c_{g, h}$ such that $|\hat a_{k}(\tau(g)) - \hat a_{k}(\tau(h))| > c_{g, h}$ with probability approaching one. Because the probability that $\tau$ is one-to-one is no smaller than 
\[
\mathbb P \left( \bigcap_{g \neq h} \{ |\hat a_{k}(\tau(g)) - \hat a_{k}(\tau(h))| > c_{g, h}\} \right) \geq 1 - \sum_{g \neq h} \mathbb P (|\hat a_{k}(\tau(g)) - \hat a_{k}(\tau(h))| \leq c_{g, h}),
\]
and the last sum in the above display converges to zero, the map $\tau$ is one-to-one with probability approaching one. This implies that $\tau$ admits a well-defined inverse $\tau^{-1}$ with probability approaching one. Therefore, for all $g' \in \{1, \dots, G_k\}$,
\begin{align}
    \min_{g \in \{1, \dots, G_k \}} (\hat a_{k}(g') - a^0_{k}(g))^2 \leq \ & 1 \{ \tau \ \text{is one-to-one.}\} (\hat a_{k}( g') - a^0_{k}(\tau^{-1}(g')))^2 + \kappa_N \notag \\
    = \ & 1 \{ \tau \ \text{is one-to-one.}\} 
    \min_{g \in \{1, \dots, G_k \}} (\hat a_{k}(g) - \tilde a^0_{k}(\tau^{-1} (g')))^2 + \kappa_N, \notag
\end{align}
where $\kappa_N$ is a random variable converging to zero in probability. Now the far right side on the above display converges to zero in probability by \eqref{alpha mim convertence}.
Hence, \eqref{alpha min convergence 2} has been shown.
This completes the proof.
\end{proof}

\begin{proof}[Proof of Theorem 2]
We first prove $\mathbb P(\hat \gamma = \gamma^0) \rightarrow 1$. By Assumption \ref{as parameter space}$(c)$, an $\ep_{a}$-neighborhood of $a^0_{k}(g_k)$ is included in $\mathcal A$ for all $k = 1, \dots, K$ and $g_k = 1, \dots, G_k$ for some $\ep_a > 0$. Define $a_{diff}^0 := \min_{1 \leq k \leq K, g_k \neq g'_k} |a^0_{k}(g_k) -  a^0_{k}(g'_k)|$.
We may take $\ep_{a}$ sufficiently small such that $\ep_{a} <  a_{diff}^0/2$ by Assumption \ref{as group N}$(b)$.
Furthermore, define an event 
\[E := \left\{ |\hat a_{k}(g_k) - a^0_{k}(g_k)| < \frac{\ep_{a}}{2} \ (g_k = 1, \dots, G_k, k = 1, \dots, K) \right\}.\]
By Theorem 1, $\mathbb P (E) \rightarrow 1$. 

Because $\mathbb P (\hat \gamma \neq \gamma^0) \leq \sum_{k = 1}^K \mathbb P (\hat \gamma_k \neq \gamma^0_k)$, it suffices to show that $\mathbb P ( \{ \hat \gamma_k \neq \gamma^0_k \} \cap E) \rightarrow 0$ $(k = 1, \dots, K)$ for the desired result.
Fix $l = 1, \dots, K$. Observe that $\mathbb P ( \{ \hat \gamma_l \neq \gamma^0_l \} \cap E) = \mathbb P \left( \cup_{m = 1}^{n_l} \{ \hat g_{l, m} \neq g^0_{l, m} \} \cap E \right).$

We now consider an event $\{ \hat g_{l, m} \neq g^0_{l, m} \} \cap E$ for any $m = 1, \dots, n_l$.
In the following, we use the following notations for any $m = 1, \dots, n_l$, and $i_k = 1, \dots, n_k$ $(k = 1, \dots, l-1, l+1, \dots, K)$:
\begin{align}
    z_{i_1 \dots i_K}^{m} &:= z_{i_1 \dots i_{l-1} m i_{l+1} \dots i_K}, \notag \\
    \hat a_{i_1 \dots i_K}^m &:= \hat a_{1}(\hat g_{1, i_1}) + \dots + \hat a_{l-1}(\hat g_{l-1, i_{l - 1}}) + \hat a_{l}(\hat g_{l, m}) + \hat a_{l+1}(\hat g_{l+1, i_{l+1}}) + \dots + \hat a_{K}(\hat g_{k, i_k}), \notag \\
    \hat a^{m, 0}_{i_1 \dots i_K} &:= \hat a_{1}(\hat g_{1, i_1}) + \dots + \hat a_{l-1}(\hat g_{l-1, i_{l - 1}}) + \hat a_{l}(g^0_{l, m}) + \hat a_{l+1}(\hat g_{l+1, i_{l+1}}) + \dots + \hat a_{K}(\hat g_{k, i_k}). \notag 
\end{align}
For any $m = 1, \dots, n_l$, observe that $\hat g_{l, m} \neq g^0_{l, m}$ implies $Q (\hat \Psi) \geq Q (\hat \Psi_{l, m})$, or equivalently $Q^* (\hat \Psi) - Pen (\hat \alpha, \hat \gamma) \geq Q^* (\hat \Psi_{l, m}) - Pen(\hat \alpha, \hat \gamma_{l, m})$,
where $\hat \Psi_{l, m}$ and $\hat \gamma_{l, m}$ share the same elements with $\hat \Psi$ and $\hat \gamma$ except for $\hat g_{l, m}$ replaced by $g_{l, m}^0$. On $E$, $\hat \alpha$ is an interior point of its parameter space so that $Pen (\hat \alpha, \hat \gamma) = 0$ by Lemma \ref{normalization estimator}. It follows from this fact in conjunction with Assumption \ref{as parameter space}$(a)$ that $Pen(\hat \alpha, \hat \gamma_{l, m}) \leq \mathcal C_{Pen}/n_l^2$ holds on $E$ for some universal finite constant $\mathcal C_{Pen}$.
Consequently, $Q^*(\hat \Psi) \geq Q^*(\hat \Psi_{l, m}) - \mathcal C_{Pen}/n_l^2$ holds on $E$.
By a straightforward calculation, this implies that, on $E$,
\begin{equation}
\frac{1}{N_{-l}} \sum_{i_l = m} M(z_{i_1 \dots i_K}^{m} | \hat \beta, \hat a_{i_1 \dots i_K}^m)
\geq \frac{1}{N_{-l}} \sum_{i_l = m} M(z_{i_1 \dots i_K}^{m} | \hat \beta, \hat a_{i_1 \dots i_K}^{m, 0}) - \mathcal C_{Pen}/n_l, \label{inequality on event}
\end{equation}
holds where $N_{-l} := n_1 \dots n_{l-1} n_{l+1} \dots n_{K}$, and 
$\sum_{i_l = m}$ is the summation over all $i_1, \dots, i_K$ with $i_l = m$.

By Assumption \ref{as m added}$(a)$, applying Taylor's theorem to \eqref{inequality on event} yields that
\begin{align}
    0 \leq & \ \frac{1}{N_{-l}} \sum_{i_l = m} \nabla_{a} M(z_{i_1\dots i_K}^m | \hat \beta, a^{m, 0}_{i_1 \dots i_K}) (\hat a_{l}(\hat g_{l, m}) - \hat a_{l}(g^0_{l, m}))
    \notag \\
    & +  
    \frac{1}{N_{-l}} \sum_{i_l = m}
    \int_{\hat a^{m, 0}_{i_1 \dots i_K}}^{\hat a^{m}_{i_1 \dots i_K}} \nabla_{a a} M(z_{i_1 \dots i_K}^m | \hat \beta, t)  (\hat a^{m}_{i_1 \dots i_K} - t) dt + \mathcal C_{Pen}/n_l, \label{taylor wrt est}
\end{align}
holds on $E$.
For the first term on the right side of \eqref{taylor wrt est}, we apply Taylor's theorem again and obtain
\begin{align}
    &\frac{1}{N_{-l}} \sum_{i_l = m} \nabla_{a} M(z_{i_1\dots i_K}^m | \hat \beta, \hat a^{m, 0}_{i_1 \dots i_K}) \notag \\
    = \ &\frac{1}{N_{-l}} \sum_{i_l = m} \nabla_{a} M(z_{i_1\dots i_K}^m | \beta^0, a^0_{i_1 \dots i_K})
    + \frac{1}{N_{-l}} \sum_{i_l = m} \nabla_{a \beta^{\intercal}} M(z_{i_1\dots i_K}^m|\bar \beta, \bar a_{i_1 \dots i_K}) (\hat \beta - \beta^0) \notag \\
    &+ \frac{1}{N_{-l}} \sum_{i_l = m} \left\{ \nabla_{a a} M(z_{i_1 \dots i_K}^m | \bar \beta, \bar a_{i_1 \dots i_K}) \left( \sum_{k \neq l}^K (\hat a_{k}(\hat g_{k, i_k}) - a^0_{k}(g^0_{k, i_k}))  + (\hat a_{l}(g^0_{l, m}) -  a^0_{l}(g^0_{l, m}))\right)  \right\}, \label{taylor wrt true}
\end{align}
where $(\bar \beta, \bar a_{i_1 \dots i_K})$ lies on the line between $(\hat \beta, \hat a^{m, 0}_{i_1 \dots i_K})$ and $(\beta^0, a^0_{i_1 \dots i_K})$.
Meanwhile, for the second term on the right side of \eqref{taylor wrt est}, without loss of generality, assume $\hat a_{l}(g_{l, m}^0) \leq \hat a_{l}( \hat g_{l, m})$.
Then, by the choice of $\ep_{a}$ and $a_{diff}^0$, it holds that, on $E \cap \{ \hat g_{l, m} \neq g^0_{l, m} \}$, this term is no larger than
\begin{equation}
\frac{1}{N_{-l}} \sum_{i_l = m} 
    \int_{\hat a^{m, 0}_{i_1 \dots i_K} + \ep_{a}/4}^{\hat a^{m, 0}_{i_1 \dots i_K} + \ep_{a}/2} \nabla_{a a} M(z_{i_1 \dots i_K}^m | \hat \beta, t) \frac{\ep_{a}}{2} dt. \label{lower bound int}
\end{equation}
Note that the above display equals 
\begin{align}
    &\frac{\ep_{a}}{2 N_{-l}} \sum_{i_l = m} \int_{\hat a^{m, 0}_{i_1 \dots i_K} + \ep_{a}/4}^{\hat a^{m, 0}_{i_1 \dots i_K} + \ep_{a}/2} \mathbb E_{\gamma^0} [\nabla_{a a} M(z_{i_1 \dots i_K}^m | \beta, t)]|_{\beta = \hat \beta} dt. \notag \\
    + \ &\frac{\ep_{a}}{2 N_{-l}} \sum_{i_l = m} \int_{\hat a^{m, 0}_{i_1 \dots i_K} + \ep_{a}/4}^{\hat a^{m, 0}_{i_1 \dots i_K} + \ep_{a}/2} (\nabla_{a a} M(z_{i_1 \dots i_K}^m | \hat \beta, t) -  \mathbb E_{\gamma^0} [\nabla_{a a} M(z_{i_1 \dots i_K}^m | \beta, t)]|_{\beta = \hat \beta}) dt, \label{addition subtraction}
\end{align}
where $\mathbb E_{\gamma^0} [\nabla_{a a} M(z_{i_1 \dots i_K}^m | \hat \beta, t)]$ is integrable as a function of $t$ by Assumption \ref{as m added}$(a)$. Here Assumption \ref{as m added}$(a)$ ensures the interchangeability of the conditional expectation and the differentiation, by which we obtain
$\mathbb E_{\gamma^0} [\nabla_{a a} M(z_{i_1 \dots i_K}^m | \hat \beta, t)] = \nabla_{a a}\mathbb E_{\gamma^0} [M(z_{i_1 \dots i_K}^m | \hat \beta, t)]$. Combining this interchangeability with Assumption \ref{as m}$(b)$, \eqref{addition subtraction} is no larger than
\begin{equation}
    -\frac{\sigma \ep_{a}^2}{8} + \frac{\ep_{a}}{2 N_{-l}} \sum_{i_l = m} \int_{\hat a^{m, 0}_{i_1 \dots i_K} + \ep_{a}/4}^{\hat a^{m, 0}_{i_1 \dots i_K} + \ep_{a}/2} (\nabla_{a a} M(z_{i_1 \dots i_K}^m | \hat \beta, t) -  \mathbb E_{\gamma^0} \left[ \nabla_{a a} M(z_{i_1\dots i_K}^m | \beta, t) \right]|_{\beta = \hat \beta}) dt. \label{lower bound lambda int}
\end{equation}
By \eqref{taylor wrt est}, \eqref{taylor wrt true}, \eqref{lower bound lambda int} and Assumption \ref{as parameter space}$(a)$, there exists a universal finite constant $\mathcal C_{a}$, such that, on $\{ \hat g_{l, m} \neq g^0_{l, m} \} \cap E$, 
\[
    0 \leq \mathcal C_{a} (T_1 + T_2 + T_3 + T_4) - \sigma \ep^2_{a}/8 + \mathcal C_{Pen}/n_l,
\]
holds, where
\begin{align}
    T_1 := &\max_{m = 1, \dots, n_l} \left| \frac{1}{N_{-l}} \sum_{i_l = m} \nabla_{a} M (z_{i_1\dots i_K}^m | \beta^0, a^0_{i_1 \dots i_k}) \right|, \notag \\
    T_2 := &\| \hat \beta - \beta^0 \| \max_{m = 1, \dots, n_l} \left\| \frac{1}{N_{-l}} \sum_{i_l = m} \nabla_{a \beta^{\intercal}} M(z_{i_1 \dots i_K}^m | \bar \beta, \bar a_{i_1 \dots  i_k})  \right\|, \notag \\
    T_3 := &\max_{m = 1, \dots, n_l} \left| \frac{1}{N_{-l}} \sum_{i_l = m} \left\{ \nabla_{a a} M(z_{i_1 \dots i_K}^m | \bar \beta, \bar a_{i_1 \dots i_K}) \right. \right. \notag \\
    &\left. \left. \times \left( \sum_{k \neq l}^K (\hat a_{k}(\hat g_{k, i_k}) -  a^0_{k}(g^0_{k, i_k}))  + (\hat a_{l}(g^0_{l, m}) - a^0_{l}(g^0_{l, m}))\right)  \right\} \right|, \notag \\
    T_4 := &\frac{\ep_{a}}{2} \max_{m = 1, \dots, n_l} \left| \frac{1}{N_{-l}} \sum_{i_l = m} \int^{\hat a^{m, 0}_{i_1 \dots i_K} + \ep_{a}/2}_{\hat a^{m, 0}_{i_1 \dots i_K} + \ep_{a}/4}  \left ( \nabla_{aa} M(z_{i_1\dots i_K}^m | \hat \beta, t) \right. \right. \notag \\
    & \left. \left. - \mathbb E_{\gamma^0} \left[ \nabla_{a a} M(z_{i_1\dots i_K}^m | \beta, t) \right]|_{\beta = \hat \beta} \right) dt \right|. \notag
\end{align}
As $T_1, T_2, T_3,$ and $T_4$ do not depend on $m = 1, \dots, N_l$, we have that
\begin{equation}
	\mathbb P \left( \cup_{i_l = 1}^{n_l} \{ \hat g_{l, i_l} \neq g^0_{l, i_l}\} \cap E \right) \leq \mathbb P (0 \leq \mathcal C_{\alpha} (T_1 + T_2 + T_3 + T_4) - \sigma \ep^{2}_{a} / 8 + \mathcal C_{Pen} / n_l ). \notag
\end{equation}
If $T_1, T_2, T_3,$ and $T_4$ all converge to zero in probability, the right side goes to zero and the proof for $\mathbb P(\hat \gamma = \gamma^0) \rightarrow 1$ is complete. Hence, it remains to show that they actually converge. Because $\mathbb P(E) \rightarrow 1$, we may assume that $E$ holds in the following.

For $T_1$, by Markov's inequality and the law of iterated expectations, it suffices to show that
\begin{equation}
    \mathbb E \left[ \mathbb E_{\gamma^0} \left[ \max_{m = 1, \dots, n_l} \left| \frac{1}{N_{-l}} \sum_{i_l = m} \nabla_{a} M (z_{i_1\dots i_K}^m | \beta^0, a^0_{i_1 \dots i_k}) \right|  \right] \right]
    \rightarrow 0. \label{T_1 convergence}
\end{equation}
By Assumption \ref{as m}$(a)$, $\nabla_a \mathbb E_{\gamma^0}[M(z_{i_1 \dots i_K} | \beta^0, a^0_{i_1 \dots i_K} )] = 0$. Here, we can interchange differentiation and expectation by Assumption \ref{as m added}$(a)$ so that $\mathbb E_{\gamma^0} [\nabla_a M(z_{i_1 \dots i_K} | \beta^0, a^0_{i_1 \dots i_K} )] = 0$.
By Lemma 8 of \cite{chernozhukov2015comparison} in view of the conditional independent assumption on $z_{i_1 \dots i_K}$, we can bound the inner conditional expectation in \eqref{T_1 convergence} as
\begin{equation}
    \mathbb E_{\gamma^0} \left[ \max_{m = 1, \dots, n_l} \left| \frac{1}{N_{-l}} \sum_{i_l = m} \nabla_{a} M (z_{i_1\dots i_K}^m | \beta^0, a^0_{i_1 \dots i_k}) \right|  \right] \lesssim \frac{\sigma_M \sqrt{\log n_l} + \sqrt{\mathbb E_{\gamma^0} [\mathcal C_{M}^2]} \log n_l}{N_{-l}}, \label{A_1 c.e. bound}
\end{equation}
where
\begin{align}
    \mathcal C_M &:= \max_{i_1, \dots, i_K} |\nabla_{a} M(z^m_{i_1 \dots i_K} | \beta^0, a^0_{i_1 \dots i_K})|, \notag \\
    \sigma^2_M &:= \max_{m = 1, \dots, n_l} \sum_{i_l = m} \mathbb E_{\gamma^0} [\{ \nabla_{a} M(z^m_{i_1 \dots i_K} | \beta^0, a^0_{i_1 \dots i_k}) \}^2 ], \notag
\end{align}
with the maximum $\max_{i_1, \dots, i_K}$ being taken over all $i_k = 1, \dots, n_k$ $(k = 1, \dots, K)$.
Let $\| \cdot \|_{\psi_1}$ be the Orlicz norm for $\psi_2(x) = \exp(x) - 1$ with respect to the conditional expectation $\mathbb E_{\gamma^0}$.
By the page 145 of \cite{van1996weak}, we have that, for some universal finite constant $\mathcal C_{\psi_1}$, 
\begin{equation} 
	\sqrt{\mathbb E_{\gamma^0} [\mathcal C_M^2]} \leq \mathcal C_{\psi_1} \left\| \max_{i_1, \dots, i_K} |\nabla_{a} M(z^m_{i_1 \dots i_K} | \beta^0, a^0_{i_1 \dots i_K})| \right\|_{\psi_1}. \label{M bound orlicz}
\end{equation}
By Lemma 2.2.2 of \cite{van1996weak}, the Orlicz norm on the right side can be bounded as follows.
\begin{align} 
	&\left\| \max_{i_1, \dots, i_K} |\nabla_{a} M(z^m_{i_1 \dots i_K} | \beta^0,  a^0_{i_1 \dots i_K})| \right\|_{\psi_1} \notag \\
	\lesssim \ &\log (1 + N) \max_{i_1, \dots, i_K} \| \nabla_{a} M(z_{i_1 \dots i_K}^m | \beta^0, a^0_{i_1 \dots i_K}) \|_{\psi_1} \notag \\
	\leq & ((1 + \mathcal C_1)/\mathcal C_2) \log(1 + N), \label{M finite bound}
\end{align}
where the last inequality follows from Lemma 2.2.1 of \cite{van1996weak} in conjunction with Assumption \ref{as m added}$(b)$. 
For $\sigma^2_M$, similarly to the bound on $\sqrt{\mathbb E_{\gamma^0} [\mathcal C_M^2]}$, we have 
\begin{equation} 
	\sigma^2_{M} \leq \mathcal C_{\psi_1}^2 \max_{m = 1, \dots, n_l} \sum_{i_l = m} \| \nabla_a M(z_{i_1 \dots i_K}^m | \beta^0, a^0_{i_1 \dots i_K}) \|_{\psi_1}^2 \leq N_{-l} \mathcal C_{\psi_1}^2 ((1 + \mathcal C_1)/\mathcal C_2)^2.\label{sigma bound}
\end{equation}
Combining \eqref{M bound orlicz}, \eqref{M finite bound}, and \eqref{sigma bound}, \eqref{A_1 c.e. bound} leads to 
\begin{equation} 
	\mathbb E_{\gamma^0} \left[ \max_{m = 1, \dots, n_l} \left| \frac{1}{N_{-l}} \sum_{i_l = m} \nabla_{a} M (z_{i_1\dots i_K}^m | \beta^0, a^0_{i_1\dots i_k}) \right|  \right] \lesssim \sqrt{\frac{\log n_l}{N_{-l}}} + \frac{\log n_l \log (1 + N)}{N_{-l}}. \notag
\end{equation}
As the right side does not depend on the value of $\gamma^0$, the convergence of $T_1$ follows from the assumption of the theorem.

For the convergence of $T_2$, note that the max term in $T_2$ is bounded by Assumption \ref{as m added}$(c)$.
Because $\hat \beta \rightarrow_p \beta^0$ by Theorem 1, the convergence of $T_2$ holds.

For $T_3$, we first note that, by the triangle inequality,
\begin{align} 
	T_3 \leq \ &\sum_{k \neq l} \max_{m = 1, \dots, n_l} \frac{1}{N_{-l}} \sum_{i_l = m} \left| \nabla_{a a} M(z_{i_1 \dots i_K}^m | \bar \beta, \bar a_{i_1 \dots i_K}) (\hat a_{k}(\hat g_{k, i_k}) - a^0_{k}(g^0_{k, i_k}))\right| \notag \\
	&+ \max_{m = 1, \dots, n_l} \frac{1}{N_{-l}} \sum_{i_l = m} \left| \nabla_{a a} M(z_{i_1 \dots i_K}^m | \bar \beta, \bar a_{i_1 \dots i_K}) (\hat a_{l}(g^0_{l, m}) - a^0_{l}(g^0_{l, m}))\right|. \label{A_3 bound}
\end{align}
For the first term on the right side, the Cauchy-Schwarz inequality gives that, for each $k \neq l$, 
\begin{align} 
	&\max_{m = 1, \dots, n_l} \frac{1}{N_{-l}} \sum_{i_l = m} \left| \nabla_{a a} M(z_{i_1 \dots i_K}^m | \bar \beta, \bar a_{i_1 \dots i_K}) (\hat a_{k}(\hat g_{k, i_k}) - a^0_{k}(g^0_{k, i_k}))\right| \notag \\
	\leq \ & \max_{m = 1, \dots, n_l}  \sqrt{\frac{1}{N_{-l}} \sum_{i_l = m} \nabla_{aa} M(z_{i_1 \dots i_K}^m | \bar \beta, \bar a_{i_1 \dots i_K})^2}  \sqrt{ \frac{1}{N_{-l}} \sum_{i_l = m}  (\hat a_{k}(\hat g_{k, i_k}) - a^0_{k} (g^0_{k, i_k}))^2 } \notag \\
	= \ &  \sqrt{ \frac{1}{n_{k}} \sum_{i_k = 1}^{n_k}  (\hat a_{k} (\hat g_{k, i_k}) - a^0_{k}(g^0_{k, i_k}))^2 } \max_{m = 1, \dots, n_l} \sqrt{ \frac{1}{N_{-l}} \sum_{i_l = m} \nabla_{aa} M(z_{i_1 \dots i_K}^m | \bar \beta, \bar a_{i_1 \dots i_K})^2 }. \notag
\end{align}
By the proof of Theorem 1 and Assumption \ref{as m added}$(c)$, the last term converges to zero in probability.
Meanwhile, the second term on the right side of \eqref{A_3 bound} can be bounded by 
\[
	 \| \hat \alpha_l - \alpha^0_l \| \max_{m = 1, \dots, n_l} \frac{1}{N_{-l}} \sum_{i_l = m} |\nabla_{aa} M(z_{i_1 \dots i_K}^m | \bar \beta, \bar a_{i_1 \dots i_K})|. \notag 
\]
By Theorem 1 and Assumption \ref{as m added}$(c)$, this term converges to zero in probability. In view of \eqref{A_3 bound}, the convergence of $T_3$ follows.

Lastly, for $T_4$, $\mathbb E_{\gamma^0} [\nabla_{aa} M(z_{i_1 \dots i_K}^m | \beta, t)] = \nabla_{a} \mathbb E_{\gamma^0} [\nabla_{a} M(z_{i_1 \dots i_K}^m | \beta, t)]$
by Assumption \ref{as m added}$(a)$. Hence, the fundamental theorem of calculus and the triangle inequality implies that the max term in $T_4$ is bounded by
\begin{align}
   T_4 
	\leq & \max_{m = 1, \dots, n_l} \left|  \frac{1}{N_{-l}} \sum_{i_l = m} \left( \nabla_{a} M (z^m_{i_1 \dots i_K} | \hat \beta, \hat a_{i_1 \dots i_K}^{m, 0} + \ep_{a}/2)  \right. \right. \notag \\
	&\left. \left.  - \mathbb E_{\gamma^0} [\nabla_{a} M (z^m_{i_1 \dots i_K} | \beta, t)]|_{\beta = \hat \beta, t = \hat a^{m, 0}_{i_1 \dots i_K} + \ep_{a}/2}  \right) \right| \notag \\
	&+ \max_{m = 1, \dots, n_l} \left|  \frac{1}{N_{-l}} \sum_{i_l = m} \left( \nabla_{a} M (z^m_{i_1 \dots i_K} | \hat \beta, \hat a_{i_1 \dots i_K}^{m, 0} + \ep_{a}/4)  \right. \right. \notag \\
	&\left. \left.  - \mathbb E_{\gamma^0} [\nabla_{a} M (z^m_{i_1 \dots i_K} | \beta, t)]|_{\beta = \hat \beta, t = \hat a^{m, 0}_{i_1 \dots i_K} + \ep_{a}/4}  \right) \right|. \notag 
\end{align}
We only show the convergence of the fist maximum on the right side as the convergence for the other can be shown in the same manner. Let $\tilde \ep_{a} := \ep_{a}/2$ for brevity. By Assumption \ref{as m added}$(a)$, $(\beta, t) \mapsto  \mathbb E_{\gamma^0} [\nabla_{a} M (z^m_{i_1 \dots i_K} | \beta, t)]$ is continuously differentiable, and its derivative and expectation are interchangeable.
Hence, Taylor's theorem gives that
\begin{align} 
	&\max_{m = 1, \dots, n_l} \left|  \frac{1}{N_{-l}} \sum_{i_l = m} \left( \nabla_{a} M (z^m_{i_1 \dots i_K} | \hat \beta, \hat a_{i_1 \dots i_K}^{m, 0} + \tilde \ep_{a})  \right. \right. \notag \\
	&\left. \left.  - \mathbb E_{\gamma^0} [\nabla_{a} M (z^m_{i_1 \dots i_K} | \beta, t)]|_{\beta = \hat \beta, t = \hat a^{m, 0}_{i_1 \dots i_K} + \tilde \ep_{a}}  \right) \right| \notag \\
	= &\max_{m = 1, \dots, n_l} \left| \frac{1}{N_{-l}} \sum_{i_l = m} \left\{ \nabla_{a} M (z_{i_1 \dots i_K}^m | \beta^0, a^0_{i_1\dots i_K} + \tilde \ep_{a}) \right. \right. \notag \\
    &- \mathbb E_{\gamma^0} [\nabla_{a} M(z_{i_1 \dots i_K}^m | \beta^0,  a^0_{i_1\dots i_K} + \tilde \ep_{a})]  \notag \\
	& + (\nabla_{a \beta^{\intercal}} M(z_{i_1 \dots i_K}^m | \bar \beta, \bar a) - \mathbb E_{\gamma^0} [\nabla_{a \beta^{\intercal}}M(z_{i_1 \dots i_K}^m | \beta, t) ]|_{\beta = \bar \beta, t = \bar a}) (\hat \beta - \beta^0)  \notag \\
	&\left. \left. + (\nabla_{aa} M(z_{i_1 \dots i_K}^m | \bar \beta, \bar a) - \mathbb E_{\gamma^0} [\nabla_{aa}M(z_{i_1 \dots i_K}^m | \beta, t) ]|_{\beta = \bar \beta, t = \bar a}) (\hat a_{i_1 \dots i_K}^{m, 0} -  a^0_{i_1 \dots i_K}) \right\} \right|, \notag
\end{align}
where $\bar \beta$ lies between $\hat \beta$ and $\beta^0$, and $\bar a$ lies between $\hat a^{m, 0}_{i_1 \dots i_K} + \tilde \ep_{a}$ and $a^0_{i_1 \dots i_K} + \tilde \ep_{a}$ so that the value of $\bar a$ may depend on $i_1, \dots, i_K$.
Then, by the triangle inequality and a similar argument to that for the convergence of $T_3$, the last term in the above display is further bounded by
\begin{align} 
	&\max_{m= 1, \dots, n_l} \left|  \frac{1}{N_{-l}} \sum_{i_l = m} \nabla_{a} M (z_{i_1 \dots i_K}^m | \beta^0, a^0_{i_1\dots i_K} + \tilde \ep_{a}) - \mathbb E_{\gamma^0} [\nabla_{a} M(z_{i_1 \dots i_K}^m | \beta^0,  a^0_{i_1\dots i_K} + \tilde \ep_{a})] \right| \notag \\
	&+ \mathcal C \left( \| \hat \beta - \beta^0 \| + 
    \sum_{k \neq l} \left( \frac{1}{N_k} \sum_{i_k = 1}^{N_k} (\hat a_{k}(\hat g_{k, i_k}) - a^0_{k}(g^0_{k, i_k}))^2 \right)^{1/2} + \| \hat \alpha_l - \alpha^0_l \|\right). \notag
\end{align}
for some universal finite constant $\mathcal C > 0$.
The convergence of the maximum term follows from a similar argument to that for the convergence of $T_1$, while the convergence for the other terms follows from Theorem 1 and its proof. 
Hence, the convergence of $T_4$ follows, and the proof for $\mathbb P(\hat \gamma = \gamma^0) \rightarrow 1$ is complete. 

We proceed to deriving the asymptotic distribution of $\sqrt{N}(\hat \theta - \theta^0)$. To this end, we first define
\[
    Q^0 (\theta) := 
    \mathbb P_N M_{i_1 \dots i_K} (\beta, a_{1}(g^0_{1, i_1}) + \dots + a_{K}(g^0_{K, i_K}))
    - Pen(\alpha, \gamma^0).
\]
This is an infeasible loss function with known $\gamma^0$. 
It follows from an earlier part of this proof that $\hat \theta = {\rm argmax}_{\beta \in \mathcal B, \alpha \in \mathcal A^{\prod}} Q^0 (\theta)$ holds with probability approaching one.
By Assumption \ref{as parameter space}$(b)$ and $(c)$, and Theorem 1, $\hat \theta$ lies in an interior of its parameter space with probability approaching one. Thus, the first order condition gives that, with probability approaching one, 
\begin{equation}
    \sqrt{N} (\mathcal S(\hat \theta) - \mathcal S^{Pen} (\hat \theta) ) = 0, \label{foc}
\end{equation}
where 
\begin{align} 
&\mathcal S(\hat \theta) := \left( s_{\beta} (\hat \theta)^{\intercal}, s_{a_{1, 1}} (\hat \theta), \dots, s_{a_{1, G_1}}(\hat \theta), \dots, s_{a_{K, 1}} (\hat \theta), \dots, s_{a_{k, g_k}} (\hat \theta)\right)^{\intercal}, \notag \\
&\mathcal S^{Pen} (\hat \theta) :=
\left(0^{\intercal}_{p, 1}, s_{a_{1, 1}}^{Pen} (\hat \theta), \dots, s_{a_{1, G_1} }^{Pen}(\hat \theta), \dots, s_{a_{K, 1}}^{Pen} (\hat \theta), \dots, s_{a_{k, g_k}}^{Pen} (\hat \theta) \right)^{\intercal}, \notag
\end{align}
are score vectors corresponding to the leading term and the penalty term, respectively, with 
\begin{align}
    s_{\beta} (\hat \theta) := \ &\mathbb P_N \nabla_{\beta} M_{i_1 \dots i_K}( \hat \beta, \hat a_{1}(g^0_{1, i_1}) + \dots + \hat a_{K}(g^0_{K, i_K})), \notag \\
    s_{a_{k, g_k}} (\hat \theta) := \ &\mathbb P_N  1 \{ g^0_{k, i_k} = g_k \} \nabla_{a}M_{i_1 \dots i_K}( \hat \beta, \hat a_{1}(g^0_{1, i_1}) + \dots + \hat a_{K} (g^0_{K, i_K})), \notag \\
    s_{a_{k, g_k}}^{Pen} (\hat \theta) := \ &\lambda \left(\frac{1}{n_k} \sum_{i_k = 1}^{n_k} 1 \{ g^{0}_{k, i_k} = g_k\} \right)  \notag \\
    \ &\times \left( \sum_{l = (k-1)\lor 1}^{k \land (K-1)}(-1)^{k-l} \left( \frac{1}{n_{l}} \sum_{i_{l}=1}^{n_{l}} \hat a_{l}(g^0_{l, i_{l}}) - \frac{1}{n_{l+1}} \sum_{i_{l+1}=1}^{n_{l+1}} \hat a_{l+1}(g^0_{l+1, i_{l+1}}) \right) \right), \notag
\end{align}
for $k = 1, \dots, K$ and $g_k = 1, \dots, G_k$. 
By Lemma \ref{normalization}, $\mathcal S^{Pen}(\theta^0) = 0$. Hence, for \eqref{foc}, expanding $\mathcal S(\hat \theta) + \mathcal S^{Pen} (\hat \theta)$ around $\theta^0$ by Taylor's theorem yields that
\begin{equation}
    -\sqrt{N} \mathcal S (\theta^0) = (\mathcal J_{N} - \lambda \mathcal J_{Pen, N}) \sqrt{N} (\hat \theta - \theta^0), \label{foc score}
\end{equation}
holds with probability approaching one. Here $\mathcal J_N$ and $\lambda \mathcal J_{Pen, N}$ are the Hessian matrices collecting derivatives of $\mathcal S (\cdot)$ and $\mathcal S^{Pen} (\cdot)$, respectively, whose definitions are detailed later. To derive the asymptotic distribution of $\sqrt{N} (\hat \theta - \theta^0)$, we now divide the proof into the three steps: $(i)$ weak convergence of $\sqrt{N} \mathcal S(\theta^0)$, $(ii)$ probability convergence of $\mathcal J_N$ and $\mathcal J_N^{Pen}$, and $(iii)$ weak convergence of $\sqrt{N} (\hat \theta - \theta^0)$.

\underline{$(i)$ weak convergence of $\sqrt{N} \mathcal S(\theta^0)$.}
For $g_k = 1, \dots, G_k$ $(k = 1, \dots, K)$, define
\begin{equation}
    s_{g_1 \dots g_K} =
        \sqrt{N} \mathbb P_N 1 \{ g^0_{i_1 \dots i_K} = (g_k)_{k = 1}^{K} \} \nabla_{\beta_a} M_{i_1 \dots i_K}(\beta^0,  a^0_{i_1 \dots i_K}). \notag
\end{equation}
Accordingly, let $s_N$ denote $(p + 1) \prod_{k = 1}^K G_k$-dimensional vector concatenating $s_{g_1 \dots g_K}$ for all $g_k = 1, \dots, G_k \ (k = 1, \dots, K)$. Let $\mathcal L = (\mathcal L_{\beta}^{\intercal}, \mathcal L_{\alpha}^{\intercal})^{\intercal}$ be a non-random matrix of the following form:
\begin{align}
    \mathcal L_{\beta} := \ & (1, \dots, 1) \otimes (I_{p}, 0_{p, 1}), \label{lambda beta} \\ \mathcal L_{\alpha}  := \ & \sum_{k = 1}^{K} G_k \times \left\{ (p + 1) \prod_{k = 1}^K G_k \right\} \text{ matrix whose } g^*(p + 1) \text{-th column can be} \notag \\
    & \text{non-zero for } g^* = 1, \dots, \prod_{k = 1}^K G_k \text{ while the rest elements are zero}, \label{lambda alpha}
\end{align}
where $(1, \dots, 1)$ is a $1 \times \prod_{k=1}^K G_k$ matrix.
Then, we can write
$\sqrt{N} \mathcal S(\theta^0) = \mathcal L s_{N}$ for some $\mathcal L$ satisfying \eqref{lambda beta} and \eqref{lambda alpha}.

We derive the asymptotic distribution of $s_{N}$. To this end, let $t_{N}$ be a $(p + 1) \prod_{k=1}^{K} G_k$-dimensional real vector. We regard this $t_{N}$ as a concatenation of $p + 1$-dimensional vector $t_{g_1 \dots g_K}$ over all $g_k = 1, \dots, G_K \ (k =1, \dots, K)$ in the same way as $s_N$ is a concatenation of $s_{g_1 \dots g_K}$.
We can evaluate the characteristic function of $s_N$ as
\begin{align}
    \mathbb E[\exp(\sqrt{-1} t_N^{\intercal} s_N)] &= \mathbb E\left[\mathbb E_{\gamma^0} \left[\prod_{g_1 = 1}^{G_1} \dots \prod_{g_K = 1}^{G_K} \exp(\sqrt{-1} t_{g_1 \dots g_K}^{\intercal} s_{g_1 \dots g_K})\right]\right] \notag \\
    &= \mathbb E\left[ \prod_{g_1 = 1}^{G_1} \dots \prod_{g_K = 1}^{G_K}\mathbb E_{\gamma^0} [ \exp(\sqrt{-1} t_{g_1 \dots g_K}^{\intercal} s_{g_1 \dots g_K})]\right], \label{ch.f}
\end{align}
where the second equality follows from the conditional independence assumption on $z_{i_1 \dots i_K}$. For the inner conditional expectation, observe that
\begin{align}
    &\mathbb E_{\gamma^0} [ \exp(\sqrt{-1} t_{g_1 \dots g_K}^{\intercal} s_{g_1 \dots g_K})] \notag \\
    = \ &\prod_{g^0_{i_1 \dots i_K} = (g_k)_{k = 1}^K} \mathbb E_{\gamma^0} \left[ \exp \left\{
    \frac{\sqrt{-1}t_{g_1 \dots g_K}^{\intercal} \nabla_{\beta_a} M_{i_1 \dots i_K}(\beta^0, a^0_{i_1 \dots i_K})}{\sqrt{N}}\right\} \right], \label{ch.f conditional}
\end{align}
where the product $\prod_{g^0_{i_1 \dots i_K} = (g_k)_{k = 1}^K}$ is taken over all indices $i_1, \dots, i_K$ such that $g^0_{i_1 \dots i_K} = (g_k)_{k = 1}^K$.
By page 188 of \cite{Williams1991}, it holds that, for any real number $u$,
\[
    \left| \exp(\sqrt{-1} u) - \left(1 + \sqrt{-1} u - \frac{u^2}{2}\right) \right| \leq |u|^3/6.
\]
Applying this fact to \eqref{ch.f conditional}, in conjunction with Assumptions \ref{as m}$(a)$ and \ref{as m added}$(a)$, we have that
\begin{align}
    &\mathbb E_{\gamma^0} [ \exp(\sqrt{-1} t_{g_1 \dots g_K}^{\intercal} s_{g_1 \dots g_K})] \notag \\
    = \ &\prod_{g^0_{i_1 \dots i_K} = (g_k)_{k = 1}^K} \left( 1 - \frac{\mathbb E_{\gamma^0} \left[ \left\{ t_{g_1 \dots g_K}^{\intercal} \nabla_{\beta_a} M_{i_1 \dots i_K} (\beta^0, a^0_{i_1 \dots i_K}) \right\}^2\right]
    }{2N} + \mathbb E_{\gamma^0} [\rho_{i_1 \dots i_K}] \right), \label{ch.f. taylor}
\end{align}
where $\rho_{i_1 \dots i_K}$ is a random variable such that 
\begin{equation}
|\rho_{i_1 \dots i_K}| \leq \frac{ \left| t_{g_1 \dots g_K}^{\intercal} \nabla_{\beta_a} M_{i_1 \dots i_K} (\beta^0, a^0_{i_1 \dots i_K}) \right|^3}{6N^{3/2}}. \label{rho bound}
\end{equation}
Taking the logarithm of the right side of \eqref{ch.f. taylor} and applying Taylor's theorem yields that
\begin{align}
    &\log \mathbb E_{\gamma^0} [ \exp(\sqrt{-1} t_{g_1 \dots g_K}^{\intercal} s_{g_1 \dots g_K})] \notag \\
    = \ & \sum_{g^0_{i_1 \dots i_K} = (g_k)_{k = 1}^K} \left( - \frac{\mathbb E_{\gamma^0} \left[ \left\{ t_{g_1 \dots g_K}^{\intercal} \nabla_{\beta_a} M_{i_1 \dots i_K} (\beta^0, a^0_{i_1 \dots i_K}) \right\}^2\right]
    }{2N} + \mathbb E_{\gamma^0} [\rho_{i_1 \dots i_K}]\right) + o(1), \label{log taylor}
\end{align}
where the sum $\sum_{g^0_{i_1 \dots i_K} = (g_k)_{k = 1}^K}$ is taken over all indices $i_1, \dots, i_K$ such that $g^0_{i_1 \dots i_K} = (g_k)_{k = 1}^K$, and $o(1)$ term is due to \eqref{rho bound} and Assumption \ref{as m added}$(e)$. Using \eqref{rho bound} and Assumption \ref{as m added}$(e)$ again, $\sum_{g^0_{1, i_1} = g_1, \dots, g^0_{K, i_K} = g_K} \mathbb E_{\gamma^0}[\rho_{i_1 \dots i_K}] = o(1)$. Therefore, in view of \eqref{log taylor} and Assumption \ref{as m added}$(f)$, $\mathbb E_{\gamma^0} [\exp(\sqrt{-1} t^{\intercal}_{g_1 \dots g_K} s_{g_1 \dots g_K})] \rightarrow \exp(- t_{g_1 \dots g_K}^{\intercal} D_{g_1 \dots g_K} t_{g_1 \dots g_K}/2)$ almost surely. It follows from the dominated convergence theorem and \eqref{ch.f} that 
$\mathbb E[\exp(\sqrt{-1} t_N^{\intercal} s_N)]$ converges to $\prod_{g_1 = 1}^{G_1} \dots \prod_{g_K = 1}^{G_K} \exp(- t^{\intercal}_{g_1 \dots g_K} D_{g_1 \dots g_K} t_{g_1 \dots g_K}/2)$. By L\'{e}vy's continuity theorem and Cram\'{e}r-Wold device, $s_N \rightarrow_d N(0, \mathcal D)$, where $\mathcal D$ is a block-diagonal matrix whose diagonal submatrix corresponding to ``$s_{g_1 \dots g_K}$'' part is $D_{g_1 \dots g_K}$.

Given this convergence result and $\sqrt{N} \mathcal S(\theta^0) = \mathcal L s_N$, $\sqrt{N} \mathcal S(\theta^0) \rightarrow_d N(0, \mathcal L \mathcal D \mathcal L^{\intercal})$ follows, which completes this step.

\underline{$(ii)$ convergence of $\mathcal J_N$ and $\lambda \mathcal J_{Pen, N}$.}
We first show the convergence of $\mathcal J_N$. Observe that $\mathcal J_N$ has the form,
\[
    \mathcal J_N :=
    \begin{pmatrix}
        J^{\beta \beta}_N & J^{\beta \alpha_1}_N & \cdots & J^{\beta \alpha_K}_N \\
        J^{\alpha_1 \beta}_N & J^{\alpha_1 \alpha_1}_N & \cdots & J^{\alpha_1 \alpha_K}_N \\
        \vdots & \vdots & \ddots & \vdots \\
        J^{\alpha_K \beta}_N & J^{\alpha_K \alpha_1}_N & \cdots & J^{\alpha_K \alpha_K}_N
    \end{pmatrix}, \notag
\]
with the submatrices defined as
\begin{align}
    J^{\beta \beta}_N := \ & \mathbb P_N \nabla_{\beta \beta^{\intercal}} M_{i_1 \dots i_K} (\bar \beta,  \bar a_{i_1 \dots i_K}), \notag \\
    J^{\beta \alpha_k}_N := \ & p \times G_k \ \text{matrix whose } g \text{ th column is } \mathbb P_N 1 \{ g^0_{k, i_k} = g \} \nabla_{\beta a} M_{i_1 \dots i_K}(\bar \beta, \bar a_{i_1 \dots i_K}), \notag \\
    J^{\alpha_k \beta}_N := \ & G_k \times p \ \text{matrix whose } g \text{ th row is } \mathbb P_N 1 \{ g^0_{k, i_k} = g \} \nabla_{a \beta^\intercal} M_{i_1 \dots i_K}(\bar \beta, \bar a_{i_1 \dots i_K}), \notag \\
    J^{\alpha_k \alpha_{l}}_N := \ & G_k \times G_{l} \ \text{matrix whose } (g, h)\text{-th element} \notag \\ &\text{is } \mathbb P_N 1 \{g^0_{k, i_k} = g \} 1\{ g^0_{l, i_{l}} = h \} \nabla_{a a} M_{i_1 \dots i_K}(\bar \beta, \bar a_{i_1 \dots i_K}), \notag
\end{align}
where $\bar a_{i_1 \dots i_K} = \bar a_{1}(g^0_{1, i_1}) + \dots +\bar a_{K} (g^0_{K, i_K})$, and $\bar \beta$ and $\bar a_{k}(g_k)$ lie on the paths between $\hat \beta$ and $\beta^0$, and between $\hat a_{k}(g_k)$ and $a^0_{k}(g_k)$, respectively for $k = 1, \dots, K$ and $g_k = 1, \dots, G_k$. We allow the values of $\bar \beta$ and $\bar a_{k}(g_k)$ to vary across rows of $\mathcal J_N$.

We first make the decomposition $\mathcal J_N = \sum_{g_1 = 1}^{G_1} \dots \sum_{g_K = 1}^{G_K} \mathcal J_{N, g_1 \dots g_K},$ with
\begin{equation}
    \mathcal J_{N, g_1 \dots g_K} = 
    \begin{pmatrix}
        J^{\beta \beta}_{N, g_1 \dots g_K} & J^{\beta \alpha_1}_{N, g_1 \dots g_K} & \cdots & J^{\beta \alpha_K}_{N, g_1 \dots g_K} \\
        J^{\alpha_1 \beta}_{N, g_1 \dots g_K} & J^{\alpha_1 \alpha_1}_{N, g_1 \dots g_K} & \cdots & J^{\alpha_1 \alpha_K}_{N, g_1 \dots g_K} \\
        \vdots & \vdots & \ddots & \vdots \\
        J^{\alpha_K \beta}_{N, g_1 \dots g_K} & J^{\alpha_K \alpha_1}_{N, g_1 \dots g_K} & \cdots & J^{\alpha_K \alpha_K}_{N, g_1 \dots g_K}
    \end{pmatrix}, \notag
\end{equation}
where the submatrices are defined as
\begin{align}
    J^{\beta \beta}_{N, g_1 \dots g_K} := \ & \mathbb P_{N} 1 \{ g^0_{i_1 \dots i_K} = (g_k)_{k=1}^K\}\nabla_{\beta \beta^{\intercal}} M_{i_1 \dots i_K} (\bar \beta, \bar a_{i_1 \dots i_K}), \notag \\
    J^{\beta \alpha_k}_{N, g_1 \dots g_K} := \ & p \times G_k \ \text{matrix whose } g_k \text{-th column is} \notag \\
    & \mathbb P_N 1 \{ g^0_{i_1 \dots i_K} = (g_k)_{k=1}^K\} \nabla_{\beta a} M_{i_1 \dots i_K}(\bar \beta, \bar a_{i_1 \dots i_K}), \notag \\ &\text{while the rest elements are } 0, \notag \\
    J^{\alpha_k \beta}_{N, g_1 \dots g_K} := \ & G_k \times p  \ \text{matrix whose } g_k \text{-th row is} \notag \\
    & \mathbb P_N 1 \{ g^0_{i_1 \dots i_K} = (g_k)_{k=1}^K\} \nabla_{a \beta^{\intercal}} M_{i_1 \dots i_K}(\bar \beta, \bar a_{i_1 \dots i_K}), \notag \\ &\text{while the rest elements are } 0, \notag \\
   J_{N, g_1 \dots g_K}^{\alpha_k \alpha_l} := \ & G_k \times G_{l} \ \text{matrix whose } (g_k, g_l)\text{-th element is} \notag \\ &\mathbb P_N 1 \{ g^0_{i_1 \dots i_K} = (g_k)_{k=1}^K\} \nabla_{a a} M_{i_1 \dots i_K}(\bar \beta, \bar a_{i_1 \dots i_K}), \notag \\
    &\text{while the rest elements are } 0. \notag
\end{align}
Here, $\bar \beta$ and $\bar a_{i_1 \dots i_K}$ are defined similarly as in $\mathcal J_N$.

We consider the convergence of $\mathcal J_{N, g_1 \dots g_K}$ for each $g_k = 1, \dots, G_k \ (k = 1, \dots, K)$.

Define $J^{\beta \beta}_{g_1 \dots g_K}$ to be an upper-left $p \times p$ submatrix of $J_{g_1 \dots g_K}$ in Assumption \ref{as m added}$(f)$.
We first show that the submatrix $J^{\beta \beta}_{N, g_1 \dots g_K}$ of $\mathcal J_{N, g_1 \dots g_K}$ converges in probability to $J^{\beta \beta}_{g_1 \dots g_K}$. Collect $g = (g_1, \dots g_K)^{\intercal}$. The triangle inequality gives that 
\begin{align}
    &\|J^{\beta \beta}_{N, g_1 \dots g_K} - J^{\beta \beta}_{g_1 \dots g_K}\| \notag \\ 
    \leq \ &\left\|\mathbb P_N 1 \{ g^0_{i_1 \dots i_K} = (g_k)_{k=1}^K\} \nabla_{\beta \beta} M_{i_1 \dots i_K} (\bar \beta, \bar a_{i_1 \dots i_K}) \right. \notag \\
     & - \left. \frac{1}{N} \sum_{i_1, \dots i_K} 1 \{ g^0_{i_1 \dots i_K} = (g_k)_{k=1}^K\} \mathbb E_{\gamma_0} [\nabla_{\beta \beta}M_{i_1 \dots i_K} (\beta, a)]|_{\beta = \bar \beta, a = \bar a_{i_1 \dots i_K}}  \right\| \notag \\
     & + \left\| \frac{1}{N} \sum_{i_1 \dots i_K} 1 \{ g^0_{i_1 \dots i_K} = (g_k)_{k=1}^K\} (\mathbb E_{\gamma_0} [\nabla_{\beta \beta}M_{i_1 \dots i_K} (\beta, a)]|_{\beta = \bar \beta, a = \bar a_{i_1 \dots i_K}} \right. \notag \\
     & \left. - \mathbb E_{\gamma_0} [\nabla_{\beta \beta} M_{i_1 \dots i_K} (\beta^0, \tilde a^0_{i_1 \dots i_K})])  \right\| \notag \\
     & + \left\| \frac{1}{N} \sum_{i_1 \dots i_K} 1 \{ g^0_{i_1 \dots i_K} = (g_k)_{k=1}^K \} \mathbb E_{\gamma_0} [\nabla_{\beta \beta}M_{i_1 \dots i_K} (\beta^0, \tilde a^0_{i_1 \dots i_K})] - J^{\beta \beta}_{g_1 \dots g_K} \right\|. \label{hessian triangle}
\end{align}
We show the convergence of each of the three terms on the right side.
The first term is bounded, up to a finite constant, by
\begin{equation}
    \sup_{\beta \in \mathcal B, \alpha \in  \mathcal A^{\prod}} \| (\mathbb P_N - P) 1 \{ g^0_{i_1 \dots i_K} = g \} \nabla_{\beta \beta} M_{i_1 \dots i_K} (\beta, a_{1}(g^0_{1, i_1}) + \dots + a_{K}(g^0_{K, i_K}))\|. \notag
\end{equation}
This term converges to zero in probability by a similar argument to step 2 of the proof of Theorem 1, in conjunction with Assumption \ref{as m added}$(g)$. 
Subsequently, it follows from Assumption \ref{as m added}$(d)$ that the second term on the right side of \eqref{hessian triangle} is bounded by 
\[\mathcal C_3 \left( \| \bar \beta - \beta^0 \| + \sum_{k = 1}^K \| \bar a_{k}(g_k) - a^0_{k}(g_k)\| \right),\] which converges to zero in probability by the definition of $\bar \beta$ and $\bar a_{k}(g_k)$. Finally, the convergence of the third term on the right side of \eqref{hessian triangle} follows from Assumption \ref{as m added}$(f)$. Now the convergence of $J^{\beta \beta}_{N, g_1 \dots g_N}$ to $J^{\beta \beta}_{g_1 \dots g_K}$ has been shown.

Similarly, let $J^{\beta a}_{g_1 \dots g_K}$ and $J^{a a}_{g_1 \dots g_K}$ be an upper-right $p \times 1$  and a lower-right $1 \times 1$ submatrices of $J_{g_1 \dots g_k}$ in Assumption \ref{as m added}$(f)$, respectively. Then by a similar argument to the above, submatrices $J^{\beta \alpha_k}_{N, g_1 \dots g_K}$, $J^{\alpha_k \beta}_{N, g_1 \dots, g_K}$, and $J_{N, g_1 \dots g_K}^{\alpha_k \alpha_l}$ of $\mathcal J_{N, g_1 \dots g_K}$ converge in probability to $J^{\beta \alpha_k}_{g_1 \dots g_K}$, $J^{\alpha_k \beta}_{g_1 \dots g_K}$, and $J_{g_1 \dots g_K}^{\alpha_k \alpha_l}$, defined below, respectively:
\begin{align}
    J^{\beta \alpha_k}_{g_1 \dots g_K} := \ & p \times G_k \ \text{matrix whose } g_k \text{-th column is } J_{g_1 \dots g_K}^{\beta a}, \text{while the rest elements are } 0, \notag \\
    J^{\alpha_k \beta}_{g_1 \dots g_K} := \ & G_k \times p  \ \text{matrix whose } g_k \text{-th row is } (J_{g_1 \dots g_K}^{\beta a})^{\intercal},  \text{while the rest elements are } 0, \notag \\
    J_{g_1 \dots g_K}^{\alpha_k \alpha_l} := \ & G_k \times G_{l} \ \text{matrix whose } (g_k, g_l)\text{-th element is } J^{a a}_{g_1 \dots g_K}, \notag \\
    &\text{while the rest elements are } 0. \notag
\end{align}

Combining these convergence results for $J_{N, g_1 \dots g_N}^{\beta \beta}$, $J^{\beta \alpha_k}_{N, g_1 \dots g_K}$, $J^{\alpha_k \beta}_{N, g_1 \dots g_K}$, and $J^{\alpha_k \alpha_l}_{N, g_1 \dots g_K}$, we conclude that $\mathcal J_{N, g_1 \dots g_K}$ converges in probability to the following matrix:
\begin{equation}
    \mathcal J_{g_1 \dots g_K} =
    \begin{pmatrix}
        J^{\beta \beta}_{g_1 \dots g_K} & J^{\beta \alpha_1}_{g_1 \dots g_K} & \cdots & J^{\beta \alpha_K}_{g_1 \dots g_K} \\
        J^{\alpha_1 \beta}_{g_1 \dots g_K} & J^{\alpha_1 \alpha_1}_{g_1 \dots g_K} & \cdots & J^{\alpha_1 \alpha_K}_{g_1 \dots g_K} \\
        \vdots & \vdots & \ddots & \vdots \\
        J^{\alpha_K \beta}_{g_1 \dots g_K} & J^{\alpha_K \alpha_1}_{g_1 \dots g_K} & \cdots & J^{\alpha_K \alpha_K}_{g_1 \dots g_K}
    \end{pmatrix}. \label{I g_1, dots g_K}
\end{equation}
Accordingly, $\mathcal J_N$ converges in probability to $\mathcal J := \sum_{g_1 = 1}^{G_1} \dots \sum_{g_K = 1}^{G_K} \mathcal J_{g_1 \dots g_K}$.

We move on to the convergence of $\lambda \mathcal J_{Pen,  N}$. Here $\mathcal J_{Pen, N}$ is a symmetric matrix whose upper-right part has the following form:
\begin{equation}
\begin{pmatrix}
        0 & 0 &0 & 0 & \cdots & \cdots & 0 \\
        \ddots & J^{\alpha_1 \alpha_1}_{ Pen, N} & J^{\alpha_1 \alpha_2}_{Pen, N} & 0 & \cdots & \cdots & 0  \\ 
        \ddots & \ddots & J^{\alpha_2 \alpha_2}_{Pen, N} & J^{\alpha_2 \alpha_3}_{Pen, N} & 0 & \cdots & 0  \\
        \ddots & \ddots & \ddots & \ddots & \ddots & \ddots & \vdots  \\
        \ddots & \ddots & \ddots & \ddots & \ddots & \ddots & 0   \\
        \ddots & \ddots & \ddots & \ddots & \ddots & J^{\alpha_{K-1} \alpha_{K-1}}_{Pen, N} & J^{\alpha_{K-1} \alpha_K}_{Pen, N} \\
        \ddots & \ddots & \ddots & \ddots & \ddots & \ddots & J^{\alpha_K \alpha_K}_{Pen, N}
    \end{pmatrix}, \label{H N penalty}
\end{equation}
where, for $k = 1, \dots, K$, $J_{Pen, N}^{\alpha_k \alpha_k}$ is a $G_k \times G_k$ matrix whose $(g, h)$-th entry is given by
\[
2^{1 \{ 2 \leq k \leq K-1 \}} \left( \frac{1}{n_k} \sum_{i_k = 1}^{n_k} 1 \{ g^0_{k, i_k} = g \} \right) \left(  \frac{1}{n_k} \sum_{i_k = 1}^{n_k} 1 \{ g^0_{k, i_k} = h \} \right),
\]
while, for $l = 1, \dots, K-1$, $J_{ Pen, N}^{\alpha_l \alpha_{l+1}}$ is a $G_{l} \times G_{l+1}$ matrix whose $(g, h)$-th entry is given by
\[
    -  \left( \frac{1}{n_l} \sum_{i_l = 1}^{n_l}  1 \{ g^0_{l, i_l} = g \} \right) \left( \frac{1}{n_{l+1}} \sum_{i_{l+1}}^{n_{l+1}} 1 \{ g^0_{l+1, i_{l+1}} = h \} \right).
\]
It then follows from Assumption \ref{as group N}$(a)$ that $\lambda \mathcal J_{Pen, N}$ converges in probability to a symmetric matrix $\lambda \mathcal J_{Pen}$ with
\begin{equation}
\mathcal J_{Pen} :=
\begin{pmatrix}
        0 & 0 &0 & 0 & \cdots & \cdots & 0 \\
        \ddots & J^{\alpha_1 \alpha_1}_{Pen} & J^{\alpha_1 \alpha_2}_{Pen} & 0 & \cdots & \cdots & 0  \\ 
        \ddots & \ddots & J^{\alpha_2 \alpha_2}_{Pen} & J^{\alpha_2 \alpha_3}_{Pen} & 0 & \cdots & 0  \\
        \ddots & \ddots & \ddots & \ddots & \ddots & \ddots & \vdots  \\
        \ddots & \ddots & \ddots & \ddots & \ddots & \ddots & 0   \\
        \ddots & \ddots & \ddots & \ddots & \ddots & J^{\alpha_{K-1} \alpha_{K-1}}_{Pen} & J^{\alpha_{K-1} \alpha_K}_{Pen} \\
        \ddots & \ddots & \ddots & \ddots & \ddots & \ddots & J^{\alpha_K \alpha_K}_{Pen}
    \end{pmatrix}, \label{H pen}
\end{equation}
where, for $k=1, \dots, K$, $J^{\alpha_k \alpha_k}_{Pen}$ is a $G_k \times G_k$ matrix whose $(g, h)$-th element is given by $2^{1 \{ 2 \leq k \leq K-1 \}}  \pi_{k, g} \pi_{k, h}$, while, for $l = 1, \dots, K-1$, $H^{\alpha_l \alpha_{l+1}}_{Pen}$ is a $G_l \times G_{l+1}$ matrix whose $(g, h)$-th entry is given by $- \pi_{l, g} \pi_{l+1, h}$.

\underline{$(iii)$ weak convergence of $\sqrt{N} (\hat \theta - \theta^0)$.}
We first show the negative definiteness of $\mathcal J - \lambda \mathcal J_{Pen}$. Let $b_c := (b^{\intercal}, c_1^{\intercal}, \dots, c_K^{\intercal})^{\intercal}$ be a real vector such that $b \in \mathbb R^{p}$ and $c_k := (c_{k, 1}, \dots c_{k, G_k})^{\intercal} \in \mathbb R^{G_k}$ $(k = 1, \dots, K)$. Collecting $\pi_k := (\pi_{k, 1}, \dots, \pi_{k, G_k})^{\intercal}$ for $k = 1, \dots, K$, a straightforward calculation shows that
\begin{equation}
    b_c^{\intercal} \lambda \mathcal J_{Pen} b_c = \lambda \sum_{k=1}^{K-1} \{ (c_k^{\intercal} \pi_k)^2 - 2 (c_k^{\intercal} \pi_k) (c_{k+1}^{\intercal} \pi_{k+1}) + (c_{k+1}^{\intercal} \pi_{k+1})^2 \} \geq 0. \label{pen mat positive}
\end{equation}
Furthermore, we have that $b_c^{\intercal} \mathcal J b_c = \sum_{g_1 = 1}^{G_1} \dots \sum_{g_K = 1}^{G_K} b_c^{\intercal} \mathcal J_{g_1 \dots g_K} b_c$. For any $g_1, \dots, g_K$ with $g_k = 1, \dots, G_k \ (k = 1, \dots, K)$, in view of \eqref{I g_1, dots g_K}, we have that
\begin{equation}
b_c^{\intercal} \mathcal J_{g_1 \dots g_K} b_c = b^{\intercal} J^{\beta \beta}_{g_1 \dots g_K} b + 2 b^{\intercal} J^{\beta a}_{g_1 \dots g_K} \left( \sum_{k = 1}^K c_{k, g_k} \right) + J^{aa}_{g_1 \dots g_K} \left( \sum_{k = 1}^K c_{k, g_k} \right)^2. \label{g mat positive}
\end{equation}
Because $J^{\beta \beta}_{g_1 \dots g_K}$, $J^{\beta a}_{g_1 \dots g_K}$, and $J^{aa}_{g_1 \dots g_K}$ are submatrices of $J_{g_1 \dots g_K}$ in Assumption \ref{as m added}$(f)$, the above display is non-positive, and zero only if $b = 0$ and $\sum_{k = 1}^K c_{k, g_k} = 0$ by the negative definiteness of $J_{g_1 \dots g_K}$. Therefore, by \eqref{pen mat positive} and \eqref{g mat positive}, $b_c^{\intercal} (\mathcal J - \lambda \mathcal J_{Pen}) b_c \leq 0$, and equality holds only if $b = 0$ and $\sum_{k = 1}^K c_{k, g_k} = 0$ for any $g_1, \dots, g_K \ (g_k = 1, \dots, G_k, k = 1, \dots, K)$. Assume that $b_c^{\intercal} (\mathcal J - \lambda \mathcal J_{Pen}) b_c = 0$, and thus $b = 0$ and $\sum_{k = 1}^K c_{k, g_k} = 0$ for all $g_1, \dots, g_K$. By a straightforward calculation, observe that $c_{k, 1} = \dots = c_{k, G_k}$ for any $k = 1, \dots, K$. Because $\pi_{k, 1} + \dots + \pi_{k, G_k} = 1$ and $b_c^{\intercal} \lambda \mathcal J_{Pen} b_c = 0$ by assumption, \eqref{pen mat positive} leads to $\sum_{k = 1}^{K-1} (c_{k, 1} - c_{k+1, 1})^2 = 0$. This implies that all the elements of $c_1, \dots, c_K$ are the same. In view of $\sum_{k = 1}^K c_{k, g_k} = 0$, we conclude that $c_k = 0$ for all $k = 1, \dots, K$, and the negative definiteness of $\mathcal J - \lambda \mathcal J_{Pen}$ is proven.

Because $\mathcal J - \lambda \mathcal J_{Pen}$ is negative definite and thus nonsingular, \eqref{foc score} and steps $(i)$ and $(ii)$ in conjunction with Slutsky's theorem yield that 
\[
    \sqrt{N} (\hat \theta - \theta^0) \rightarrow_d N(0, (\mathcal J - \lambda \mathcal J_{Pen})^{-1} \mathcal L \mathcal D \mathcal L^{\intercal} (\mathcal J - \lambda \mathcal J_{Pen})^{-1} ).
\]
This finishes step $(iii)$.

We conclude the proof by showing that the asymptotic covariance matrix of $\sqrt{N} (\hat \beta - \beta^0)$ is positive definite. Note that this matrix is an upper left $p \times p$ submatrix of $(\mathcal J - \lambda \mathcal J_{Pen})^{-1} \mathcal L \mathcal D \mathcal L^{\intercal} (\mathcal J - \lambda \mathcal J_{Pen})^{-1}$. Let $b_0 := \left( b^{\intercal}, 0^{\intercal}_{\sum_{k=1}^{K} G_k \times 1} \right)$ be a $p + \sum_{k = 1}^{K} G_k$-dimensional real vector such that $b \neq 0$.
Suppose that $L_{1}$ and $L_{2}$ are the $p \times p$ upper-left and the $\sum_{k = 1}^K G_k \times p$ lower-left submatrices of $(\mathcal J - \lambda \mathcal J_{Pen})^{-1}$, respectively. Then we have
\begin{equation}
    \mathcal L^{\intercal} (\mathcal J - \lambda \mathcal J_{Pen})^{-1}b_0 = \mathcal L^{\intercal}((L_1 b)^{\intercal}, (L_2 b)^{\intercal})^{\intercal} = \mathcal L_{\beta}^{\intercal}  L_1 b + \mathcal L_{\alpha}^{\intercal} L_2 b. \label{matrix manipulation}
\end{equation}
As $(\mathcal J - \lambda \mathcal J_{Pen})^{-1}$ is negative definite, so is $L_1$. Hence, $L_1 b \neq 0$, by which $\mathcal L_{\beta}^{\intercal} L_1 b \neq 0$ follows because the columns of $\mathcal L_\beta^{\intercal}$ are linearly independent from the expression \eqref{lambda beta}. By the expressions \eqref{lambda beta} and \eqref{lambda alpha}, $\mathcal L_{\beta}^{\intercal} L_1 b$ is not in the column space of $\mathcal L_{\alpha}^{\intercal}$. Thus, $\mathcal L_{\beta}^{\intercal} L_1 b + \mathcal L_{\alpha}^{\intercal} L_2 b \neq 0$. 
In view of \eqref{matrix manipulation} and the positive definiteness of $\mathcal D$ from Assumption \ref{as m added}$(f)$, we have that $b_0^{\intercal} (\mathcal J - \lambda \mathcal J_{Pen})^{-1} \mathcal L \mathcal D \mathcal L^{\intercal} (\mathcal J - \lambda \mathcal J_{Pen})^{-1} b_0 > 0$. This shows the positive definiteness of the asymptotic covariance matrix of $\sqrt{N} (\hat \beta - \beta^0)$, and the proof is complete.
\end{proof}

\section{Auxiliary results}\label{app:lemma}

Lemmas \ref{inverse of W} is about the normalization matrix $W$ defined in Example 1 of the main text.
\begin{lem}\label{inverse of W}
    Let $\tilde W = (\tilde w_1^{\intercal}, \dots, \tilde w_K^{\intercal})^{\intercal}$ be a $k \times k$ matrix whose row vectors $\tilde w_k$ has the following expression: $\tilde w_1 = (1/K, 1 - 1/K, 1-2/K, \dots, 1 - (K-2)/K, 1/K)$,
    $\tilde w_k = (1/K, -1/K, \dots, -(k-1)/K, 1 - k/K, \dots, 1 - (K-2)/K, 1/K)$ for $k = 2, \dots, K-1$, and $\tilde w_K = (1/K, -1/K, \dots, -(K-1)/K)$. Then $\tilde W W = I_{K}$, where $W$ is defined in Example 1 in the main text.
\end{lem}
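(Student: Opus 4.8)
The plan is to recognize $W$ as the matrix of an elementary linear map and invert it explicitly, rather than to multiply $\tilde W$ and $W$ entrywise. Since $\tilde W$ and $W$ are both $K\times K$, it suffices to prove $W\tilde W = I_K$, which will follow once we show that for every right-hand side $v=(v_0,v_1,\dots,v_{K-1})^{\intercal}\in\mathbb R^K$ the unique solution of $W\zeta=v$ is $\zeta=\tilde W v$. Reading off the rows of $W$ from \eqref{W}, the system $W\zeta=v$ is equivalent to the $K$ scalar conditions $\sum_{\ell=1}^K\zeta_\ell=v_0$ together with $\zeta_\ell-\zeta_{\ell+1}=v_\ell$ for $\ell=1,\dots,K-1$.

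First I would solve the difference equations by telescoping: $\zeta_\ell=\zeta_K+\sum_{m=\ell}^{K-1}v_m$ for all $\ell$. Summing this over $\ell=1,\dots,K$ and using the first condition gives $K\zeta_K+\sum_{m=1}^{K-1}m\,v_m=v_0$, hence $\zeta_K=\tfrac1K\bigl(v_0-\sum_{m=1}^{K-1}m\,v_m\bigr)$, and back-substituting yields
\[
\zeta_\ell=\frac{v_0}{K}+\sum_{m=1}^{K-1}\Bigl(\mathbf 1\{m\ge\ell\}-\frac{m}{K}\Bigr)v_m,\qquad \ell=1,\dots,K.
\]
This identity does two things at once: it shows $W\zeta=v$ has a unique solution for every $v$, so that $W$ is invertible, and it exhibits $\zeta$ as a linear function of $v$ whose coefficient of $v_0$ is $1/K$ and whose coefficient of $v_m$ equals $-m/K$ when $m<\ell$ and $1-m/K$ when $\ell\le m\le K-1$ (the latter being $1/K$ at $m=K-1$). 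Matching these coefficients against the definitions of $\tilde w_1$, of $\tilde w_k$ for $2\le k\le K-1$, and of $\tilde w_K$ in the statement shows that the $\ell$-th coordinate of $\tilde W v$ is exactly $\zeta_\ell$; therefore $\zeta=\tilde W v$ for all $v$, so $W\tilde W=I_K$ and hence $\tilde W W=I_K$.

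An alternative, purely computational route is to note that the $j$-th column of $W$ equals $e_1+e_2$ for $j=1$, equals $e_1-e_j+e_{j+1}$ for $2\le j\le K-1$, and equals $e_1-e_K$ for $j=K$, so each entry $(\tilde W W)_{kj}=\tilde w_k^{\intercal}(\text{$j$-th column of }W)$ is a sum of at most three entries of $\tilde w_k$ that telescopes to $\delta_{kj}$. Either way, I expect the only real difficulty to be bookkeeping: keeping the three regimes in the piecewise description of $\tilde w_k$ consistent across the degenerate cases $k=1$ and $k=K$, and in particular checking that the displayed formula reproduces $\tilde w_K=(1/K,-1/K,\dots,-(K-1)/K)$, whose last entry departs from the ``$1/K$ at the end'' pattern because the index $m=K$ does not occur among $v_1,\dots,v_{K-1}$. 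There is no analytic content; once the telescoping relation $\zeta_\ell=\zeta_K+\sum_{m=\ell}^{K-1}v_m$ is in hand the rest is routine algebra.
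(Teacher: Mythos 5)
Your proof is correct. The paper's own ``proof'' is a one-line assertion that the identity ``can be verified by simple matrix multiplication $\tilde W W$,'' so you have done strictly more work, and by a genuinely different route: rather than checking the product entrywise, you read $W\zeta=v$ as the system $\sum_{\ell}\zeta_\ell=v_0$, $\zeta_\ell-\zeta_{\ell+1}=v_\ell$, solve it by telescoping, and identify the resulting solution operator with $\tilde W$. I checked the bookkeeping: the telescoped formula $\zeta_\ell=\frac{v_0}{K}+\sum_{m=1}^{K-1}\bigl(\mathbf 1\{m\ge\ell\}-\frac{m}{K}\bigr)v_m$ does reproduce all three regimes of $\tilde w_k$, including the endpoint cases ($1-(K-1)/K=1/K$ for the last entry of $\tilde w_1,\dots,\tilde w_{K-1}$, and the absence of any ``$1-m/K$'' terms in $\tilde w_K$), and the passage from $W\tilde W=I_K$ to $\tilde W W=I_K$ for square matrices is standard. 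What your approach buys beyond the paper's is twofold: it proves invertibility of $W$ as a byproduct rather than assuming the product comes out right, and it explains \emph{why} $\tilde W$ has the stated form, which is exactly the information Example~\ref{exm likelihood} actually uses when it solves $W\zeta=(0,-\eta_1,\dots,-\eta_{K-1})^{\intercal}$ and asserts that each $\zeta_k$ is a linear combination with bounded coefficients. The only blemish is inherited from the statement itself (the ``$k\times k$'' should read ``$K\times K$''); your argument treats the matrices as $K\times K$ throughout, which is what is intended.
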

\begin{proof}
    This can be verified by simple matrix multiplication $\tilde W W$.
\end{proof}

Lemma \ref{normalization} suggests that the normalization imposed by the penalty term set $Pen (\alpha^0, \gamma^0)$ exactly to zero.
This lemma is useful in the proof of Theorem 1. 

\begin{lem} \label{normalization}
	Assume Assumptions \ref{as parameter space} and \ref{as m}. Then $Pen ( \alpha^0, \gamma^0) = 0$.
\end{lem}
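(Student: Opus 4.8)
The plan is to use the first-order optimality of $\Psi^0$ guaranteed by \eqref{population minimization}, together with the interiority provided by Assumptions \ref{as parameter space}$(b)$ and $(c)$. Since $(\beta^0,\tilde\alpha^0)$ lies in the interior of $\mathcal B\times\mathcal A^{\prod}$ and, by Assumption \ref{as m}$(a)$ and the smoothness of the (polynomial) penalty, the map $(\beta,\alpha)\mapsto\mathbb E[Q((\beta^{\intercal},\alpha^{\intercal},{\gamma^0}^{\intercal})^{\intercal})|\gamma^0]$ is continuously differentiable in a neighborhood of $(\beta^0,\tilde\alpha^0)$, maximality forces its $\alpha$-gradient to vanish at $\tilde\alpha^0$. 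Decomposing $\mathbb E[Q|\gamma^0]=\mathbb E[Q^*|\gamma^0]-Pen$, I will show (i) the $\alpha$-gradient of $\mathbb E[Q^*|\gamma^0]$ already vanishes at $\tilde\alpha^0$, hence (ii) the $\alpha$-gradient of $Pen(\cdot,\gamma^0)$ vanishes there too, and finally (iii) deduce from the explicit form of this gradient that the layer-mean differences entering the penalty are all zero.

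For step (i): each summand of $Q^*$ in \eqref{Q} depends on $\alpha$ only through the sum $a_{1,g_{1,i_1}}+\dots+a_{K,g_{K,i_K}}$, and by Assumption \ref{as m}$(a)$ the function $(\beta,a)\mapsto\mathbb E[M(z_{i_1\dots i_K}|\beta,a)|\gamma^0]$ attains its maximum over $\mathcal B\times\mathcal A^{+}$ at the point $(\beta^0,\tilde a^0_{1,g^0_{1,i_1}}+\dots+\tilde a^0_{K,g^0_{K,i_K}})$, which is interior to $\mathcal A^{+}$ because an $\ep_a$-neighborhood of each $\tilde a^0_{k,g_k}$ lies in $\mathcal A$; consequently its partial derivative with respect to $a$ vanishes there. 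Setting $\gamma=\gamma^0$ and $\alpha=\tilde\alpha^0$, the argument of every summand equals its own maximizer, so the chain rule gives $\nabla_\alpha\mathbb E[Q^*((\beta^{0\,\intercal},\alpha^{\intercal},{\gamma^0}^{\intercal})^{\intercal})|\gamma^0]\big|_{\alpha=\tilde\alpha^0}=0$. This yields (i), and therefore (ii).

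For step (iii): write $\delta_k:=\frac{1}{n_k}\sum_{i_k=1}^{n_k}\tilde a^0_{k,g^0_{k,i_k}}-\frac{1}{n_{k+1}}\sum_{i_{k+1}=1}^{n_{k+1}}\tilde a^0_{k+1,g^0_{k+1,i_{k+1}}}$, so that $Pen(\tilde\alpha^0,\gamma^0)=\frac{\lambda}{2}\sum_{k=1}^{K-1}\delta_k^2$ and it suffices to prove $\delta_k=0$ for all $k$. With $n_{k,g}:=\#\{i_k:g^0_{k,i_k}=g\}$, one has $\frac{1}{n_k}\sum_{i_k}a_{k,g^0_{k,i_k}}=\sum_{g=1}^{G_k}(n_{k,g}/n_k)\,a_{k,g}$, and a direct differentiation of the penalty gives, at $\alpha=\tilde\alpha^0$,
\[
\frac{\partial}{\partial a_{k,g}}Pen(\cdot,\gamma^0)=\lambda\,\frac{n_{k,g}}{n_k}\Bigl(1\{k\leq K-1\}\,\delta_k-1\{k\geq2\}\,\delta_{k-1}\Bigr),\qquad g=1,\dots,G_k,\ k=1,\dots,K.
\]
Since $\sum_g n_{k,g}=n_k\geq1$, every layer $k$ contains a group $g$ with $n_{k,g}>0$; setting the displayed derivative to zero for such a $g$ gives $\delta_1=0$ (the case $k=1$), $\delta_k=\delta_{k-1}$ for $2\leq k\leq K-1$, and hence $\delta_k=0$ for all $k=1,\dots,K-1$ by induction, with the $k=K$ relation automatically consistent. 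Thus $Pen(\tilde\alpha^0,\gamma^0)=0$.

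The argument is short, and there is no serious obstacle; the only mildly delicate points are phrasing the interior-maximum statement of Assumption \ref{as m}$(a)$ over the Minkowski sum $\mathcal A^{+}$ rather than over a single copy of $\mathcal A$ when invoking the chain rule, and the sign bookkeeping in the penalty gradient. If one prefers a proof that avoids differentiation, an equivalent route is to note that $\mathbb E[Q^*|\gamma^0]$ is invariant under zero-sum location shifts of $(\alpha_1,\dots,\alpha_K)$, that such a shift making the penalty zero exists by the solvability of the system associated with $W$ (Lemma \ref{inverse of W}), and that Assumption \ref{as parameter space}$(c)$ keeps the shifted point admissible, so that a positive penalty at $\tilde\alpha^0$ would contradict the maximality in \eqref{population minimization}; but the gradient argument above is cleaner and sidesteps admissibility of the shift.
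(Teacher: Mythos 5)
Your proof is correct, but it takes a genuinely different route from the paper's. The paper perturbs $\tilde\alpha^0$ only along the special directions ``add $\eta$ to every effect in layer $k$ and subtract $\eta$ from every effect in layer $k+1$''; along these directions the leading term $\mathbb E[Q^*\mid\gamma^0]$ is \emph{exactly invariant} (the sums $a_{1,g_{1,i_1}}+\dots+a_{K,g_{K,i_K}}$ do not change), so the first-order conditions involve the penalty alone and stack into a $(K-1)\times(K-1)$ tridiagonal system $U\bar a=0$, which is then shown to be positive definite to conclude $\bar a=0$. You instead use the full coordinate-wise first-order conditions in each $a_{k,g}$, which forces you to first establish that the $\alpha$-gradient of $\mathbb E[Q^*\mid\gamma^0]$ vanishes at $(\beta^0,\tilde\alpha^0)$ via the interior-maximum clause of Assumption \ref{as m}$(a)$ over $\mathcal A^{+}$ (your step (i), which you handle correctly, including the interiority of the sum in the Minkowski set $\mathcal A^{+}$); in exchange, your penalty-gradient system $\delta_1=0$, $\delta_k=\delta_{k-1}$ is triangular and solves by inspection, with no matrix-definiteness argument needed. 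Your sign bookkeeping in the gradient $\lambda(n_{k,g}/n_k)(1\{k\le K-1\}\delta_k-1\{k\ge 2\}\delta_{k-1})$ checks out, including the boundary cases $k=1$ and $k=K$. One practical advantage of the paper's invariance-based route is that it transfers verbatim to the sample-level statement (Lemma \ref{normalization estimator}) because $\mathbb P_N M$ is still exactly invariant under the shift; your route also adapts there, but only by invoking the estimator's own first-order conditions rather than a structural invariance.
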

\begin{proof} 
    For a real number $\eta$, we define a function:
    \begin{align}
        U_1 (\eta) := \ &P M_{i_1 \dots i_K}\left(\beta^0, \left( a^0_{1}(g^0_{1, i_1}) + \eta \right) + \left(a^0_{2}(g^0_{2, i_2}) - \eta \right) + a^0_{3}(g^0_{3, i_3}) + \dots + a^0_{K}(g^0_{K, i_K}) \right) \notag \\
        &- \frac{\lambda}{2} \left( \frac{1}{n_1} \sum_{i_1 = 1}^{n_1} \left( a^0_{1}(g^0_{1, i_1}) + \eta \right) - \frac{1}{n_2} \sum_{i_2 = 1}^{n_2} \left(a^0_{2}(g^0_{2, i_2}) - \eta\right)\right)^2 \notag \\
        &- \frac{\lambda}{2} \left( \frac{1}{n_2} \sum_{i_2 = 1}^{n_2} \left(a^0_{2}(g^0_{2, i_2}) - \eta\right) - \frac{1}{n_3} \sum_{i_3 = 1}^{n_3} a^0_{3} (g^0_{3, i_3})\right)^2. \notag
    \end{align}
    By the definition of $\theta^0$ and Assumption \ref{as parameter space}$(c)$, $U_1(\eta)$ attains its maximum at $\eta = 0$ in some neighborhood of $0$. Hence, the first order condition gives that
    \begin{equation}
        0 = 2 \left( \frac{1}{n_1} \sum_{i_1 = 1}^{n_1} a^0_{1}(g^0_{1, i_1})  - \frac{1}{n_2} \sum_{i_2 = 1}^{n_2} a^0_{2}(g^0_{2, i_2}) \right) 
        - \left( \frac{1}{n_2} \sum_{i_2  = 1}^{n_2} a^0_{2}(g^0_{2, i_2}) - \frac{1}{n_3} \sum_{i_3 = 1}^{n_3} a^0_{3}(g^0_{3, i_3}) \right). \label{FOC 12}
    \end{equation}
    For $k = 2, \dots, K -2$, define a function $U_k (\eta)$ as
    \begin{align}
        &P M_{i_1 \dots i_K} \left(\beta^0, a^0_{1}(g^0_{1, l_1}) + 
        \dots + 
        \left( a^0_{k} (g^0_{k, i_k}) + \eta \right) + \left(a^0_{k + 1}(g^0_{k + 1, i_{k + 1}}) - \eta\right) + \dots + a^0_{K}(g^0_{K, i_K}) \right) \notag \\
        &- \frac{\lambda}{2} \left( \frac{1}{n_{k-1}} \sum_{i_{k - 1} = 1}^{n_{k - 1}} a^0_{k - 1}(g^0_{k - 1, i_{k - 1}})  - \frac{1}{n_k}\sum_{i_k = 1}^{n_k} \left(a^0_{k}(g^0_{k, i_k}) + \eta\right)   \right)^2 \notag \\
        &- \frac{\lambda}{2} \left( \frac{1}{n_k}\sum_{i_{k } = 1}^{n_{k}} \left(a^0_{k}(g^0_{k, i_{k}}) + \eta\right)  - \frac{1}{n_{k+1}} \sum_{i_{k + 1} = 1}^{n_{k + 1}} \left(a^0_{k + 1}(g^0_{k + 1, i_{k + 1}}) - \eta\right)   \right)^2 \notag \\
        &- \frac{\lambda}{2} \left( \frac{1}{n_{k+1}}\sum_{i_{k + 1} = 1}^{n_{k + 1}} \left(a^0_{k + 1}(g^0_{k + 1, i_{k + 1}}) - \eta\right)  - \frac{1}{n_{k + 2}}\sum_{i_{k + 2} = 1}^{n_{k + 2}} \alpha^0_{k + 2}(g^0_{k + 2, i_{k + 2}})    \right)^2. \notag
    \end{align}
    Similarly to $U_{1}(\eta)$, we obtain the following equation for each $k = 2, \dots, K-2$: 
    \begin{align}
       0 = &- \left( \frac{1}{n_{k-1}}\sum_{i_{k - 1} = 1}^{n_{k - 1}} a^0_{k - 1}(g^0_{k - 1, i_{k - 1}})  - \frac{1}{n_{k}} \sum_{i_k = 1}^{_k} a^0_{k}(g^0_{k, i_k})   \right) \notag \\
        &+ 2 \left( \frac{1}{n_k}\sum_{i_{k } = 1}^{n_{k}} a^0_{k}(g^0_{k, i_{k}})  - \frac{1}{n_{k+1}}\sum_{i_{k + 1} = 1}^{n_{k + 1}} a^0_{k + 1}(g^0_{k + 1, i_{k + 1}})    \right) \notag \\
        &- \left( \frac{1}{n_{k+1}} \sum_{i_{k + 1} = 1}^{n_{k + 1}} a^0_{k + 1}(g^0_{k + 1, i_{k + 1}}) - \frac{1}{n_{k+2}} \sum_{i_{k + 2} = 1}^{n_{k + 2}} a^0_{k + 2}(g^0_{k + 2, i_{k + 2}})    \right). \label{FOC kk+1}
    \end{align}
    Likewise, we define $U_{K-1}(\eta)$ as
    \begin{align}
        &P M_{i_1 \dots i_K}\left(\beta^0, a^0_{1}(g^0_{1, i_1}) + \dots + \left(a^0_{K - 1}(g^0_{K - 1, i_{K - 1}}) + \eta\right)  + \left(a^0_{K}(g^0_{K, i_K}) - \eta\right) \right) \notag \\
        &- \frac{\lambda}{2} \left( \frac{1}{n_{K-2}} \sum_{i_{K - 2} = 1}^{n_{K - 2}} a^0_{K - 2}(g^0_{K - 2, i_{K - 2}}) - \frac{1}{n_{K-1}}\sum_{i_{K - 1} = 1}^{n_{K - 1}} (a^0_{K - 1}(g^0_{K - 1, i_{K - 1}}) + \eta)\right)^2 \notag \\ 
        &- \frac{\lambda}{2} \left( \frac{1}{n_{K-1}} \sum_{i_{K - 1} = 1}^{n_{K - 1}} (a^0_{K - 1}(g^0_{K - 1, i_{K - 1}}) + \eta) - \frac{1}{n_K} \sum_{i_K = 1}^{n_K} ( a^0_{K}(g^0_{K, i_K}) - \eta) \right)^2. \notag
    \end{align}
    By a similar reasoning to the above, we obtain the equation:
    \begin{align}
        0 = &- \left( \frac{1}{n_{K-2}}\sum_{i_{K - 2} = 1}^{n_{K - 2}} a^0_{K - 2}(g^0_{K - 2, i_{K - 2}}) - \frac{1}{n_{K-1}}\sum_{i_{K - 1} = 1}^{n_{K - 1}} a^0_{K - 1}(g^0_{K - 1, i_{K - 1}}) \right) \notag \\
        &+2 \left( \frac{1}{n_{K-1}}\sum_{i_{K - 1} = 1}^{n_{K - 1}} a^0_{K - 1}(g^0_{K - 1, i_{K - 1}}) - \frac{1}{n_K} \sum_{i_K = 1}^{n_K} a^0_{K}(g^0_{K, i_K})  \right). \label{FOC K-1K}
    \end{align}
    Stacking the equations in  \eqref{FOC 12}, \eqref{FOC kk+1}, and \eqref{FOC K-1K}, we obtain the system of equations $U \bar a = 0$, where $U$ is a $K-1$ by $K-1$ symmetric matrix in which the diagonal elements are all $2$, $(k, k + 1)$-th element is $-1$ for $k = 1, \dots, K - 2$, and the other $(l, m)$-th elements are zero $(l < m)$, while
    $\bar a$ is a $(K - 1)$-dimensional vector with its $k$-th element being $\frac{1}{n_k} \sum_{i_k = 1}^{n_k} a^0_{k}(g^0_{k, i_k}) - \frac{1}{n_{k + 1}}\sum_{i_{k + 1} = 1}^{n_{k + 1}} a^0_{k + 1}(g^0_{k + 1, i_{k + 1}})$.
    We show that $U$ is positive-definite, and thus nonsingular.
    Let $a^* = (a^*_1, \dots, a^*_{K - 1})'$ be a real vector such that $a^* \neq 0$. By a straightforward calculation, it can be shown that 
    \begin{align}
        {a^*}' U a^* &= 2 {a_1^*}^2 + \dots + 2 {a_{K - 1}^*}^2 - 2(a_{1}^* a_{2}^* + a_{2}^* a_{3}^* + \dots + a_{K-3}^* a_{K-2}^* + a_{K - 2}^* a_{K - 1}^*) \notag \\
        &= {a_1^*}^2 + (a_1^* - a_2^*)^2 + (a_2^* - a_3^*)^2 + \dots + (a_{K-3}^* - a_{K-2}^*)^2 + (a_{K - 2}^* - a_{K - 1}^*)^2 + {a_{K - 1}^*}^2, \notag
    \end{align}
    which is positive as long as $a^* \neq 0$. Hence, $U$ is non-singular and $a^* = 0$ follows.
    This complete the proof.
\end{proof}

\begin{lem}\label{normalization estimator}
    Assume Assumptions \ref{as parameter space} and \ref{as m}. If $\hat \alpha$ lies in an interior of $\mathcal A^{\prod}$, $Pen (\hat \gamma, \hat \alpha) = 0$.
\end{lem}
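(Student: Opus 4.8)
The plan is to transcribe the first-order-condition argument from the proof of Lemma \ref{normalization}, now applied to the sample objective $\mathbb P_N M_{i_1 \dots i_K}$ at the estimator $\hat \Psi$ rather than to the population objective at the truth. The structural fact that makes this work, exactly as in Lemma \ref{normalization}, is that shifting the location of one way while compensating it on the adjacent way leaves the leading term $Q^*$ unchanged, so that only the penalty governs such a perturbation. If $K = 1$ there is no penalty and the claim is vacuous, so assume $K \geq 2$.

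For each $k = 1, \dots, K-1$ I would consider the curve $\Psi_k(\eta)$ obtained from $\hat \Psi$ by replacing $\hat \alpha_k$ with $\hat \alpha_k + \eta \mathbf 1$ and $\hat \alpha_{k+1}$ with $\hat \alpha_{k+1} - \eta \mathbf 1$, while keeping $\hat \beta$, $\hat \gamma$, and the remaining $\hat \alpha_l$ fixed. Since $\hat \alpha$ is interior to $\mathcal A^{\prod}$, there is $\delta_k > 0$ with $\Psi_k(\eta) \in \Xi$ for $|\eta| < \delta_k$. For every index $i_1 \dots i_K$ the total effect $\hat a_{1, \hat g_{1, i_1}} + \dots + \hat a_{K, \hat g_{K, i_K}}$ is invariant along $\Psi_k(\eta)$ because the $+\eta$ on way $k$ cancels the $-\eta$ on way $k+1$; hence $Q^*(\Psi_k(\eta))$ is constant in $\eta$, and $\eta \mapsto Q(\Psi_k(\eta))$ agrees with $-Pen$ evaluated along the curve up to an additive constant.

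Writing $\bar a_j := \frac{1}{n_j} \sum_{i_j = 1}^{n_j} \hat a_{j, \hat g_{j, i_j}} - \frac{1}{n_{j+1}} \sum_{i_{j+1} = 1}^{n_{j+1}} \hat a_{j+1, \hat g_{j+1, i_{j+1}}}$, the perturbation sends $\bar a_{k-1} \mapsto \bar a_{k-1} - \eta$, $\bar a_k \mapsto \bar a_k + 2\eta$, and $\bar a_{k+1} \mapsto \bar a_{k+1} - \eta$, and fixes all other $\bar a_j$ (with the convention that index $0$ and index $K$ terms are absent, so the $k = 1$ and $k = K-1$ cases simply drop a term). Because $\hat \Psi$ maximizes $Q$ over $\Xi$ and the curve stays in $\Xi$, the map $\eta \mapsto Q(\Psi_k(\eta))$ attains a local maximum at $\eta = 0$, so its derivative there vanishes; differentiating the resulting quadratic in $\eta$ yields $-\bar a_{k-1} + 2 \bar a_k - \bar a_{k+1} = 0$. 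Stacking these $K-1$ equations gives precisely the system $U \bar a = 0$, where $U$ is the tridiagonal $(K-1) \times (K-1)$ matrix with $2$ on the diagonal and $-1$ on the off-diagonals, and $\bar a := (\bar a_1, \dots, \bar a_{K-1})^{\intercal}$, both as in the proof of Lemma \ref{normalization}. Since $U$ was shown there to be positive definite, and hence nonsingular, $\bar a = 0$, i.e. $Pen(\hat \gamma, \hat \alpha) = \frac{\lambda}{2} \sum_{j=1}^{K-1} \bar a_j^2 = 0$.

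The only delicate point is the bookkeeping of how the affected summands of $Pen$ respond to the compensated shift, which is elementary; everything else is a direct reuse of the argument for Lemma \ref{normalization}. I therefore do not anticipate a genuine obstacle here beyond verifying that the first-order conditions assemble into the nonsingular system $U \bar a = 0$.
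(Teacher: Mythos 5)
Your proof is correct and is exactly the argument the paper intends: the paper's own proof of Lemma \ref{normalization estimator} simply says to repeat the proof of Lemma \ref{normalization} with $P M_{i_1 \dots i_K}$ replaced by $\mathbb P_N M_{i_1 \dots i_K}$, and your compensated-shift perturbations, the resulting first-order conditions, and the nonsingular tridiagonal system $U \bar a = 0$ are precisely the steps of that argument transcribed to the sample objective at $\hat \Psi$.
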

\begin{proof}
    The result follows by repeating the proof of Lemma \ref{normalization} with $P M_{i_1 \dots i_K}$ replaced by $\mathbb P_N M_{i_1 \dots i_K}$.
\end{proof}

\begin{lem}\label{lem covariance penalty}
    Assume the assumption of Theorem 2, and that $\mathcal D$ in the proof of Theorem 2 does not depend on the tuning parameter $\lambda$. Then the asymptotic distribution of $\sqrt{N} (\hat \theta - \theta^0)$ does not depend on the value of $\lambda$.
\end{lem}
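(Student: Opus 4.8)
The plan is to read the asymptotic covariance off the proof of Theorem \ref{prop asy dist}, namely $\sqrt N(\hat\theta-\theta^0)\to_d N(0,V_\lambda)$ with $V_\lambda=(\mathcal J-\lambda\mathcal J_{Pen})^{-1}\mathcal L\,\mathcal D\,\mathcal L^{\intercal}(\mathcal J-\lambda\mathcal J_{Pen})^{-1}$, and to show that $(\mathcal J-\lambda\mathcal J_{Pen})^{-1}\mathcal L$ does not depend on $\lambda$. Since $\mathcal J$ and $\mathcal L$ are built from $M(z\,|\,\beta,a)$ and the limiting proportions $\pi_{k,g}$ alone, and $\mathcal D$ is $\lambda$-free by hypothesis, this suffices: $V_\lambda$ becomes the same matrix for every $\lambda>0$.

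The key step is to introduce a fixed, invertible linear change of coordinates isolating the directions the penalty acts on. I would set $u_k:=\pi_k^{\intercal}\alpha_k$, $u:=(u_1,\dots,u_K)^{\intercal}$, write $u=(\mathbf 1_K^{\intercal}u)\mathbf 1_K/K+Bu^{\perp}$ with $B$ an orthonormal basis matrix of $\mathbf 1_K^{\perp}$ and $u^{\perp}\in\mathbb R^{K-1}$, and complete the remaining $\alpha$-directions by $\bar\alpha_k$ with $\pi_k^{\intercal}\bar\alpha_k=0$. This gives an invertible $T$ with $T\theta=(\beta,\bar\alpha_1,\dots,\bar\alpha_K,\bar u,u^{\perp})$, $\bar u:=\mathbf 1_K^{\intercal}u$. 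Two structural facts drive the argument: (i) the leading term $Q^{*}$ (equivalently $\mathbb P_N M_{i_1\dots i_K}$) is invariant under the shift $a_{k,\cdot}\mapsto a_{k,\cdot}+\delta_k$ with $\sum_k\delta_k=0$, since $a_{1,g_{1,i_1}}+\dots+a_{K,g_{K,i_K}}$ is unchanged, and in the new coordinates this shift acts precisely as translation of $u^{\perp}$; hence $Q^{*}$ does not depend on $u^{\perp}$; (ii) $Pen(\alpha,\gamma^0)=\tfrac{\lambda}{2}\sum_{k=1}^{K-1}(u_k-u_{k+1})^{2}$ depends on $\alpha$ only through $u^{\perp}$, and equals $\tfrac{\lambda}{2}(u^{\perp})^{\intercal}P_0\,u^{\perp}$ for a fixed positive definite $P_0\in\mathbb R^{(K-1)\times(K-1)}$ (the path-graph Laplacian restricted to $\mathbf 1_K^{\perp}$).

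Conjugating the matrices in the linearization \eqref{foc score} by $T$ and using (i)-(ii) gives the block structure: $T^{-\intercal}\mathcal J T^{-1}=\mathrm{diag}(\mathcal J_{11},0)$, because the Hessian of a $u^{\perp}$-free function has all blocks touching $u^{\perp}$ equal to zero; $T^{-\intercal}\mathcal J_{Pen}T^{-1}=\mathrm{diag}(0,P_0)$; and $T^{-\intercal}\mathcal L=(\mathcal L_1^{\intercal},0)^{\intercal}$ (zero in the $u^{\perp}$-rows), because $\mathcal L$ encodes only the gradient of the leading term, whose $u^{\perp}$-components vanish identically, while $\mathcal S^{Pen}(\theta^0)=0$ by Lemma \ref{normalization} so the penalty contributes nothing to $\mathcal L$. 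Thus $T^{-\intercal}(\mathcal J-\lambda\mathcal J_{Pen})T^{-1}=\mathrm{diag}(\mathcal J_{11},-\lambda P_0)$ is block diagonal with $\lambda$ confined to the nonsingular $u^{\perp}$-block, and $\mathcal J_{11}$ is negative definite — this is exactly the identifiability content of the proof of Theorem \ref{prop asy dist}, since the null space of $\mathcal J$ consists precisely of the $u^{\perp}$-translations. Hence $(\mathcal J-\lambda\mathcal J_{Pen})^{-1}\mathcal L=T^{-1}\,\mathrm{diag}(\mathcal J_{11}^{-1},-\lambda^{-1}P_0^{-1})\,(\mathcal L_1^{\intercal},0)^{\intercal}=T^{-1}((\mathcal J_{11}^{-1}\mathcal L_1)^{\intercal},0)^{\intercal}$: the factor $-\lambda^{-1}$ multiplies only the zero block and drops out, so $(\mathcal J-\lambda\mathcal J_{Pen})^{-1}\mathcal L$, and therefore $V_\lambda$, is free of $\lambda$. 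Equivalently, one may note that the first-order condition separates across the two blocks, so on the event that $\hat\theta$ is interior one has $\hat u^{\perp}=0$ while the other coordinates maximize the leading term alone; thus $\hat\theta$ itself does not depend on $\lambda$ on this event, whose probability tends to one, so a fortiori neither does its limiting law.

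I expect the only real work to be the bookkeeping: verifying that the explicit matrices $\mathcal J$, $\mathcal J_{Pen}$ of \eqref{H pen} and $\mathcal L$ of \eqref{lambda beta}-\eqref{lambda alpha} do acquire the asserted block forms under $T$ (mechanical, given the $\pi_{k,g}$'s and the path-Laplacian structure of \eqref{H pen}), and confirming the positive definiteness of $P_0$, which is the same kind of nonsingularity statement as for the matrix $W$ of \eqref{W} in Lemma \ref{inverse of W}. Once the block structure is in place the cancellation of $\lambda$ is immediate.
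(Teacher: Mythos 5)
Your proposal is correct, but it takes a genuinely different route from the paper. The paper's proof is short and indirect: it differentiates the map $\lambda \mapsto \mathcal L^{\intercal}(\mathcal J - \lambda \mathcal J_{Pen})^{-1}$ and shows the derivative vanishes by observing that $\mathcal S^{Pen}(\theta^0) = \mathcal S^{Pen}(\hat\theta_\lambda) = 0$ (Lemmas \ref{normalization} and \ref{normalization estimator}) forces $\lambda \mathcal J_{Pen,N}\sqrt N(\hat\theta_\lambda - \theta^0) = 0$ with probability approaching one, hence the limit $\lambda\mathcal J_{Pen}(\mathcal J - \lambda\mathcal J_{Pen})^{-1}\mathcal L\mathcal D\mathcal L^{\intercal}(\mathcal J-\lambda\mathcal J_{Pen})^{-1}\lambda\mathcal J_{Pen}$ is the zero matrix, which together with $\mathcal D \succ 0$ gives $\mathcal L^{\intercal}(\mathcal J - \lambda\mathcal J_{Pen})^{-1}\mathcal J_{Pen} = 0$. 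You instead make the underlying geometry explicit by a change of coordinates isolating the directions $u^{\perp}$ on which the penalty acts: the leading term is translation-invariant there (so the corresponding rows/columns of $\mathcal J$ and rows of $T^{-\intercal}\mathcal L$ vanish — the latter because $\sum_g s_{a_{k,g}}(\theta^0) = \mathbb P_N\nabla_a M$ is the same for every $k$ and the $u^\perp$-directions are orthogonal to $\mathbf 1_K$), the penalty Hessian is a positive definite path-Laplacian confined to that block, and the null space of $\mathcal J$ is exactly that block by the identifiability computation in step $(iii)$ of the proof of Theorem \ref{prop asy dist}. The $\lambda$-dependent block of $(\mathcal J-\lambda\mathcal J_{Pen})^{-1}$ then multiplies only zeros. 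Your approach costs some matrix bookkeeping that the paper avoids, but it buys an explanation of \emph{why} the cancellation happens and yields the strictly stronger observation (your closing remark) that on the event of interiority and correct group recovery the estimator $\hat\theta_\lambda$ itself — not merely its limit law — is independent of $\lambda$, since the maximization separates into an unpenalized problem plus the exact constraint $u^{\perp}=0$. Both arguments are valid; no gap.
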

\begin{proof}
    Fix $\lambda_1, \lambda_2 > 0$ with $\lambda_1 \neq \lambda_2$. Let $\hat \theta_{\lambda_j}$ denote the estimator $\hat \theta$ constructed with $\lambda_j$ as tuning parameters for $j = 1, 2$. Then the asymptotic distributions of $\sqrt{N} (\hat \theta_{\lambda_j} - \theta^0)$ is $N (0, (\mathcal J - \lambda_j H_{Pen})^{-1} \mathcal L D \mathcal L^{\intercal} (\mathcal J - \lambda_j \mathcal J_{Pen})^{-1})$ for $j = 1, 2$ by Theorem 2. Because $\mathcal D$ is invariant to the value of tuning parameter, it suffices to show that $\mathcal L^{\intercal} (\mathcal J - \lambda_1 \mathcal J_{Pen})^{-1} =  \mathcal L^{\intercal} (\mathcal J - \lambda_2 \mathcal J_{Pen})^{-1}$ for the desired result. Consider a matrix-valued function $\lambda \mapsto \mathcal L^{\intercal} (\mathcal J - \lambda \mathcal J_{Pen})^{-1}$. This function is differentiable in $\lambda > 0$ and its derivative is $\mathcal L^{\intercal}  (\mathcal J - \lambda \mathcal J_{Pen})^{-1} \mathcal J_{Pen}  (\mathcal J - \lambda \mathcal J_{Pen})^{-1}$. We show that this derivative is zero so that the function is constant.
    For any $\lambda > 0$, it follows from Lemma \ref{normalization} and \ref{normalization estimator} that $\mathcal S^{Pen} (\theta^0) = \mathcal S^{Pen} (\hat \theta_\lambda) = 0$ and, thus, $\lambda \mathcal J_{Pen, N} \sqrt{N} (\hat \theta_{\lambda}  - \theta^0) = 0$ from Taylor's theorem with probability approaching one. Combining this fact with the proof of Theorem 2 and the Slutsky's theorem, we have that $\lambda \mathcal J_{Pen, N} \sqrt{N} (\hat \theta_\lambda - \theta^0) \rightarrow_d N(0, \lambda \mathcal J_{Pen} (\mathcal J - \lambda \mathcal J_{Pen})^{-1} \mathcal L \mathcal D \mathcal L^{\intercal} (\mathcal J - \lambda \mathcal J_{Pen})^{-1} \lambda \mathcal J_{Pen}) = 0$. Because the asymptotic covariance matrix is zero and $\mathcal D$ is positive definite, $ \mathcal L^{\intercal} (\mathcal J - \lambda \mathcal J_{Pen})^{-1} \mathcal J_{Pen} = 0$, which shows that the derivative is zero. This completes the proof.
\end{proof}

\section{Sufficient conditions in the case of GLMs}\label{app:glm}
This section discusses how the assumptions for the theorems in the main text holds in three examples of GLMs (normal, logistic, and Poisson regression). The conditional independence assumption, the rate condition on the sample size, and Assumption \ref{as parameter space} and \ref{as group N} are what we need to impose regardless of the model. Hence, we here examine how Assumptions \ref{as m} and \ref{as m added} are satisfied.

\begin{exm}[Normal regression]  \label{exm glm normal as}
    The derivatives of the objective function up to the second order are given as follows:
    \begin{align}
        \nabla_{\beta_a}M(z_{i_1 \dots i_K} | \beta, a) 
        = \ & y_{i_1 \dots i_K} x^{+}_{i_1 \dots i_K}  - x^+_{i_1 \dots i_K} (x_{i_1 \dots i_K}^{\intercal} \beta + a), \notag \\
        \nabla_{\beta_a \beta_a} M(z_{i_1 \dots i_K} | \beta, a)
        = & - x^+_{i_1 \dots i_K} {x_{i_1 \dots i_K}^+}^{\intercal}, \notag
    \end{align}
    where $x_{i_1 \dots i_K}^+ = (x_{i_1 \dots i_K}^{\intercal}, 1)^{\intercal}$.
    In view of the above display and Example 2 in the main text, the boundedness of $x_{i_1 \dots i_K}$ in conjunction with Assumption \ref{as parameter space} is sufficient for the differentiability in Assumption \ref{as m}$(a)$, Assumption \ref{as m}$(c)$ and $(d)$, and \ref{as m added}$(a)$, $(b)$, $(c)$, $(d)$, $(e)$ and $(g)$. In particular, for Assumption \ref{as m}$(d)$, we can take an envelope function $M_{env} (z)$ such that $|M(z | \beta, a) - M(z|\bar \beta, \bar a)| \leq  M_{env}(z) \| (\beta^{\intercal}, a)^{\intercal} - (\bar \beta^{\intercal}, \bar a)^{\intercal} \|$ for any 
    $(\beta^{\intercal}, a)^{\intercal}$ and $(\bar \beta^{\intercal}, \bar a)^{\intercal}$ in $\mathcal B \times \mathcal A^{+}$. It follows from Lemma 26 of \cite{katoep} that $\sup_{R} \mathcal N(\ep \| M_{env} \|_{R, 2}, \mathcal M, L_2(R))$ is bounded by $\ep^{-(p + 1)}$ up to a universal finite constant, which verifies Assumption \ref{as m}$(d)$. By a similar argument, we can verify Assumption \ref{as m added}$(g)$. 
    The other part of Assumption \ref{as m}$(a)$ and Assumption \ref{as m}$(b)$ are satisfied if the maximum eigenvalue of $\mathbb E[- x_{i_1 \dots i_K}^+ {x_{i_1 \dots i_K}^+}^{\intercal}|\gamma^0] = \mathbb E[- x_{i_1 \dots i_K}^+ {x_{i_1 \dots i_K}^+}^{\intercal}|(g^0_{k, i_k})_{k = 1}^K]$ is bounded by zero uniformly over $i_1 \dots i_K$. Given that the number of points that $(g^0_{k, i_k})_{k = 1}^K$ can take is finite, this uniformity requirement is not so stringent as it may appear.
    For Assumption \ref{as m added}$(f)$, note that 
    \[-\psi^0 D^{\beta_a}_{i_1 \dots i_K} (\beta^0, a^0_{i_1 \dots i_K}) = J^{\beta_a}_{i_1 \dots i_K} (\beta^0, a^0_{i_1 \dots i_K})\] holds with $a^0_{i_1 \dots i_K} := a^0_{1}(g^0_{1, i_1}) + \dots + a^0_{K}(g^0_{K, i_K})$. 
    Hence, we only need to assume the convergence condition for $J^{\beta_a}_{i_1 \dots i_K} (\beta^0, a^0_{i_1 \dots i_K})$ in Assumption \ref{as m added}$(f)$, which is not too restrictive given that each $J^{\beta_a}_{i_1 \dots i_K} (\beta^0, a^0_{i_1 \dots i_K})$ is negative definite with the conditions already stated.
\end{exm}

\begin{exm}[Logistic regression] \label{exm glm logistic as}
The derivatives of the objective function up to the second order are given as follows:
    \begin{align}
        \nabla_{\beta_a} M(z_{i_1 \dots i_K} | \beta, a) 
        &=  y_{i_1 \dots i_K}  x^+_{i_1 \dots i_K}  - x^+_{i_1 \dots i_K} L(x_{i_1 \dots i_K}^{\intercal} \beta + a), \notag \\
        \nabla_{\beta_a \beta_a} M(z_{i_1 \dots i_K} | \beta, a)
        &= - x^+_{i_1 \dots i_K} {x_{i_1 \dots i_K}^+}^{\intercal} L' (x_{i_1 \dots i_K}^{\intercal} \beta + a), \notag
    \end{align}
    where $x^+_{i_1 \dots i_K} = (x_{i_1 \dots i_K}^{\intercal}, 1)^{\intercal}$.
    Similarly to Example \ref{exm glm normal as}, the boundedness of $x_{i_1 \dots i_K}$ in conjunction with Assumption \ref{as parameter space} is sufficient for the differentiability in Assumption \ref{as m}$(a)$, Assumption \ref{as m}$(c)$ and $(d)$, and \ref{as m added}$(a)$, $(b)$, $(c)$, $(d)$, $(e)$ and $(g)$. The other part of Assumption \ref{as m}$(a)$ and Assumption \ref{as m}$(b)$ are satisfied if the maximum eigenvalue of \[\mathbb E[- x^+_{i_1 \dots i_K}  {x_{i_1 \dots i_K}^+}^{\intercal} L' (x_{i_1 \dots i_K}^{\intercal} \beta + a)|\gamma^0] = \mathbb E[- x^+_{i_1 \dots i_K}  {x_{i_1 \dots i_K}^+}^{\intercal} L' (x_{i_1 \dots i_K}^{\intercal} \beta + a)|(g^0_{k, i_k})_{k = 1}^K]\] is bounded by zero uniformly over $i_1 \dots i_K$, and $\beta \in \mathcal B$ and $a \in \mathcal A$. Assumption \ref{as m added}$(f)$ can be verified in a similar manner to Example \ref{exm glm normal as}.
\end{exm}

\begin{exm}[Poisson regression] \label{exm glm poisson as} 
The derivatives of the objective function up to the second order are given as follows:
    \begin{align}
        \nabla_{\beta_a} M(z_{i_1 \dots i_K} | \beta, a)  
        &= y_{i_1 \dots i_K} x^+_{i_1 \dots i_K} - x^+_{i_1 \dots i_K} \exp(x_{i_1 \dots i_K}^{\intercal} \beta + a), \notag \\
        \nabla_{\beta_a \beta_a} M(z_{i_1 \dots i_K} | \beta, a) 
        &= -  x_{i_1 \dots i_K}^+ {x_{i_1 \dots i_K}^+}^{\intercal} \exp(x_{i_1 \dots i_K}^{\intercal} \beta + a), \notag
    \end{align}
    where $x^+_{i_1 \dots i_K} = (x_{i_1 \dots i_K}^{\intercal}, 1)^{\intercal}$.
    Similarly to Example \ref{exm glm normal as}, the boundedness of $x_{i_1 \dots i_K}$ in conjunction with Assumption \ref{as parameter space} is sufficient for the differentiability in Assumption \ref{as m}$(a)$, Assumption \ref{as m}$(c)$ and $(d)$, and \ref{as m added}$(a)$, $(b)$, $(c)$, $(d)$, $(e)$ and $(g)$. The other part of Assumption \ref{as m}$(a)$ and Assumption \ref{as m}$(b)$ are satisfied if the maximum eigenvalue of \[\mathbb E[- x^+_{i_1 \dots i_K}  {x_{i_1 \dots i_K}^+}^{\intercal} \exp(x_{i_1 \dots i_K}^{\intercal} \beta + a)|\gamma^0] = \mathbb E[- x^+_{i_1 \dots i_K}  {x_{i_1 \dots i_K}^+}^{\intercal} \exp(x_{i_1 \dots i_K}^{\intercal} \beta + a)|(g^0_{k, i_k})_{k = 1}^K]\] is bounded by zero uniformly over $i_1 \dots i_K$, and $\beta \in \mathcal B$ and $a \in \mathcal A$. Assumption \ref{as m added}$(f)$ can be verified in a similar manner to Example \ref{exm glm normal as}.
\end{exm}

\section{Optimization algorithm for ordered probit model with crossed effects}\label{app:algorithm}
We here provide details of the optimization algorithm for fitting the ordered probit model with crossed effects, used in Section~5 in the main document.
Instead of the maximization steps in Algorithm~1 in the main text, we use an one-step updating of the Newton-Raphson method. 
Note that the penalized likelihood function is given by 
\begin{align*}
Q(\Psi)=
\frac{1}{nm}\sum_{i=1}^N \log\left\{ \Phi(c_{y_i} - \eta_i) - \Phi(c_{y_i - 1} - \eta_i) \right\} - \frac{\lambda}{2}\left(\frac{1}{n}\sum_{\ell_{ai}=1}^n a(g_{\ell_{ai}}) - \frac{1}{m}\sum_{\ell_{bi}=1}^m b(h_{\ell_{bi}})  \right)^2,
\end{align*}
where $\eta_i = x_i^\top \beta + a (g_{\ell_{ai}}) + b (g_{\ell_{bi}})$, and $g_{\ell_{ai}}\in \{1,\ldots,G_a\}$ and $h_{\ell_{bi}}\in \{1,\ldots,G_b\}$ indicate which values in the supports random effects take.
Then, the first and second order partial derivatives of $Q(\Psi)$ with respect to $\beta$ are given by 
\begin{align*}
\frac{\partial Q(\Psi)}{\partial \beta} =
\sum_{i=1}^N u_i^{(0)}(\Psi)x_i,\ \ \ \ \ 
\frac{\partial^2 Q(\Psi)}{\partial \beta \partial \beta^\top} =
- \sum_{i=1}^N \left\{ u_i^{(0)}(\Psi)^2 + u_i^{(1)}(\Psi) \right\} x_i x_i^\top,
\end{align*}
where
$$
u_i^{(0)}(\Psi) 
= \frac{ \phi(c_{y_i - 1} - \eta_i) - \phi(c_{y_i} - \eta_i) }{ \Phi(c_{y_i} - \eta_i) - \Phi(c_{y_i - 1} - \eta_i) },
$$
and 
$$
u_i^{(1)}(\Psi)
=\frac{ (c_{y_i} - \eta_i)\phi(c_{y_i} - \eta_i) - (c_{y_i - 1} - \eta_i)\phi(c_{y_i - 1} - \eta_i) }{ \Phi(c_{y_i} - \eta_i) - \Phi(c_{y_i - 1} - \eta_i) }.
$$
Partial derivatives with respect to $a(1),\ldots,a(G_a)$ and $b(1),\ldots,b(G_b)$ can be obtained in the same way by using $u_i^{(0)}(\Psi)$ and $u_i^{(1)}(\Psi)$.  
Then, given the current value $\beta^{(s)}$, the new value $\beta^{(s+1)}$ is obtained as 
$$
\beta^{(s+1)}=\beta^{(s)} - \left\{\frac{\partial^2 Q(\Psi)}{\partial \beta \partial \beta^\top}\bigg|_{\Psi=\Psi^{(s)}}\right\}^{-1} \frac{\partial Q(\Psi)}{\partial \beta}\bigg|_{\Psi=\Psi^{(s)}},
$$
and the same for $a_g$ and $b_g$. 

\section{Rate condition}\label{app:rate}
In Theorem 2 of the main text, we assume the rate condition for sample sizes, that is, for any $k = 1, \dots, K$, and $l = 1, \dots, K$,
\begin{equation}
     \frac{\log n_l \log n_k}{\prod_{m=1, m \neq k}^K n_m} \rightarrow 0, \label{rate ours}
\end{equation}
and state that this condition is weaker than the rate assumption in Theorem 1 of \cite{Jiang2025aos}, and is not necessarily stronger than the rate condition in \cite{LyuSissonWelsh2024aos}. The purpose of this subsection is to give a justification for this statement.

We first discuss the comparison with \cite{Jiang2025aos}. Theorem 1 of \cite{Jiang2025aos} considers the case with $K = 2$ and assumes that $n_1, n_2 \rightarrow \infty$ such that
\begin{equation}
    \liminf \left( \frac{n_1}{n_2} \right) > 0, \quad \limsup \left( \frac{n_1}{n_2} \right) < \infty. \label{rate jiang 1}
\end{equation}
To show that the condition \eqref{rate ours} is weaker than the condition \eqref{rate jiang 1}, we assume \eqref{rate jiang 1} and are going to show that \eqref{rate ours} holds. Note that \eqref{rate ours} is shown if we show that all of the following four terms converge to zero
\begin{equation}
    \frac{(\log n_1)^2}{n_2}, \quad \frac{(\log n_2)^2}{n_1}, \quad 
    \frac{(\log n_1) (\log n_2)}{n_2}, \quad
     \frac{(\log n_1) (\log n_2)}{n_1}. \notag
\end{equation}
First, observe that $(\log n_1)^2/n_2 = (n_1 / n_2) (\log n_1 / \sqrt{n_1})^2$. By assumption \eqref{rate jiang 1}, $n_1/n_2$ is bounded, and $(\log n_1 / \sqrt{n_1})^2$ goes to zero as $n_1 \rightarrow \infty.$ Noting that \eqref{rate jiang 1} implies that 
\begin{equation}
    \liminf \left( \frac{n_2}{n_1} \right) > 0, \quad \limsup \left( \frac{n_2}{n_1} \right) < \infty, \notag
\end{equation}
$(\log n_2)^2/n_1 \rightarrow 0$ can be shown in the same manner.
Then, observe that $(\log n_1) (\log n_2)/n_2 = (n_1/n_2)^{1/2} (\log n_1 / \sqrt{n_1}) (\log n_2 / \sqrt{n_2})$. By assumption \eqref{rate jiang 1}, $(n_1/n_2)^{1/2}$ is bounded. Furthermore, $\log n_1 / \sqrt{n_1} \rightarrow 0$ and $\log n_2 / \sqrt{n_2} \rightarrow 0$ as $n_1, n_2 \rightarrow \infty$. This shows $(\log n_1) (\log n_2)/n_2 \rightarrow 0$. $(\log n_1) (\log n_2) / n_1 \rightarrow 0$ can be shown in the same manner.
We have now established \eqref{rate ours}.

For the comparison with \cite{LyuSissonWelsh2024aos}, part 2 of Condition B in the supplementary B of \cite{LyuSissonWelsh2024aos} states that, in our notation, the limit of the ratio $n_1 / n_2$ exists in $[0, \infty]$. We show that neither of this condition or our rate condition is weaker.
If we set $n_2 = \log n_1$, it is obvious to see that the condition in \cite{LyuSissonWelsh2024aos} is met while our condition does not hold. Conversely, set
\begin{equation}
    n_2
    = \begin{cases}
        n_1 & \text{if $n_1$ is odd.}\\
        n_1^2 & \text{if $n_1$ is even.}
    \end{cases} \notag
\end{equation}
Then $n_1/n_2$ does not have a limit as it moves between $1$ and $1/n_1$ so that the condition of \cite{LyuSissonWelsh2024aos} is violated. Meanwhile, it is not hard to see that our condition holds in this case.

\section{Additional Simulation Results}\label{app:addition}

\subsection{Sensitivity and adaptation of discretization}

We investigate the sensitivity of the proposed CDE approach to the choice of discretization levels by comparing estimation accuracy and computational time across several group-number specifications. 
We first focus on the two-way binary models setting and adopt the same simulation design as in Section~4.1. 
Specifically, we consider five candidate discretizations, $(G,H)=(\lfloor \sqrt{m} \rfloor,\lfloor \sqrt{n} \rfloor)$ (CDE1), $(\lfloor 0.7\sqrt{m} \rfloor,\lfloor 0.7\sqrt{n} \rfloor)$ (CDE2), $(\lfloor 1.3\sqrt{m} \rfloor,\lfloor 1.3\sqrt{n} \rfloor)$ (CDE3), $(20,20)$ (CDE4), and $(40,40)$ (CDE5). 
In addition, as an adaptive variant, we implement a data-driven selection that chooses the best candidate among these five specifications using the AIC, denoted by aCDE. 
Table~\ref{tab:sim-bin-sensitivity} reports the mean squared errors (MSE) of the point estimates and the coverage probabilities (CP) of nominal $95\%$ confidence intervals, based on 200 Monte Carlo replications. 

Overall, the estimation accuracy is fairly insensitive to the choice of $(G,H)$. 
In fact, the MSE and CP remain comparable across the five candidate discretizations over a wide range of sample sizes. If anything, excessively fine discretization (e.g., CDE5) can slightly deteriorate estimation accuracy, presumably due to increased variability induced by small sample sizes of each group. 
One possible explanation for this robustness with respect to $(G,H)$ is the role of Algorithm~2. 
Although the proposed CDE method first relies on a discrete approximation of the random effects, the final estimator is obtained after applying the pseudo-probability smoothing step. 
This additional smoothing would effectively mitigate the impact of the initial discretization, which makes the final performance insensitive to the specific choice of $(G,H)$. 
On the other hand, the computational time increases with the group numbers, and the adaptive approach by AIC incurs an additional cost since it requires fitting all candidates (hence the total time is essentially the sum of the five fits). 
Given that the five candidates yield only marginal differences in accuracy, the AIC-based adaptation brings little practical gain while substantially increasing computation. 
These results support our default recommendation, namely setting $(G, H)=(\lfloor{\sqrt{m}}\rfloor, \lfloor{\sqrt{n}}\rfloor)$ as a practically reasonable and robust choice.

We also conducted the same study for the three-way Poisson model using the same scenarios in Section~4.2. 
We consider five candidate discretizations, $(G_a, G_b, G_c)=(\lfloor \sqrt{n_a} \rfloor,\lfloor \sqrt{n_b} \rfloor, \lfloor \sqrt{n_c} \rfloor)$ (CDE1), $(\lfloor 0.7\sqrt{n_a} \rfloor,\lfloor 0.7\sqrt{n_b} \rfloor, \lfloor 0.7\sqrt{n_c} \rfloor)$ (CDE2), $(\lfloor 1.3\sqrt{n_a} \rfloor,\lfloor 1.3\sqrt{n_b} \rfloor, \lfloor 1.3\sqrt{n_c} \rfloor)$ (CDE3), $(20,20,20)$ (CDE4), and $(40,40,40)$ (CDE5), and the adaptive method by AIC (aCDE). 
The results based on 200 Monte Carlo replications are given in Table~\ref{tab:sim-Po-sensitivity}.
Similarly to the binary case, the results are generally insensitive to the choice of $(G_a, G_b, G_c)$, suggesting the default recommendation would be reasonable.

\begin{table}[htbp]
\centering
{\small
\begin{tabular}{cccccccccccccccc}
\hline
Scenario & &{\footnotesize $N(\times 10^3)$} &  & {\footnotesize CDE1} & {\footnotesize CDE2} & {\footnotesize CDE3} & {\footnotesize CDE4} & {\footnotesize CDE5} & {\footnotesize aCDE}  \\
\hline
 &  & 5 &  & 1.75 & 1.91 & 1.70 & 1.69 & 1.81 & 1.90 \\
 &  & 10 &  & 0.64 & 0.66 & 0.64 & 0.65 & 0.79 & 0.66 \\
1 & MSE & 20 &  & 0.33 & 0.37 & 0.33 & 0.33 & 0.36 & 0.36 \\
 &  & 40 &  & 0.19 & 0.19 & 0.20 & 0.19 & 0.22 & 0.19 \\
 &  & 80 &  & 0.09 & 0.09 & 0.10 & 0.09 & 0.11 & 0.10 \\
 \hline
 &  & 5 &  & 93.0 & 91.8 & 92.4 & 91.4 & 91.0 & 91.8 \\
 &  & 10 &  & 95.2 & 95.8 & 95.2 & 95.4 & 93.2 & 95.8 \\
1 & CP & 20 &  & 94.4 & 94.4 & 95.0 & 94.8 & 94.4 & 94.2 \\
 &  & 40 &  & 92.6 & 92.8 & 92.0 & 92.6 & 90.6 & 92.8 \\
 &  & 80 &  & 93.2 & 93.6 & 92.2 & 93.4 & 90.6 & 93.6 \\
\hline
 &  & 5 &  & 1.56 & 1.68 & 1.56 & 1.63 & 1.86 & 1.67 \\
 &  & 10 &  & 0.72 & 0.76 & 0.74 & 0.75 & 0.90 & 0.76 \\
2 & MSE & 20 &  & 0.33 & 0.35 & 0.35 & 0.34 & 0.43 & 0.35 \\
 &  & 40 &  & 0.19 & 0.18 & 0.20 & 0.19 & 0.24 & 0.19 \\
 &  & 80 &  & 0.10 & 0.09 & 0.11 & 0.09 & 0.12 & 0.09 \\
 \hline
 &  & 5 &  & 93.0 & 93.6 & 92.6 & 92.2 & 90.8 & 93.4 \\
 &  & 10 &  & 94.8 & 95.2 & 94.2 & 94.0 & 92.0 & 95.0 \\
2 & CP & 20 &  & 96.2 & 95.6 & 94.8 & 95.6 & 93.0 & 95.4 \\
 &  & 40 &  & 94.0 & 94.8 & 93.2 & 94.0 & 91.0 & 94.4 \\
 &  & 80 &  & 94.4 & 94.6 & 92.8 & 94.6 & 90.2 & 95.0 \\
\hline
\end{tabular}
}
\caption{Mean squared errors (MSE) of the point estimates and coverage probability (CP) of $95\%$ confidence intervals for $\beta$ under two-way logistic CDE models with $(G,H)=(\lfloor{\sqrt{m}}\rfloor, \lfloor{\sqrt{n}})\rfloor$ (CDE1), $(\lfloor{0.7\sqrt{m}}\rfloor, \lfloor{0.7\sqrt{n}}\rfloor)$ (CDE2), $(\lfloor{1.3\sqrt{m}}\rfloor, \lfloor{1.3\sqrt{n}}\rfloor)$ (CDE3), $(20, 20)$ (CDE4), $(40, 40)$ (CDE5) and optimal choice selected by AIC among the five candidates (aCDE). 
}
\label{tab:sim-bin-sensitivity}
\end{table}

\begin{table}[htbp]
\centering
{\small
\begin{tabular}{cccccccccccccccc}
\hline
Scenario & &{\footnotesize $N(\times 10^3)$} &  & {\footnotesize CDE1} & {\footnotesize CDE2} & {\footnotesize CDE3} & {\footnotesize CDE4} & {\footnotesize CDE5} & {\footnotesize aCDE}  \\
\hline
 &  & 5 &  & 0.131 & 0.133 & 0.130 & 0.130 & 0.138 & 0.131 \\
 &  & 10 &  & 0.062 & 0.063 & 0.062 & 0.062 & 0.063 & 0.062 \\
1 & MSE & 20 &  & 0.036 & 0.036 & 0.035 & 0.035 & 0.035 & 0.035 \\
 &  & 40 &  & 0.017 & 0.017 & 0.017 & 0.017 & 0.016 & 0.017 \\
 &  & 80 &  & 0.009 & 0.009 & 0.009 & 0.009 & 0.009 & 0.009 \\
 \hline
 &  & 5 &  & 92.0 & 92.2 & 93.0 & 93.4 & 92.6 & 92.2 \\
 &  & 10 &  & 93.4 & 93.2 & 94.0 & 93.8 & 93.0 & 93.6 \\
1 & CP & 20 &  & 91.6 & 90.8 & 91.6 & 91.2 & 91.4 & 91.2 \\
 &  & 40 &  & 92.4 & 92.8 & 93.2 & 92.8 & 92.6 & 92.8 \\
 &  & 80 &  & 93.2 & 92.2 & 92.8 & 93.2 & 93.4 & 92.8 \\
 \hline
 &  & 5 &  & 0.141 & 0.141 & 0.139 & 0.139 & 0.141 & 0.142 \\
 &  & 10 &  & 0.077 & 0.078 & 0.076 & 0.077 & 0.079 & 0.078 \\
2 & MSE & 20 &  & 0.032 & 0.031 & 0.031 & 0.031 & 0.030 & 0.032 \\
 &  & 40 &  & 0.019 & 0.019 & 0.019 & 0.019 & 0.018 & 0.019 \\
 &  & 80 &  & 0.009 & 0.009 & 0.009 & 0.009 & 0.009 & 0.009 \\
 \hline
 &  & 5 &  & 91.6 & 92.6 & 92.6 & 92.2 & 92.0 & 91.8 \\
 &  & 10 &  & 91.8 & 91.0 & 92.0 & 91.8 & 91.4 & 91.4 \\
2 & CP & 20 &  & 94.6 & 94.2 & 94.2 & 94.2 & 94.0 & 94.6 \\
 &  & 40 &  & 90.0 & 91.0 & 90.8 & 90.8 & 91.8 & 90.2 \\
 &  & 80 &  & 92.6 & 93.2 & 92.4 & 92.6 & 92.4 & 92.4 \\
\hline
\end{tabular}
}
\caption{Mean squared errors (MSE) of the point estimates and coverage probability (CP) of $95\%$ confidence intervals for $\beta$ under three-way Poisson CDE models with $(G_a,G_b,G_c)=(\lfloor{\sqrt{n_a}}\rfloor, \lfloor{\sqrt{n_b}}\rfloor, \lfloor{\sqrt{n_c}}\rfloor)$ (CDE1), $(\lfloor{0.7\sqrt{n_a}}\rfloor, \lfloor{0.7\sqrt{n_b}}\rfloor, \lfloor{0.7\sqrt{n_c}}\rfloor)$ (CDE2), $(\lfloor{1.3\sqrt{n_a}}\rfloor, \lfloor{1.3\sqrt{n_b}}\rfloor, \lfloor{1.3\sqrt{n_c}}\rfloor)$ (CDE3), $(20, 20, 20)$ (CDE4), $(40, 40, 40)$ (CDE5) and optimal choice selected by AIC among the five candidates (aCDE). 
}
\label{tab:sim-Po-sensitivity}
\end{table}

\subsection{Comparison of inference procedures}



We here compare inference procedures based on the standard error estimation by ``Plug-in'', ``Sandwich'' and ``Bootstrap'' methods. 
The Sandwich method corresponds to computing the asymptotic covariance matrix provided in the proof of Theorem~2.

To compare the three methods, we first consider two-way logistic models considered in Section~4.1.
Using the same scenarios, we applied the three inference methods, where the number of bootstrap replications is set to 100. 
Based on 200 Monte Carlo replications, we compute the coverage probability (CP) and average interval length (AL) of the $95\%$ confidence intervals of regression coefficients, under $N\in\{5000, 10000, 20000\}$. 
The results are reported in Table~\ref{tab:sim-bin-inference}.

Based on the above considerations, we recommend the following strategy.
For large $N$ and computationally constrained settings, the plug-in variance estimator
provides fast and reliable inference.
For moderate sample sizes, where accurate finite-sample inference is desired, the
weighted bootstrap offers improved coverage at the cost of increased computation.

\begin{table}[htbp]
\centering
{\small
\begin{tabular}{cccccccccccccccc}
\hline
 & {\footnotesize $N(\times 10^3)$} &  & Plug-in & Sandwich & Bootstrap & & Plug-in & Sandwich & Bootstrap \\
\hline
 & 5 &  & 92.6 & 92.7 & 93.8 &  & 93.4 & 92.3 & 95.2 \\
CP (\%) & 10 &  & 94.4 & 94.8 & 95.2 &  & 92.5 & 92.3 & 93.7 \\
 & 20 &  & 94.1 & 94.1 & 95.1 &  & 92.8 & 93.0 & 93.9 \\
\hline
 & 5 &  & 0.144 & 0.143 & 0.153 &  & 0.146 & 0.144 & 0.155 \\
AL & 10 &  & 0.102 & 0.102 & 0.107 &  & 0.103 & 0.103 & 0.108 \\
 & 20 &  & 0.072 & 0.072 & 0.075 &  & 0.073 & 0.073 & 0.076 \\
\hline
\end{tabular}
}
\caption{
Coverage probability (CP) and average interval length (AL) of $95\%$ confidence intervals of regression coefficients under two-way logistic models, based on 200 Monte Carlo replications.
}
\label{tab:sim-bin-inference}
\end{table}

We also conducted the same assessment under three-way Poisson model considered in Section~4.2.
Using the same scenarios, we applied the three inference methods, where the number of bootstrap replications is set to 100. 
Based on 200 Monte Carlo replications, we compute the coverage probability (CP) and average interval length (AL) of the $95\%$ confidence intervals of regression coefficients, under $N\in\{2500, 5000, 10000\}$. 
The results are reported in Table~\ref{tab:sim-Po-inference}.
The conclusion is mostly the same as the that in binary models.

\begin{table}[htbp]
\centering
{\small
\begin{tabular}{cccccccccccccccc}
\hline
 & {\footnotesize $N(\times 10^3)$} &  & Plug-in & Sandwich & Bootstrap & & Plug-in & Sandwich & Bootstrap \\
\hline
 & 2.5 &  & 93.3 & 91.8 & 92.8 &  & 92.4 & 90.7 & 91.7 \\
CP (\%) & 5 &  & 93.6 & 93.1 & 93.4 &  & 91.7 & 90.7 & 91.6 \\
 & 10 &  & 93.8 & 93.4 & 93.5 &  & 92.7 & 92.6 & 92.6 \\
\hline
 & 2.5 &  & 0.043 & 0.040 & 0.043 &  & 0.043 & 0.041 & 0.043 \\
AL & 5 &  & 0.030 & 0.029 & 0.030 &  & 0.030 & 0.029 & 0.030 \\
 & 10 &  & 0.021 & 0.021 & 0.021 &  & 0.021 & 0.021 & 0.021 \\
\hline
\end{tabular}
}
\caption{
Coverage probability (CP) and average interval length (AL) of $95\%$ confidence intervals of regression coefficients under three-way Poisson models, based on 200 Monte Carlo replications.
}
\label{tab:sim-Po-inference}
\end{table}

\section{Additional Application Results}\label{app:three-way}

We here provide application results of the MovieLens data through three-way effects (classified by user, movie and zip code). 
In Figure~\ref{fig:app-RE-3way}, we provide the estimated function of age and scatter plots of estimates of regression coefficients for dummy variables.
The results under three-way effects are not much different from those under two-way effects, suggesting that introducing the additional random effect mainly refines the decomposition of the latent effects without substantially altering the inference for the global parameters.
In Figure~\ref{fig:app-RE-3way}, we present the estimated random effects of three classifications.
It shows that the distributional form of the user effect is quite different from that under two-way effects.
One possible explanation is that the classifications of user and zip code are closely related: zip codes provide a coarser grouping of users, so that the zip classification partially aggregates the user classification.
This nested or overlapping structure may lead to the observed difference in the distribution of the user effects.

\begin{figure}[htbp!]
\centering
\includegraphics[width=\linewidth]{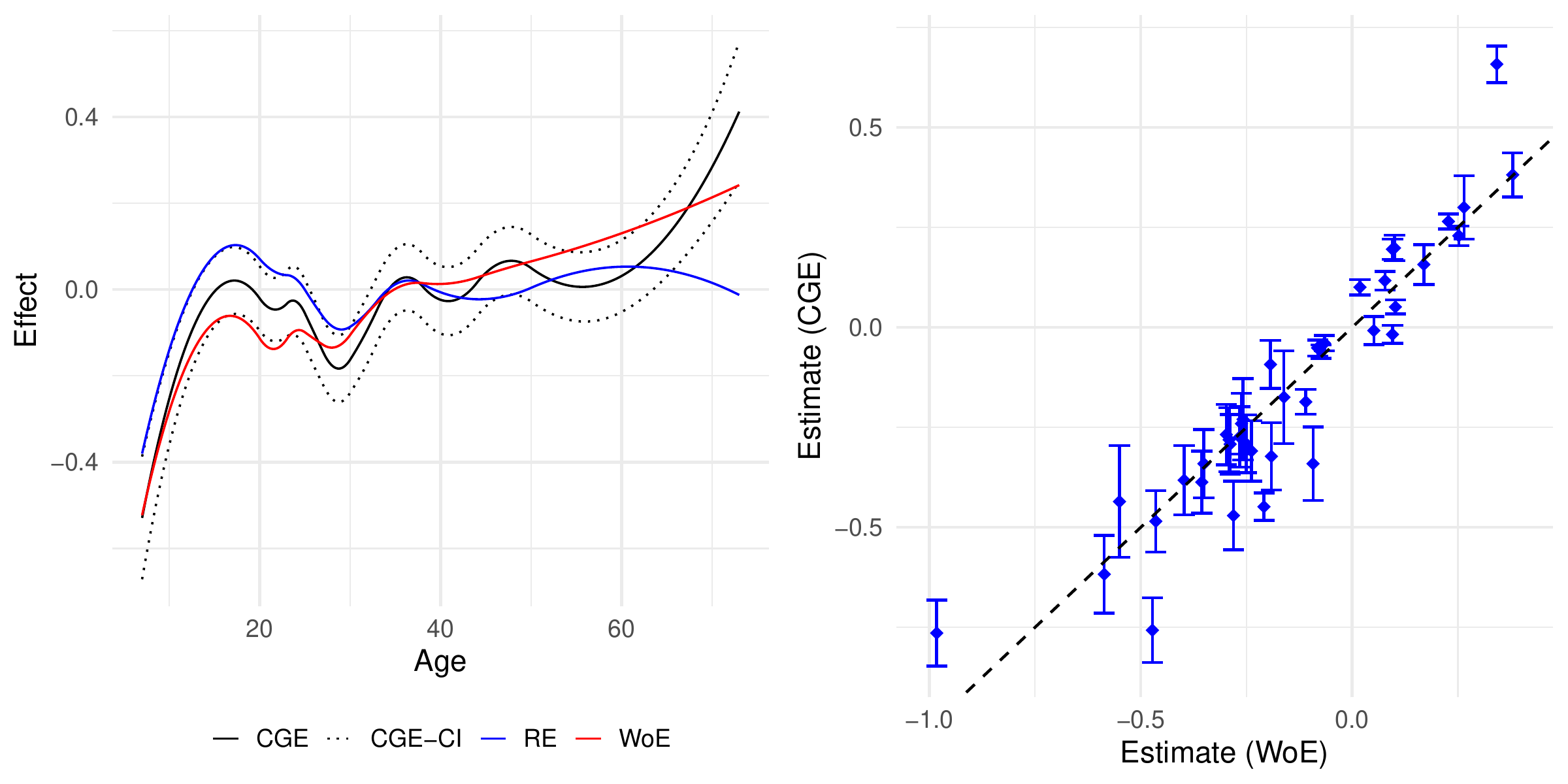}
\caption{Estimated regression function of age (left) and scatter plots of estimates of regression coefficients for dummy variables with $95\%$ confidence intervals for CDE (right) under three-way data.   } 
\label{fig:app-reg-3way}
\end{figure}

\begin{figure}[htbp!]
\centering
\includegraphics[width=\linewidth]{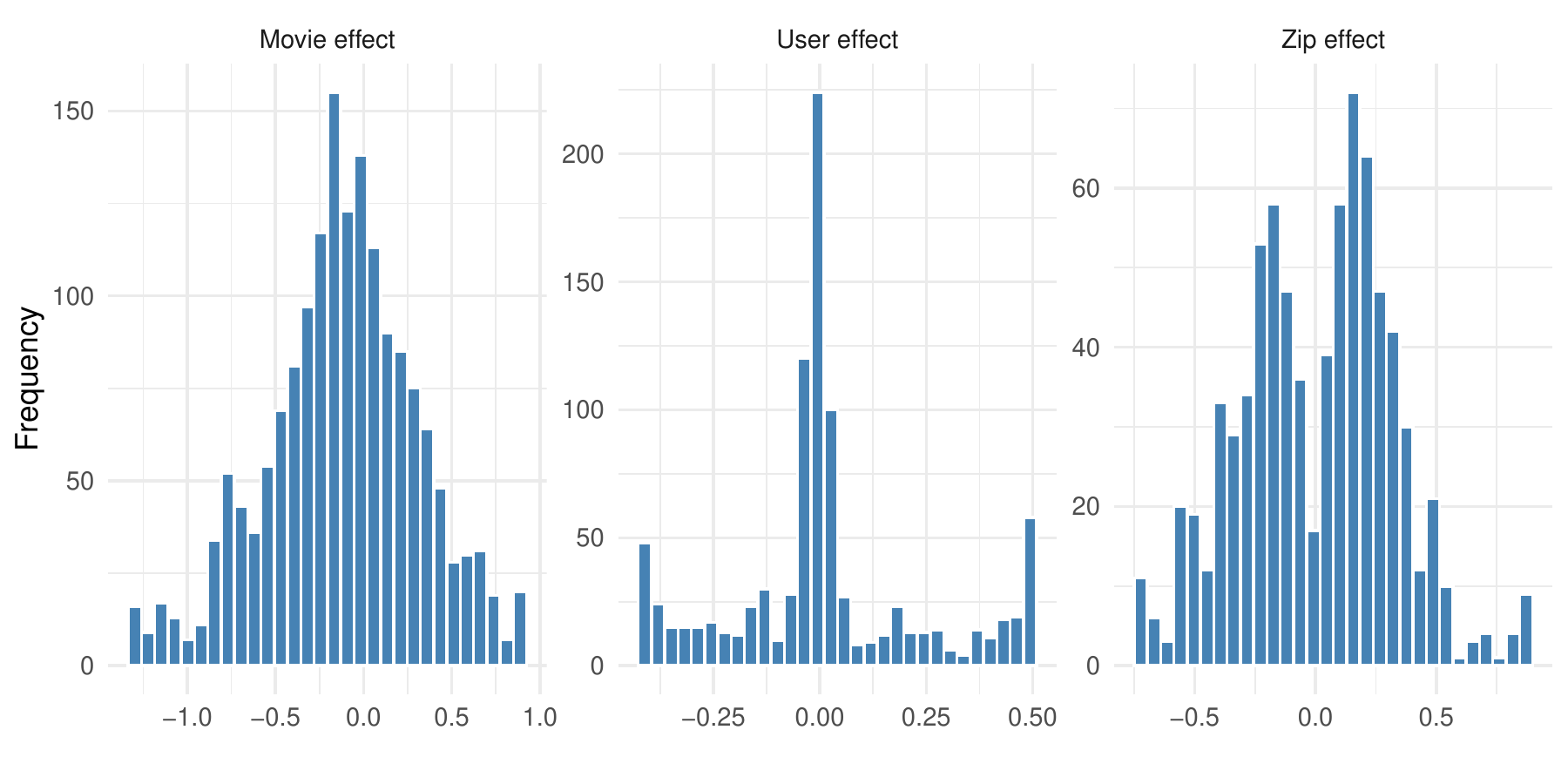}
\caption{Histograms of estimated user, movie and zip effects. } 
\label{fig:app-RE-3way}
\end{figure}

\end{document}